\newcommand*\ifpresent[3]{\tl_if_empty:fTF{#1}{#3}{#2}}
\newcommand*\AXM[1]{\AxiomC{$#1$}}
\newcommand*\UIM[1]{\UnaryInfC{$#1$}}
\newcommand*\BIM[1]{\BinaryInfC{$#1$}}
\newcommand*\RLM[1]{\RightLabel{$\scriptstyle#1$}}
\newcommand*\DP\DisplayProof
\newcommand*\Def{\mathrel{\overset{\Delta}{=}}}
\newcommand*\GramDef{\;\;\mathrel{::=}\;\;}
\newcommand*\Entails{\mathrel{\vdash}}
\newcommand*\BarSep{\mathrel{|}}
\newcommand*\FV[1]{\text{FV}\left(#1\right)}
\newcommand*\SetSuch[2]{{\left\{#1\ifpresent{#2}{\!\;\middle|\;#2}{}\right\}}}
\newcommand*\Subst[2]{\left\{#2/#1\right\}}
\newcommand*\FormalSubst[2]{\left[\ifpresent{#1}{#1\mapsto}{}#2\right]}
\newcommand*\FormalShift[1]{\raisebox{.15em}{$\uparrow$}\ifpresent{#1}{\raisebox{-.2em}{$\mkern-3mu\scriptstyle#1$}}{}\mkern2mu}
\newcommand*\ValuatA{v}
\newcommand*\FTypeA{T}
\newcommand*\FTypeB{U}
\newcommand*\FTypeC{V}
\newcommand*\FRed{\mathrel{\succ}}
\newcommand*\LogVarA{\epsilon}
\newcommand*\LogRelVarA{\eta}
\newcommand*\LogNatA{m}
\newcommand*\LogNatVarA{i}
\newcommand*\LogNatZ{0}
\newcommand*\LogNatS[1]{S\,#1}
\newcommand*\LogTermA{M}
\newcommand*\LogTermB{N}
\newcommand*\LogTermC{P}
\newcommand*\LogTermVarA{t}
\newcommand*\LogTermVarB{u}
\newcommand*\LogTermVarC{v}
\newcommand*\LogTermVar[1]{\underline{#1}}
\newcommand*\LogTermLam[1]{\lambda.#1}
\newcommand*\LogTermApp[2]{{#1}\,{#2}}
\newcommand*\LogTermSubst[2]{#1\FormalSubst{}{#2}}
\newcommand*\LogTermListA{\Pi}
\newcommand*\LogTermListVarA{\pi}
\newcommand*\LogTermListNil{\left\langle\right\rangle}
\newcommand*\LogTermListCons[2]{\left\langle#1,#2\right\rangle}
\newcommand*\LogTermListAbbrv[1]{\left\langle#1\right\rangle}
\newcommand*\LogPredVarA{X}
\newcommand*\LogPredVarForm[1]{\overline{#1}}
\newcommand*\LogBoolA{\Phi}
\newcommand*\LogBoolVarA{b}
\newcommand*\LogBoolT{t\mkern-2mut}
\newcommand*\LogBoolF{f\mkern-6muf}
\newcommand*\LogBoolIn[2]{#1\in#2}
\newcommand*\LogNorm[1]{#1\mkern-5mu\downarrow}
\newcommand*\LogNNorm[2]{#1\backslash\mkern-14mu\downarrow^{#2}}
\newcommand*\LogNormForm{\Downarrow}
\newcommand*\LogFormA{A}
\newcommand*\LogFormB{B}
\newcommand*\LogFormC{C}
\newcommand*\LogFormD{D}
\newcommand*\LogRedCand[1]{\mathcal{R}ed\mathcal{C}and\left(#1\right)}
\newcommand*\LogRC[1]{RC_{#1}}
\newcommand*\LT{\Lambda T}
\newcommand*\LTbbc{\Lambda T_{bbc}}
\newcommand*\LTRed{\leadsto}
\newcommand*\LTTypeA{{\bm{\sigma}}}
\newcommand*\LTTypeB{{\bm{\tau}}}
\newcommand*\LTTypeNat{{\bm{\iota}}}
\newcommand*\LTTypeLam{{\bm{\lambda}}}
\newcommand*\LTTypeLamList{{\LTTypeLam^*}}
\newcommand*\LTTypeFinFun[1]{{#1}^\dagger}
\newcommand*\LTCst{\mathcal{C}st}
\newcommand*\LTConstA{\mathtt{c}}
\newcommand*\LTVarA{x}
\newcommand*\LTVarB{y}
\newcommand*\LTVarC{z}
\newcommand*\LTVarD{u}
\newcommand*\LTTermA{\mathtt{M}}
\newcommand*\LTTermB{\mathtt{N}}
\newcommand*\LTTermC{\mathtt{P}}
\newcommand*\LTTermD{\mathtt{Q}}
\newcommand*\LTTermE{\mathtt{R}}
\newcommand*\LTTermF{\mathtt{S}}
\newcommand*\LTValA{\mathtt{U}}
\newcommand*\LTValB{\mathtt{V}}
\newcommand*\LTEnvA[1]{E\left[#1\right]}
\newcommand*\LTProj{\mathtt{p}}
\newcommand*\LTPair[2]{\left\langle#1,#2\right\rangle}
\newcommand*\LTNatZ{\mathtt{z}}
\newcommand*\LTNatS{\mathtt{s}}
\newcommand*\LTNatIt{\mathtt{it}_\LTTypeNat}
\newcommand*\LTLamVar{\mathtt{var}}
\newcommand*\LTLamAbs{\mathtt{abs}}
\newcommand*\LTLamApp{\mathtt{app}}
\newcommand*\LTLamListApp{\mathtt{app}^*}
\newcommand*\LTLamIt{\mathtt{it}_\LTTypeLam}
\newcommand*\LTLamSubst{\mathtt{subst}}
\newcommand*\LTLamListShift{\mathtt{shift}^*}
\newcommand*\LTLamListNil{\mathtt{nil}}
\newcommand*\LTLamListCons{\mathtt{cons}}
\newcommand*\LTLamListIt{\mathtt{it}_\LTTypeLamList}
\newcommand*\LTcan{\mathtt{can}}
\newcommand*\LTFinFunEmpty{\left\{\right\}}
\newcommand*\LTFinFunExtend[3]{#1\cup\left\{#2\mapsto#3\right\}}
\newcommand*\LTFinFunComplete[2]{#1\mathbin{|}#2}
\newcommand*\LTBBC{\mathtt{bbc}}
\newcommand*\LTdne{\mathtt{dne}}
\newcommand*\LTexf{\mathtt{exf}}
\newcommand*\LTcomp{\mathtt{comp}}
\newcommand*\LTrepl{\mathtt{repl}}
\newcommand*\LTelim{\mathtt{elim}}
\newcommand*\LTnormrc{\mathtt{normrc}}
\newcommand*\LTisrc{\mathtt{isrc}}
\newcommand*\LTadeq{\mathtt{adeq}}
\newcommand*\LTnorm{\mathtt{norm}}
\newcommand*\LTred{\mathtt{red}}
\newcommand*\CPOA{D}
\newcommand*\CPOB{E}
\newcommand*\CPOelA{\varphi}
\newcommand*\CPOelB{\psi}
\newcommand*\CPOelC{\theta}
\newcommand*\CPOelD{\xi}
\newcommand*\CPOelE{\zeta}
\newcommand*\CPOdir{\Delta}
\newcommand*\CPObot{\bot}
\newcommand*\CPOsup{\sqcup}
\newcommand*\CPOinterp[1]{\left\llbracket#1\right\rrbracket}
\newcommand*\CPOproj{\pi}
\newcommand*\LTinterp[1]{{#1}^\diamond}
\newcommand*\RealVal[1]{\left|#1\right|}
\newcommand*\RealBot{\bot\mkern-11mu\bot}
\newcommand*\RealNatA{\mathfrak{n}}
\newcommand*\RealTermA{\mathfrak{M}}
\newcommand*\RealTermB{\mathfrak{N}}
\newcommand*\RealTermC{\mathfrak{P}}
\newcommand*\RealTermListA{\mathfrak{p}}
\newcommand*\RealPredA{\mathfrak{X}}
\newcommand*\RealBoolA{\mathfrak{b}}
\newcommand*\RealBoolT{\mathfrak{t\mkern-3mut}}
\newcommand*\RealBoolF{\mathfrak{f\mkern-3muf}}
\begin{document}
\title{An interpretation of system F through bar recursion}
\titlecomment{This article is an extended version of~\cite{BlotF}}
\author{Valentin Blot}
\address{LRI, Universit\'e Paris Sud, CNRS, Universit\'e Paris-Saclay, France}
\keywords{polymorphism, system F, bar recursion, BBC functional, realizability}
\thanks{This research was supported by the Labex DigiCosme (project ANR11LABEX0045DIGICOSME) operated by ANR as part of the program ``Investissements d'Avenir'' Idex ParisSaclay (ANR11IDEX000302).}
\begin{abstract}
There are two possible computational interpretations of second-order arithmetic: Girard's system F or Spector's bar recursion and its variants. While the logic is the same, the programs obtained from these two interpretations have a fundamentally different computational behavior and their relationship is not well understood. We make a step towards a comparison by defining the first translation of system F into a simply-typed total language with a variant of bar recursion. This translation relies on a realizability interpretation of second-order arithmetic. Due to G\"odel's incompleteness theorem there is no proof of termination of system F within second-order arithmetic. However, for each individual term of system F there is a proof in second-order arithmetic that it terminates, with its realizability interpretation providing a bound on the number of reduction steps to reach a normal form. Using this bound, we compute the normal form through primitive recursion. Moreover, since the normalization proof of system F proceeds by induction on typing derivations, the translation is compositional. The flexibility of our method opens the possibility of getting a more direct translation that will provide an alternative approach to the study of polymorphism, namely through bar recursion.
\end{abstract}
\maketitle
\section{Introduction}
Second-order $\lambda$-calculus~\cite{GirardF,ReynoldsPolymorphism} is a poweful type system in which terms such as $\lambda x.x\,x$ can be typed. The language obtained is still strongly normalizing, but so far all proofs of this fact rely on the notion of reducibility candidates (RCs): sets of $\lambda$-terms satisfying some axioms. In these proofs, every type has an associated RC and every typed term belongs to the RC associated to its type. Normalization is then a consequence of the axioms of RCs. An important aspect of these proofs is that they are impredicative: the RC associated to a universally quantified type is the intersection over all RCs, which includes the intersection itself. Our translation reduces the termination of system F to the termination of a variant of bar recursion that is proved with an instance of Zorn's lemma, thus avoiding the direct use of impredicative RCs.\par
In 1962, Spector used bar recursion~\cite{Spector} to interpret the axiom scheme of comprehension and therefore extend G\"odel's Dialectica interpretation of arithmetic into an interpretation of analysis. Variants of bar recursion have then been used in Kreisel's modified realizability to interpret the axioms of countable and dependent choice in a classical setting. Among these variants, modified bar recursion~\cite{BergerOlivaChoice} relies on the continuity of one of its arguments to ensure termination, rather than on the explicit termination condition of Spector's original version. Krivine used this variant in untyped realizability for set theory~\cite{KrivineBarRec}. We use here the BBC functional~\cite{BerardiBezemCoquand}, another variant of bar recursion that builds the elements of the choice sequence when they are needed, rather than sequentially. Our proof of correctness of this operator is adapted from the semantic poof of~\cite{BergerBBCDomains} that relies on Zorn's lemma. We extend the usual realizability interpretation of first-order arithmetic into an interpretation of its second-order counterpart by interpreting the axiom scheme of comprehension with the BBC functional.\par
For any single term of system F there exists a proof in second-order arithmetic that it terminates. This mapping from terms of system F to proofs of second-order arithmetic is closely related to Reynolds' abstraction theorem~\cite{ReynoldsParametricity} which, as explained in~\cite{WadlerIsomorphism}, relies on an embedding of system F into second-order arithmetic. We use our interpretation of second-order arithmetic to extract the normal form of the system F term from its termination proof. Our technique is similar to Berger's work in the simply-typed case~\cite{BergerNbe} and is closely related to normalization by evaluation, extended to system F in~\cite{AHSNbe,AbelNbeF}. We define a multi-sorted first-order logic with a sort for $\lambda$-terms with de Bruijn indices to avoid an encoding of $\lambda$-terms as natural numbers. Our logic is also equipped with a sort for sets of $\lambda$-terms so we can formalize the notion of reducibility candidates. Since these sets are first-order elements of the logic, we cannot instantiate a set variable with an arbitrary formula as we would in second-order logic. Nevertheless we get back this possibility through our interpretation of the axiom scheme of comprehension with the BBC functional.\par
In a second step we fix the target programming language of the translation. This language, that we call system $\LTbbc$, is purely functional with a type of $\lambda$-terms, primitive recursion, and the BBC functional. System $\LTbbc$ is in particular simply-typed and total. We also describe the sound and computationally adequate semantics of this language in the category of complete partial orders.\par
The last step is the definition of a realizability semantics for our logic. To each formula we associate a type of system $\LTbbc$ and a set of realizers in the complete partial order interpreting that type. Defining realizers as elements of the model rather than syntactic programs simplifies the correctness proof for the BBC functional since we have non-computable functions on discrete types in the model. We interpret classical logic through an encoding of existential quantifications in terms of the universal ones and negation. The BBC functional interprets a variant of the axiom of countable choice which, combined with our interpretation of classical logic, provides a realizer of the axiom scheme of comprehension. Using this realizer, we interpret the instantiation of set variables with arbitrary formulas and therefore full second-order arithmetic. Finally, each program of system F is translated into a program of system $\LTbbc$ that computes the normal form of the initial term of system F through the realizability interpretation of its proof of termination for weak head reduction.
\section{Normalization of system F}
\label{NormProof}
We give here the proof of normalization of system F that we will interpret through realizability in section~\ref{Realizability}. In particular, we introduce a formal syntax suited to the formalization of the proof. Our notion of reducibility candidates is a simplified version of Tait's saturated sets~\cite{TaitRealizability} that also appears in~\cite{KrivineBook} and is sufficient for weak head reduction. We could use Girard's reducibility candidates~\cite{GirardPhD} but the corresponding normalization proof performs induction on the length of reduction of subterms in the arrow case of lemma~\ref{RCisRC} and the interpretation would be much more complicated.
\subsection{Terms and substitutions}
\label{Subst}
We describe here the formal syntax for $\lambda$-terms that we use throughout the paper. In particular this syntax will be part of our logic in section~\ref{Logic} $\alpha$-conversion can complicate the use of binders in logic and we avoid this issue by using de Bruijn indices so we have a canonical representation of $\lambda$-terms up to $\alpha$-equivalence. The formal syntax for the set $\Lambda$ of all $\lambda$-terms is given by the following grammar:
\begin{equation*}
\LogTermA,\LogTermB\GramDef\LogTermVar{\LogNatA}\BarSep\LogTermLam{\LogTermA}\BarSep\LogTermApp{\LogTermA}{\LogTermB}
\end{equation*}
where $\LogNatA$ is a natural number. We suppose that the reader is familiar with de Bruijn indices and do not recall here the translations between usual $\lambda$-terms and $\lambda$-terms with de Bruijn indices. We only give an example: the $\lambda$-term $\lambda xyz.y\left(\lambda u.x\right)$ is written with de Bruijn indices as $\LogTermLam{\LogTermLam{\LogTermLam{\LogTermApp{\LogTermVar{1}}{\left(\LogTermLam{\LogTermVar{3}}\right)}}}}$. Since we use Tait's style of reducibility candidate, we will have to manipulate $\lambda$-terms applied to an arbitrary number of arguments. We therefore also consider lists of $\lambda$-terms, for which we use the notation $\LogTermListA=\LogTermListAbbrv{\LogTermA_0,\ldots,\LogTermA_{n-1}}$. We write $\LogTermApp{\LogTermA}{\LogTermListA}$ for $\LogTermA\,\LogTermA_0\,\ldots\,\LogTermA_{n-1}$ in $\Lambda$. Parallel substitution with de Bruijn indices requires the definition of a shift operation $\FormalShift{k}$ on terms. $\FormalShift{k}\LogTermA$ is the result of incrementing the value of all variables of $\LogTermA$ with an outer index $\geq k$, that is, the variables $\LogTermVar{\LogNatA}$ such that $\LogNatA\geq k+l$ where $l$ is the number of $\lambda$-abstractions above the variable $\LogTermVar{\LogNatA}$ in $\LogTermA$. $\FormalShift{k}\LogTermA$ is defined as follows:
\begin{align*}
\FormalShift{k}\LogTermVar{\LogNatA}&\Def\left\{\begin{aligned}&\LogTermVar{\LogNatA+1}&&\text{if }\LogNatA\geq k\\&\LogTermVar{\LogNatA}&&\text{otherwise}\end{aligned}\right.
&&
\begin{aligned}
\FormalShift{k}\left(\LogTermLam{\LogTermA}\right)&\Def\LogTermLam{\left(\FormalShift{k+1}\LogTermA\right)}\\
\FormalShift{k}\left(\LogTermApp{\LogTermA}{\LogTermB}\right)&\Def\LogTermApp{\left(\FormalShift{k}\LogTermA\right)}{\left(\FormalShift{k}\LogTermB\right)}
\end{aligned}
\end{align*}
This operation is extended to lists of terms:
\begin{equation*}
\FormalShift{k}\LogTermListAbbrv{\LogTermA_0,\ldots,\LogTermA_{n-1}}\Def\LogTermListAbbrv{\FormalShift{k}\LogTermA_0,\ldots,\FormalShift{k}\LogTermA_{n-1}}
\end{equation*}
We write $\FormalShift{}\LogTermA$ (resp. $\FormalShift{}\LogTermListA$) for $\FormalShift{\LogNatZ}\LogTermA$ (resp. $\FormalShift{\LogNatZ}\LogTermListA$), the result of incrementing all the free variables of $\LogTermA$. Using the shift operation, we define parallel substitution $\LogTermB\FormalSubst{k}{\LogTermListA}$ where $\LogTermListA=\left\langle\LogTermA_0,\ldots,\LogTermA_{n-1}\right\rangle$. The result of the parallel substitution $\LogTermB\FormalSubst{k}{\LogTermListA}$ is obtained by substituting $\LogTermA_i$ for variables of outer index $i$ such that $k\leq i<k+n$ in $\LogTermB$, and subtracting $n$ to variables of outer index $i\geq k+n$:
\begin{gather*}
\LogTermVar{\LogNatA}\FormalSubst{k}{\LogTermListAbbrv{\LogTermA_0,\ldots,\LogTermA_{n-1}}}\Def\left\{\begin{aligned}&\LogTermVar{\LogNatA}&&\text{if }\LogNatA<k\\&\LogTermA_{\LogNatA-k}&&\text{if }k\leq\LogNatA<k+n\\&\LogTermVar{\LogNatA-n}&&\text{otherwise}\end{aligned}\right.
\\
\begin{flalign*}
\left(\LogTermLam{\LogTermA}\right)\FormalSubst{k}{\LogTermListA}&\Def\LogTermLam{\left(\LogTermA\FormalSubst{k+1}{\FormalShift{}\LogTermListA}\right)}
&
\left(\LogTermApp{\LogTermA}{\LogTermB}\right)\FormalSubst{k}{\LogTermListA}&\Def\LogTermApp{\left(\LogTermA\FormalSubst{k}{\LogTermListA}\right)}{\left(\LogTermB\FormalSubst{k}{\LogTermListA}\right)}
\end{flalign*}
\end{gather*}
Substitution of a single term is defined as:
\begin{equation*}
\LogTermA\FormalSubst{k}{\LogTermB}\Def\LogTermA\FormalSubst{k}{\LogTermListAbbrv{\LogTermB}}
\end{equation*}
and we write $\LogTermA\FormalSubst{}{\LogTermListA}$ (resp. $\LogTermA\FormalSubst{}{\LogTermB}$) for $\LogTermA\FormalSubst{\LogNatZ}{\LogTermListA}$ (resp. $\LogTermA\FormalSubst{\LogNatZ}{\LogTermB}$). The usual $\beta$-reduction of $\lambda$-calculus is therefore:
\begin{equation*}
\LogTermApp{\left(\LogTermLam{\LogTermA}\right)}{\LogTermB}\FRed\LogTermA\FormalSubst{}{\LogTermB}
\end{equation*}
The following substitution lemma will be used in the proof of normalization:
\begin{lem}
\label{SubstLemma}
We have the following equality:
\begin{equation*}
\LogTermA\FormalSubst{k}{\LogTermListAbbrv{\LogTermB,\LogTermListA}}=\LogTermA\FormalSubst{k+1}{\FormalShift{k}\LogTermListA}\FormalSubst{k}{\LogTermB}
\end{equation*}
where $\LogTermListAbbrv{\LogTermB,\LogTermListA}$ is the result of prepending $\LogTermB$ to $\LogTermListA$.
\end{lem}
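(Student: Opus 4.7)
I would prove the equality by structural induction on $\LogTermA$, with $k$, $\LogTermB$ and $\LogTermListA$ universally quantified so that the induction hypothesis can be re-instantiated at the shifted index $k+1$ in the $\lambda$-abstraction case. Write $\LogTermListA=\LogTermListAbbrv{\LogTermA_0,\ldots,\LogTermA_{n-1}}$ throughout, so the left-hand list $\LogTermListAbbrv{\LogTermB,\LogTermListA}$ has length $n+1$.

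Before the main induction I would isolate two auxiliary identities, each established by a short, separate structural induction on the term they act upon. The first is shift commutation, $\FormalShift{}\FormalShift{k}\LogTermA=\FormalShift{k+1}\FormalShift{}\LogTermA$, and the corresponding componentwise identity for lists. The second is the cancellation of shift by substitution at the same index: $\left(\FormalShift{k}\LogTermA\right)\FormalSubst{k}{\LogTermB}=\LogTermA$. Both are standard facts about de Bruijn substitution and reduce to an elementary arithmetic check at the variable case.

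The variable case $\LogTermA=\LogTermVar{\LogNatA}$ is where the real work happens. I would split into four subcases according to the position of $\LogNatA$: $\LogNatA<k$, $\LogNatA=k$, $k<\LogNatA<k+n+1$, and $\LogNatA\geq k+n+1$. The first two and the last unfold directly to the same term on both sides. The third subcase yields $\LogTermA_{\LogNatA-k-1}$ on the left-hand side, while on the right the inner substitution returns $\FormalShift{k}\LogTermA_{\LogNatA-k-1}$ and the outer $\FormalSubst{k}{\LogTermB}$ must discharge the shift; this is exactly where the shift/substitution cancellation lemma applies. The application case is immediate from the two induction hypotheses.

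For the abstraction case $\LogTermA=\LogTermLam{\LogTermA'}$, unfolding both sides and pushing the outer $\FormalSubst{k}{\LogTermB}$ inside the binder reduces the goal to an instance of the induction hypothesis applied to $\LogTermA'$ at index $k+1$, with $\FormalShift{}\LogTermB$ and $\FormalShift{}\LogTermListA$ replacing $\LogTermB$ and $\LogTermListA$. The only remaining mismatch is between $\FormalShift{k+1}\FormalShift{}\LogTermListA$ (produced by the IH) and $\FormalShift{}\FormalShift{k}\LogTermListA$ (produced by unfolding the right-hand side), and these coincide by the shift commutation lemma. The main obstacle I anticipate is purely notational bookkeeping: keeping the de Bruijn indices aligned in the subcase $k<\LogNatA<k+n+1$ and making sure the shift levels line up correctly in the abstraction case; there is no conceptual difficulty beyond the two auxiliary identities.
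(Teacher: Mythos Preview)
Your proposal is correct and follows exactly the approach of the paper: structural induction on $\LogTermA$ with the shift commutation identity $\FormalShift{}\FormalShift{k}\LogTermListA=\FormalShift{k+1}\FormalShift{}\LogTermListA$ handling the abstraction case. You simply spell out more detail than the paper's one-line proof, in particular making explicit the shift/substitution cancellation $(\FormalShift{k}\LogTermA)\FormalSubst{k}{\LogTermB}=\LogTermA$ that is indeed needed in the variable subcase $k<\LogNatA<k+n+1$ but left implicit there.
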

\begin{proof}
By induction on $\LogTermA$, using $\FormalShift{}\left(\FormalShift{k}\LogTermListA\right)=\FormalShift{k+1}\left(\FormalShift{}\LogTermListA\right)$ for the case of a $\lambda$-abstraction.
\end{proof}
\subsection{The normalization theorem}
We prove here the normalization of system F in a formal way so we can interpret it through realizability in section~\ref{Realizability}. As explained before we choose a simplified version of the usual proof that proves only weak head reduction so the interpretation is relatively simple, but any other proof couldd be used since our realizability model interprets full second-order arithmetic.\par
First, we recall the typing rules of system F in figure~\ref{TypingSystF}, where types are defined by the following grammar:
\begin{equation*}
\FTypeA,\FTypeB\GramDef\LogPredVarA\BarSep\FTypeA\to\FTypeB\BarSep\forall\LogPredVarA\,\FTypeA
\end{equation*}
where $\LogPredVarA$ ranges over a countable set of type variables.
\begin{figure}
\begin{gather*}
\begin{flalign*}
&\AXM{}
\RLM{0\leq\LogNatA<n}\UIM{\FTypeA_{n-1},\ldots,\FTypeA_0\Entails\LogTermVar{\LogNatA}:\FTypeA_\LogNatA}
\DP
&
&\AXM{\Gamma,\FTypeA\Entails\LogTermA:\FTypeB}
\UIM{\Gamma\Entails\LogTermLam{\LogTermA}:\FTypeA\to\FTypeB}
\DP
&
&\AXM{\Gamma\Entails\LogTermA:\FTypeA\to\FTypeB}
\AXM{\Gamma\Entails\LogTermB:\FTypeA}
\BIM{\Gamma\Entails\LogTermApp{\LogTermA}{\LogTermB}:\FTypeB}
\DP
\end{flalign*}
\\
\begin{align*}
&\AXM{\Gamma\Entails\LogTermA:\FTypeA}
\RLM{\LogPredVarA\notin\FV{\Gamma}}\UIM{\Gamma\Entails\LogTermA:\forall\LogPredVarA\,\FTypeA}
\DP
&
&\AXM{\Gamma\Entails\LogTermA:\forall\LogPredVarA\,\FTypeA}
\UIM{\Gamma\Entails\LogTermA:\FTypeA\Subst{\LogPredVarA}{\FTypeB}}
\DP&
\end{align*}
\end{gather*}
\caption{Typing rules of system F}
\label{TypingSystF}
\end{figure}
Since we work with de Bruijn indices, contexts are ordered lists of types (and the order is important). We use a Curry presentation (without type abstractions and applications within the terms) since it simplifies the syntax and we are not interested into type checking or inference. As explained above, we only consider weak head reduction:
\begin{equation*}
\LogTermApp{\LogTermApp{\left(\LogTermLam{\LogTermA}\right)}{\LogTermB}}{\LogTermListA}\FRed\LogTermApp{\LogTermA\FormalSubst{}{\LogTermB}}{\LogTermListA}
\end{equation*}
and write $\LogNorm{\LogTermA}$ if $\LogTermA$ normalizes for the above reduction.\par
The normalization proof goes as follows: first, we define the set $\mathcal{RC}\subseteq\mathcal{P}\left(\Lambda\right)$ of reducibility candidates and we prove that the set of normalizing terms is a reducibility candidate. Then, we associate a set $\LogRC{\FTypeA,\ValuatA}\subseteq\Lambda$ to each type $\FTypeA$ of system F with valuation $\ValuatA:\FV{\FTypeA}\to\mathcal{RC}$, and we prove that $\LogRC{\FTypeA,\ValuatA}$ is a reducibility candidate. Finally, we prove that if a closed term $\LogTermA$ is of closed type $\FTypeA$, then $\LogTermA\in\LogRC{\FTypeA,\emptyset}$. Since one of the properties of reducibility candidates is that they contain only normalizing terms, we can then conclude that $\LogTermA$ normalizes.\par
We now give the proof in more details. First, define the set $\mathcal{RC}$ of reducibility candidates:
\begin{defi}[Reducibility candidate]
$\RealPredA\subseteq\Lambda$ is in $\mathcal{RC}$ if:
\begin{itemize}
\item For any list of terms $\LogTermListA$, we have $\LogTermApp{\LogTermVar{\LogNatZ}}{\LogTermListA}\in\RealPredA$
\item If $\LogTermA\in\RealPredA$, then $\LogNorm{\LogTermA}$
\item If $\LogTermApp{\LogTermA\FormalSubst{}{\LogTermB}}{\LogTermListA}\in\RealPredA$, then $\LogTermApp{\LogTermApp{\left(\LogTermLam{\LogTermA}\right)}{\LogTermB}}{\LogTermListA}\in\RealPredA$
\end{itemize}
\end{defi}
In particular, the set of normalizing terms is a reducibility candidate:
\begin{lem}
\label{NormRC}
$\SetSuch{\LogTermA\in\Lambda}{\LogNorm{\LogTermA}}\in\mathcal{RC}$
\end{lem}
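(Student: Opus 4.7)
The plan is to check each of the three clauses in the definition of $\mathcal{RC}$ separately against the set $\mathcal{N}\Def\SetSuch{\LogTermA\in\Lambda}{\LogNorm{\LogTermA}}$. The second clause is built into the definition of $\mathcal{N}$, so only the first and third require a real argument, and both rest on the fact that the weak head reduction $\FRed$ defined above is deterministic: each term either has a unique redex at the head or is in normal form.

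First I would handle the base clause. Given any list $\LogTermListA=\LogTermListAbbrv{\LogTermA_0,\ldots,\LogTermA_{n-1}}$, I would observe that the head of $\LogTermApp{\LogTermVar{\LogNatZ}}{\LogTermListA}$ is the variable $\LogTermVar{\LogNatZ}$, not a redex of shape $\LogTermApp{\left(\LogTermLam{\LogTermA}\right)}{\LogTermB}$, so by inspection of the single reduction rule there is no $\LogTermC$ with $\LogTermApp{\LogTermVar{\LogNatZ}}{\LogTermListA}\FRed\LogTermC$. Hence $\LogTermApp{\LogTermVar{\LogNatZ}}{\LogTermListA}$ is already in normal form, a fortiori normalizing, so it belongs to $\mathcal{N}$.

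Next I would handle the closure clause. Assuming $\LogTermApp{\LogTermA\FormalSubst{}{\LogTermB}}{\LogTermListA}\in\mathcal{N}$, I would exhibit the one-step reduction $\LogTermApp{\LogTermApp{\left(\LogTermLam{\LogTermA}\right)}{\LogTermB}}{\LogTermListA}\FRed\LogTermApp{\LogTermA\FormalSubst{}{\LogTermB}}{\LogTermListA}$, which is a direct instance of the weak head reduction rule. Because this is the only redex available (the head of the left-hand side is the lambda applied to $\LogTermB$, and weak head reduction fires only there), any reduction sequence from $\LogTermApp{\LogTermApp{\left(\LogTermLam{\LogTermA}\right)}{\LogTermB}}{\LogTermListA}$ must begin with this step. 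A normalizing reduction sequence from $\LogTermApp{\LogTermA\FormalSubst{}{\LogTermB}}{\LogTermListA}$ can then be prepended by this one step to witness normalization of $\LogTermApp{\LogTermApp{\left(\LogTermLam{\LogTermA}\right)}{\LogTermB}}{\LogTermListA}$.

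There is no real obstacle here; the only subtlety is being explicit that weak head reduction has no reduction rule that would allow a redex strictly inside a subterm, so the first bullet really does give an inert normal form and the third bullet really does have a unique first step. I would flag this determinism briefly at the start of the proof so that both clauses can be dispatched in one or two lines each.
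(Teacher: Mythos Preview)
Your proposal is correct and matches the paper's proof essentially line for line: the paper also checks the three clauses, notes that $\LogTermApp{\LogTermVar{\LogNatZ}}{\LogTermListA}$ is in head normal form, observes the second clause is trivial, and for the third uses the single step $\LogTermApp{\LogTermApp{\left(\LogTermLam{\LogTermA}\right)}{\LogTermB}}{\LogTermListA}\FRed\LogTermApp{\LogTermA\FormalSubst{}{\LogTermB}}{\LogTermListA}$. Your explicit remark on determinism of weak head reduction is a detail the paper leaves implicit, but it changes nothing in the argument.
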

\begin{proof}
We prove the three properties of reducibility candidates:
\begin{itemize}
\item For any $\LogTermListA$, $\LogTermApp{\LogTermVar{\LogNatZ}}{\LogTermListA}$ is in head normal form so $\LogNorm{\LogTermApp{\LogTermVar{\LogNatZ}}{\LogTermListA}}$
\item If $\LogNorm{\LogTermA}$, then $\LogNorm{\LogTermA}$
\item If $\LogNorm{\LogTermApp{\LogTermA\FormalSubst{}{\LogTermB}}{\LogTermListA}}$, then $\LogNorm{\LogTermApp{\LogTermApp{\left(\LogTermLam{\LogTermA}\right)}{\LogTermB}}{\LogTermListA}}$ because $\LogTermApp{\LogTermApp{\left(\LogTermLam{\LogTermA}\right)}{\LogTermB}}{\LogTermListA}\FRed\LogTermApp{\LogTermA\FormalSubst{}{\LogTermB}}{\LogTermListA}$\qedhere
\end{itemize}
\end{proof}
As explained above, in the second step we define a set $\LogRC{\FTypeA,\ValuatA}\subseteq\Lambda$ for each type $\FTypeA$ with valuation $\ValuatA$:
\begin{defi}
If $\FTypeA$ is a type of system F and if $\ValuatA:\FV{\FTypeA}\to\mathcal{RC}$, we define $\LogRC{\FTypeA,\ValuatA}$ inductively:
\begin{itemize}
\item$\LogRC{\LogPredVarA,\ValuatA}=\ValuatA\left(\LogPredVarA\right)$
\item$\LogRC{\FTypeA\to\FTypeB,\ValuatA}=\SetSuch{\LogTermA}{\forall\LogTermB\in\LogRC{\FTypeA,\ValuatA},\LogTermA\,\LogTermB\in\LogRC{\FTypeB,\ValuatA}}$
\item$\LogRC{\forall\LogPredVarA\,\FTypeA,\ValuatA}=\bigcap\SetSuch{\LogRC{\FTypeA,\ValuatA\uplus\SetSuch{\LogPredVarA\mapsto\RealPredA}{}}}{\RealPredA\in\mathcal{RC}}$
\end{itemize}
\end{defi}
These sets are indeed reducibility candidates:
\begin{lem}
\label{RCisRC}
If $\FTypeA$ is a type and $\ValuatA:\FV{\FTypeA}\to\mathcal{RC}$, then $\LogRC{\FTypeA,\ValuatA}\in\mathcal{RC}$
\end{lem}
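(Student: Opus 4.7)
The plan is a straightforward structural induction on the type $\FTypeA$, verifying the three closure properties of $\mathcal{RC}$ in each case. For the base case $\FTypeA=\LogPredVarA$, the result is immediate since $\LogRC{\LogPredVarA,\ValuatA}=\ValuatA(\LogPredVarA)$ and $\ValuatA$ takes values in $\mathcal{RC}$ by hypothesis.

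For the universal case $\FTypeA=\forall\LogPredVarA\,\FTypeB$, I would first establish as an auxiliary observation that an arbitrary intersection of reducibility candidates is again in $\mathcal{RC}$, which is a routine check of the three axioms. Then, applying the induction hypothesis to $\FTypeB$ with the extended valuation $\ValuatA\uplus\SetSuch{\LogPredVarA\mapsto\RealPredA}{}$ for each $\RealPredA\in\mathcal{RC}$, each member of the intersection defining $\LogRC{\forall\LogPredVarA\,\FTypeB,\ValuatA}$ is an RC, so the intersection is too.

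The arrow case $\FTypeA=\FTypeB\to\FTypeC$ is where the real work happens, and I would check the three axioms as follows. For the first, given a list $\LogTermListA$ and a term $\LogTermB\in\LogRC{\FTypeB,\ValuatA}$, I need $\LogTermApp{\LogTermApp{\LogTermVar{\LogNatZ}}{\LogTermListA}}{\LogTermB}\in\LogRC{\FTypeC,\ValuatA}$, which is $\LogTermVar{\LogNatZ}$ applied to the list $\LogTermListAbbrv{\LogTermListA,\LogTermB}$ and hence in $\LogRC{\FTypeC,\ValuatA}$ by the first axiom applied to the induction hypothesis on $\FTypeC$. For the third axiom, if $\LogTermApp{\LogTermA\FormalSubst{}{\LogTermB}}{\LogTermListA}\in\LogRC{\FTypeB\to\FTypeC,\ValuatA}$ and $\LogTermC\in\LogRC{\FTypeB,\ValuatA}$, then applying this to $\LogTermC$ gives $\LogTermApp{\LogTermA\FormalSubst{}{\LogTermB}}{\LogTermListAbbrv{\LogTermListA,\LogTermC}}\in\LogRC{\FTypeC,\ValuatA}$, and the third axiom at $\FTypeC$ (by induction hypothesis) promotes this to $\LogTermApp{\LogTermApp{\LogTermLam{\LogTermA}}{\LogTermB}}{\LogTermListAbbrv{\LogTermListA,\LogTermC}}\in\LogRC{\FTypeC,\ValuatA}$, exactly what is needed.

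The step I expect to be the trickiest is the second axiom in the arrow case: showing that any $\LogTermA\in\LogRC{\FTypeB\to\FTypeC,\ValuatA}$ satisfies $\LogNorm{\LogTermA}$. My approach is to apply the induction hypothesis on $\FTypeB$ to obtain $\LogTermVar{\LogNatZ}\in\LogRC{\FTypeB,\ValuatA}$ (from the first axiom with empty list), whence $\LogTermApp{\LogTermA}{\LogTermVar{\LogNatZ}}\in\LogRC{\FTypeC,\ValuatA}$, which normalizes by induction hypothesis applied to the second axiom at $\FTypeC$. I then need the easy observation that if $\LogTermApp{\LogTermA}{\LogTermVar{\LogNatZ}}$ normalizes under weak head reduction then so does $\LogTermA$, since any infinite head-reduction sequence from $\LogTermA$ lifts step-by-step to one from $\LogTermApp{\LogTermA}{\LogTermVar{\LogNatZ}}$ by simply appending $\LogTermVar{\LogNatZ}$ to the argument list of the head redex at each step.
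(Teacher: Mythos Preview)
Your proposal is correct and follows essentially the same approach as the paper: structural induction on $\FTypeA$, with the arrow case handled exactly as you describe (including the key observation that $\LogNorm{\LogTermApp{\LogTermA}{\LogTermVar{\LogNatZ}}}$ implies $\LogNorm{\LogTermA}$ via lifting of reduction sequences). The only cosmetic difference is in the $\forall$ case: you factor out an auxiliary ``intersections of RCs are RCs'' observation, whereas the paper checks the three axioms directly, invoking Lemma~\ref{NormRC} to instantiate $\LogPredVarA$ with the set of normalizing terms for the second axiom---note that your auxiliary observation needs the intersection to be over a \emph{nonempty} family (otherwise the second axiom fails), which is exactly what Lemma~\ref{NormRC} guarantees here.
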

\begin{proof}
By induction on $\FTypeA$:
\begin{itemize}
\item Since $\ValuatA\left(\LogPredVarA\right)\in\mathcal{RC}$, we have $\LogRC{\LogPredVarA,\ValuatA}=\ValuatA\left(\LogPredVarA\right)\in\mathcal{RC}$
\item Suppose $\LogRC{\FTypeA,\ValuatA}\in\mathcal{RC}$ and $\LogRC{\FTypeB,\ValuatA}\in\mathcal{RC}$. We prove $\LogRC{\FTypeA\to\FTypeB,\ValuatA}\in\mathcal{RC}$:
\begin{itemize}
\item If $\LogTermListA$ is a list of terms and $\LogTermA\in\LogRC{\FTypeA,\ValuatA}$ then $\LogTermListCons{\LogTermListA}{\LogTermA}$ (result of appending $\LogTermA$ to $\LogTermListA$) is a list of terms so:
\begin{equation*}
\LogTermApp{\left(\LogTermApp{\LogTermVar{\LogNatZ}}{\LogTermListA}\right)}{\LogTermA}=\LogTermApp{\LogTermVar{\LogNatZ}}{\LogTermListCons{\LogTermListA}{\LogTermA}}\in\LogRC{\FTypeB,\ValuatA}
\end{equation*}
by induction hypothesis on $\FTypeB$, and therefore:
\begin{equation*}
\LogTermApp{\LogTermVar{\LogNatZ}}{\LogTermListA}\in\LogRC{\FTypeA\to\FTypeB,\ValuatA}
\end{equation*}
\item Let $\LogTermA\in\LogRC{\FTypeA\to\FTypeB,\ValuatA}$. We have $\LogTermVar{\LogNatZ}=\LogTermApp{\LogTermVar{\LogNatZ}}{\LogTermListNil}\in\LogRC{\FTypeA,\ValuatA}$ by induction hypothesis on $\FTypeA$, so $\LogTermApp{\LogTermA}{\LogTermVar{\LogNatZ}}\in\LogRC{\FTypeB,\ValuatA}$ by definition of $\LogRC{\FTypeA\to\FTypeB,\ValuatA}$ and $\LogNorm{\LogTermApp{\LogTermA}{\LogTermVar{\LogNatZ}}}$ by induction hypothesis on $\FTypeB$. Since every reduction sequence from $\LogTermA$ can be turned into a reduction sequence from $\LogTermApp{\LogTermA}{\LogTermVar{\LogNatZ}}$ with same length, we get $\LogNorm{\LogTermA}$.
\item Suppose $\LogTermApp{\LogTermA\FormalSubst{}{\LogTermB}}{\LogTermListA}\in\LogRC{\FTypeA\to\FTypeB,\ValuatA}$. Then for any $\LogTermC\in\LogRC{\FTypeA,\ValuatA}$ we have by definition of $\LogRC{\FTypeA\to\FTypeB,\ValuatA}$:
\begin{equation*}
\LogTermApp{\LogTermA\FormalSubst{}{\LogTermB}}{\LogTermListCons{\LogTermListA}{\LogTermC}}=\LogTermApp{\LogTermApp{\LogTermA\FormalSubst{}{\LogTermB}}{\LogTermListA}}{\LogTermC}\in\LogRC{\FTypeB,\ValuatA}
\end{equation*}
and therefore:
\begin{equation*}
\LogTermApp{\LogTermApp{\LogTermApp{\left(\LogTermLam{\LogTermA}\right)}{\LogTermB}}{\LogTermListA}}{\LogTermC}=\LogTermApp{\LogTermApp{\left(\LogTermLam{\LogTermA}\right)}{\LogTermB}}{\LogTermListCons{\LogTermListA}{\LogTermC}}\in\LogRC{\FTypeB,\ValuatA}
\end{equation*}
by induction hypothesis on $\FTypeB$. This proves $\LogTermApp{\LogTermApp{\left(\LogTermLam{\LogTermA}\right)}{\LogTermB}}{\LogTermListA}\in\LogRC{\FTypeA\to\FTypeB,\ValuatA}$.
\end{itemize}
\item Suppose $\LogRC{\FTypeA,\ValuatA\uplus\SetSuch{\LogPredVarA\mapsto\RealPredA}{}}\in\mathcal{RC}$ for every $\RealPredA\in\mathcal{RC}$.
\begin{itemize}
\item If $\LogTermListA$ is a list of terms, then $\LogTermApp{\LogTermVar{\LogNatZ}}{\LogTermListA}\in\LogRC{\FTypeA,\ValuatA\uplus\SetSuch{\LogPredVarA\mapsto\RealPredA}{}}$ for every $\RealPredA\in\mathcal{RC}$ by induction hypothesis on $\FTypeA$, and therefore $\LogTermApp{\LogTermVar{\LogNatZ}}{\LogTermListA}\in\LogRC{\forall\LogPredVarA\,\FTypeA,\ValuatA}$.
\item If $\LogTermA\in\LogRC{\forall\LogPredVarA\,\FTypeA,\ValuatA}$, then $\LogTermA\in\LogRC{\FTypeA,\ValuatA\uplus\SetSuch{\LogPredVarA\mapsto\SetSuch{\LogTermB\in\Lambda}{\LogNorm{\LogTermB}}}{}}$ since $\SetSuch{\LogTermB\in\Lambda}{\LogNorm{\LogTermB}}\in\mathcal{RC}$ by lemma~\ref{NormRC}, and therefore $\LogNorm{\LogTermA}$ by induction hypothesis on $\FTypeA$.
\item If $\LogTermApp{\LogTermA\FormalSubst{}{\LogTermB}}{\LogTermListA}\in\LogRC{\forall\LogPredVarA\,\FTypeA,\ValuatA}$ and $\RealPredA\in\mathcal{RC}$, then in particular:
\begin{equation*}
\LogTermApp{\LogTermA\FormalSubst{}{\LogTermB}}{\LogTermListA}\in\LogRC{\FTypeA,\ValuatA\uplus\SetSuch{\LogPredVarA\mapsto\RealPredA}{}}
\end{equation*}
and so:
\begin{equation*}
\LogTermApp{\LogTermApp{\left(\LogTermLam{\LogTermA}\right)}{\LogTermB}}{\LogTermListA}\in\LogRC{\FTypeA,\ValuatA\uplus\SetSuch{\LogPredVarA\mapsto\RealPredA}{}}
\end{equation*}
by induction hypothesis on $\FTypeA$. Therefore $\LogTermApp{\LogTermApp{\left(\LogTermLam{\LogTermA}\right)}{\LogTermB}}{\LogTermListA}\in\LogRC{\forall\LogPredVarA\,\FTypeA,\ValuatA}$.\qedhere
\end{itemize}
\end{itemize}
\end{proof}
In the last step of the normalization proof, we prove that each term of system F belongs to the reducibility candidate associated to its type:
\begin{lem}
\label{FNorm}
If $\FTypeA_{n-1},\ldots,\FTypeA_0\Entails\LogTermB:\FTypeB$ in system F and if $\ValuatA:\FV{\FTypeA_{n-1},\ldots\FTypeA_0,\FTypeB}\to\mathcal{RC}$ and $\LogTermListA=\LogTermListAbbrv{\LogTermA_0,\ldots,\LogTermA_{n-1}}$ are such that $\LogTermA_i\in\LogRC{\FTypeA_i,\ValuatA}$ for $0\leq i<n$, then $\LogTermB\FormalSubst{}{\LogTermListA}\in\LogRC{\FTypeB,\ValuatA}$
\end{lem}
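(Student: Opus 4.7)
The natural approach is induction on the typing derivation of $\FTypeA_{n-1},\ldots,\FTypeA_0\Entails\LogTermB:\FTypeB$, with one case per rule of figure~\ref{TypingSystF}. Two of the five cases are essentially immediate. For the variable rule with $\LogTermB=\LogTermVar{\LogNatA}$ and $\FTypeB=\FTypeA_\LogNatA$, the definition of parallel substitution gives $\LogTermVar{\LogNatA}\FormalSubst{}{\LogTermListA}=\LogTermA_\LogNatA$, which lies in $\LogRC{\FTypeA_\LogNatA,\ValuatA}$ by hypothesis. For the application rule, one applies the induction hypothesis separately to the two premises and then invokes the very definition of $\LogRC{\FTypeA\to\FTypeB,\ValuatA}$, noting that substitution commutes with application.

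The $\lambda$-abstraction case is the one where the syntactic work is concentrated. Given $\Gamma,\FTypeA\Entails\LogTermA:\FTypeB$ with conclusion $\Gamma\Entails\LogTermLam{\LogTermA}:\FTypeA\to\FTypeB$, I need to show $\left(\LogTermLam{\LogTermA}\right)\FormalSubst{}{\LogTermListA}\in\LogRC{\FTypeA\to\FTypeB,\ValuatA}$, i.e.\ that for every $\LogTermB\in\LogRC{\FTypeA,\ValuatA}$, $\LogTermApp{\left(\LogTermLam{\LogTermA}\right)\FormalSubst{}{\LogTermListA}}{\LogTermB}\in\LogRC{\FTypeB,\ValuatA}$. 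Here I use the definition of substitution on $\lambda$ together with lemma~\ref{SubstLemma}: extending $\LogTermListA$ with $\LogTermB$ at the front, the induction hypothesis applied to the premise with list $\LogTermListAbbrv{\LogTermB,\LogTermListA}$ gives $\LogTermA\FormalSubst{}{\LogTermListAbbrv{\LogTermB,\LogTermListA}}\in\LogRC{\FTypeB,\ValuatA}$; lemma~\ref{SubstLemma} rewrites this as $\LogTermA\FormalSubst{1}{\FormalShift{}\LogTermListA}\FormalSubst{}{\LogTermB}$, which is precisely the term to which the third closure axiom of reducibility candidates (applied with an empty list $\LogTermListNil$) produces $\LogTermApp{\left(\LogTermLam{\LogTermA\FormalSubst{1}{\FormalShift{}\LogTermListA}}\right)}{\LogTermB}$, and this is $\LogTermApp{\left(\LogTermLam{\LogTermA}\right)\FormalSubst{}{\LogTermListA}}{\LogTermB}$ by definition of substitution under a binder.

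For the $\forall$-introduction rule with side condition $\LogPredVarA\notin\FV{\Gamma}$, I fix an arbitrary $\RealPredA\in\mathcal{RC}$, extend the valuation to $\ValuatA\uplus\SetSuch{\LogPredVarA\mapsto\RealPredA}{}$, and observe that since $\LogPredVarA$ does not occur free in any $\FTypeA_i$, the hypotheses $\LogTermA_i\in\LogRC{\FTypeA_i,\ValuatA}$ carry over to the extended valuation (by a trivial structural lemma saying $\LogRC{-,-}$ does not depend on the value of the valuation at variables not free in the type). The induction hypothesis then yields $\LogTermB\FormalSubst{}{\LogTermListA}\in\LogRC{\FTypeA,\ValuatA\uplus\SetSuch{\LogPredVarA\mapsto\RealPredA}{}}$ for every $\RealPredA$, hence membership in the intersection defining $\LogRC{\forall\LogPredVarA\,\FTypeA,\ValuatA}$.

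The $\forall$-elimination case requires an auxiliary type-substitution lemma for reducibility candidates, namely $\LogRC{\FTypeA\Subst{\LogPredVarA}{\FTypeB},\ValuatA}=\LogRC{\FTypeA,\ValuatA\uplus\SetSuch{\LogPredVarA\mapsto\LogRC{\FTypeB,\ValuatA}}{}}$, proved by a straightforward induction on $\FTypeA$. Given this, from the induction hypothesis $\LogTermB\FormalSubst{}{\LogTermListA}\in\LogRC{\forall\LogPredVarA\,\FTypeA,\ValuatA}$ I instantiate the intersection at $\RealPredA=\LogRC{\FTypeB,\ValuatA}\in\mathcal{RC}$ (using lemma~\ref{RCisRC}), obtaining membership in $\LogRC{\FTypeA,\ValuatA\uplus\SetSuch{\LogPredVarA\mapsto\LogRC{\FTypeB,\ValuatA}}{}}$, which by the auxiliary lemma equals $\LogRC{\FTypeA\Subst{\LogPredVarA}{\FTypeB},\ValuatA}$. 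The main technical obstacle is the abstraction case, since it is the only place where the de Bruijn bookkeeping of lemma~\ref{SubstLemma} and the closure axiom of $\mathcal{RC}$ have to interlock precisely; the other cases are structural.
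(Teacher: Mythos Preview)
Your proof is correct and follows essentially the same route as the paper's: induction on the typing derivation, with the abstraction case handled via lemma~\ref{SubstLemma} and the third closure property of reducibility candidates, and the $\forall$-elimination case via the type-substitution lemma $\LogRC{\FTypeA\Subst{\LogPredVarA}{\FTypeB},\ValuatA}=\LogRC{\FTypeA,\ValuatA\uplus\SetSuch{\LogPredVarA\mapsto\LogRC{\FTypeB,\ValuatA}}{}}$ together with lemma~\ref{RCisRC}. The only point you leave implicit that the paper makes explicit is the appeal to lemma~\ref{RCisRC} in the abstraction case, needed to justify that $\LogRC{\FTypeB,\ValuatA}$ is indeed a reducibility candidate before invoking its third closure axiom.
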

\begin{proof}
By induction on the typing derivation:
\begin{itemize}
\item$\FTypeA_{n-1},\ldots,\FTypeA_0\Entails\LogTermVar{\LogNatA}:\FTypeA_\LogNatA$. We have $\LogTermVar{\LogNatA}\FormalSubst{}{\LogTermListA}=\LogTermA_\LogNatA\in\LogRC{\FTypeA_\LogNatA,\ValuatA}$ as an hypothesis.
\item$\FTypeA_{n-1},\ldots,\FTypeA_0\Entails\LogTermLam{\LogTermB}:\FTypeB\to\FTypeC$. If $\LogTermC\in\LogRC{\FTypeB,\ValuatA}$ then:
\begin{equation*}
\LogTermB\FormalSubst{1}{\FormalShift{}\LogTermListA}\FormalSubst{}{\LogTermC}=\LogTermB\FormalSubst{}{\LogTermListAbbrv{\LogTermC,\LogTermListA}}\in\LogRC{\FTypeC,\ValuatA}
\end{equation*}
by lemma~\ref{SubstLemma} and induction hypothesis on $\LogTermB$, so:
\begin{equation*}
\LogTermApp{\left(\LogTermLam{\LogTermB}\right)\FormalSubst{}{\LogTermListA}}{\LogTermC}=\LogTermApp{\LogTermLam{\left(\LogTermB\FormalSubst{1}{\FormalShift{}\LogTermListA}\right)}}{\LogTermC}\in\LogRC{\FTypeC,\ValuatA}
\end{equation*}
by definition of parallel substitution and the third property of reducibility candidates, since $\LogRC{\FTypeC,\ValuatA}\in\mathcal{RC}$ by lemma~\ref{RCisRC}.
\item$\FTypeA_{n-1},\ldots,\FTypeA_0\Entails\LogTermApp{\LogTermB}{\LogTermC}:\FTypeC$. We have:
\begin{equation*}
\left(\LogTermApp{\LogTermB}{\LogTermC}\right)\FormalSubst{}{\LogTermListA}=\LogTermApp{\left(\LogTermB\FormalSubst{}{\LogTermListA}\right)}{\left(\LogTermC\FormalSubst{}{\LogTermListA}\right)}\in\LogRC{\FTypeC,\ValuatA}
\end{equation*}
because $\LogTermB\FormalSubst{}{\LogTermListA}\in\LogRC{\FTypeB\to\FTypeC,\ValuatA}$ and $\LogTermC\FormalSubst{}{\LogTermListA}\in\LogRC{\FTypeB,\ValuatA}$ by induction hypotheses on $\LogTermB$ and $\LogTermC$.
\item$\FTypeA_{n-1},\ldots,\FTypeA_0\Entails\LogTermB:\forall\LogPredVarA\,\FTypeB$. If $\RealPredA\in\mathcal{RC}$ then $\LogTermA_i\in\LogRC{\FTypeA_i,\ValuatA\uplus\SetSuch{\LogPredVarA\mapsto\RealPredA}{}}$ because $\LogPredVarA\notin\FV{\FTypeA_i}$, and therefore $\LogTermB\FormalSubst{}{\LogTermListA}\in\LogRC{\FTypeB,\ValuatA\uplus\SetSuch{\LogPredVarA\mapsto\RealPredA}{}}$ by induction hypothesis on $\LogTermB:\FTypeB$.
\item$\FTypeA_{n-1},\ldots,\FTypeA_0\Entails\LogTermB:\FTypeB\Subst{\LogPredVarA}{\FTypeC}$. We have $\LogRC{\FTypeC,\ValuatA}\in\mathcal{RC}$ by lemma~\ref{RCisRC}, so:
\begin{equation*}
\LogTermB\FormalSubst{}{\LogTermListA}\in\LogRC{\FTypeB,\ValuatA\uplus\SetSuch{\LogPredVarA\mapsto\LogRC{\FTypeC,\ValuatA}}{}}=\LogRC{\FTypeB\Subst{\LogPredVarA}{\FTypeC},\ValuatA}
\end{equation*}
by induction hypothesis on $\LogTermB:\forall\LogPredVarA\,\FTypeB$. The equality between the two reducibility candidates is proved by induction on $\FTypeB$.\qedhere
\end{itemize}
\end{proof}
We can now conclude our normalization proof of system F:
\begin{thm}
If a closed term $\LogTermA$ has closed type $\FTypeA$ in system F, then $\LogNorm{\LogTermA}$
\end{thm}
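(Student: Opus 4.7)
The plan is to instantiate Lemma~\ref{FNorm} at the trivial context and the empty valuation, and then read off normalization from the second clause in the definition of a reducibility candidate.

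Concretely, since $\LogTermA$ is closed and of closed type $\FTypeA$, the typing judgment is of the form $\Entails \LogTermA : \FTypeA$, i.e.\ the case $n = 0$ of Lemma~\ref{FNorm}. I take $\ValuatA = \emptyset$, which is well-defined because $\FV{\FTypeA} = \emptyset$, and $\LogTermListA = \LogTermListNil$, for which the hypothesis $\LogTermA_i \in \LogRC{\FTypeA_i, \ValuatA}$ holds vacuously. Lemma~\ref{FNorm} then gives $\LogTermA\FormalSubst{}{\LogTermListNil} \in \LogRC{\FTypeA, \emptyset}$. Since substituting the empty list leaves a closed term unchanged, this is just $\LogTermA \in \LogRC{\FTypeA, \emptyset}$.

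To conclude, I invoke Lemma~\ref{RCisRC} to obtain $\LogRC{\FTypeA, \emptyset} \in \mathcal{RC}$, and then apply the second property of reducibility candidates (every element normalizes) to get $\LogNorm{\LogTermA}$.

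There is no real obstacle here: the theorem is essentially a corollary obtained by specializing the previous lemmas to the closed case. The only point deserving a line of justification is that $\LogTermA\FormalSubst{}{\LogTermListNil} = \LogTermA$ on closed terms, which follows directly from the definition of parallel substitution by induction on $\LogTermA$ (only the variable case is nontrivial, and it cannot arise for a closed term).
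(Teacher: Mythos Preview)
Your proof is correct and matches the paper's approach exactly: apply Lemma~\ref{FNorm} with the empty context and empty valuation to get $\LogTermA\in\LogRC{\FTypeA,\emptyset}$, then invoke Lemma~\ref{RCisRC} and the second reducibility-candidate property. The paper's proof is the same two-line argument, only less explicit about the empty substitution (and note that $\LogTermA\FormalSubst{}{\LogTermListNil}=\LogTermA$ in fact holds for all terms, not just closed ones, since with $n=0$ the variable clause returns $\LogTermVar{\LogNatA-0}=\LogTermVar{\LogNatA}$).
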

\begin{proof}
Lemma~\ref{FNorm} gives $\LogTermA\in\LogRC{\FTypeA,\emptyset}$ and we get $\LogNorm{\LogTermA}$ by lemma~\ref{RCisRC} and the second property of reducibility candidates.
\end{proof}
\section{A logic for \texorpdfstring{$\lambda$}{lambda}-terms}
\label{Logic}
This section is devoted to the definition of a first-order multi-sorted logic in which we can easily formalize the normalization proof of system F described in the previous section. The main feature is that our logic has a sort of $\lambda$-terms, avoiding the usual encoding into natural numbers.
\subsection{Definitions}
Since we define a simply-typed realizability interpretation, we represent second-order artihmetic as a multi-sorted first-order theory. In particular, sets of $\lambda$-terms (and reducibility candidates) are first-order elements and we cannot instantiate them with arbitrary formulas. We will however get back this possibility in the next section through an interpretation of the axiom scheme of comprehension with the BBC functional. Since we formalize normalization of system F in this logic, we need a sort for $\lambda$-terms and a sort for sets of $\lambda$-terms. Moreover, the formal definition of reducibility candidates needs quantifications on lists of $\lambda$-terms so we have a sort for these. We will also manipulate lengths of reduction sequences so we include a sort for natural numbers. Finally, quantification over booleans will be convenient when interpreting the axiom scheme of comprehension so we include a sort for these as well. We distinguish elements of different sorts by using different notations:
\begin{gather*}
\begin{flalign*}
\LogNatA&\GramDef\LogNatVarA\BarSep\LogNatZ\BarSep\LogNatS{\LogNatA}
&
\LogTermA&\GramDef\LogTermVarA\BarSep\LogTermVar{\LogNatA}\BarSep\LogTermLam{\LogTermA}\BarSep\LogTermApp{\LogTermA}{\LogTermListA}\BarSep\LogTermSubst{\LogTermA}{\LogTermListA}
&
\LogTermListA&\GramDef\LogTermListVarA\BarSep\LogTermListNil\BarSep\LogTermListCons{\LogTermListA}{\LogTermA}
\end{flalign*}
\\
\begin{align*}
\LogPredVarA&\GramDef\LogPredVarA
&
\LogBoolA&\GramDef\LogBoolVarA\BarSep\LogBoolT\BarSep\LogBoolF\BarSep\LogBoolIn{\LogTermA}{\LogPredVarA}\BarSep\LogNNorm{\LogTermA}{\LogNatA}
&
\end{align*}
\end{gather*}
where $\LogNatVarA$, $\LogTermVarA$, $\LogTermListVarA$, $\LogPredVarA$ and $\LogBoolVarA$ range over countable sets of sorted variables of the logic. Notations $\LogNatA$, $\LogTermA$, $\LogTermListA$ and $\LogBoolA$ are used as meta-variables ranging over the first-order elements of the logic. Since the only elements of sort ``set'' (ranged over with $\LogPredVarA$) are variables, the meta-variables of sort ``set'' are exactly the variables of the logic of sort ``set'' so we can use the same notation for both. $\LogTermApp{\LogTermA}{\LogTermListA}$ is the application of term $\LogTermA$ to the list of arguments $\LogTermListA$, whereas $\LogTermSubst{\LogTermA}{\LogTermListA}$ it the parallel substitution of $\LogTermListA$ into $\LogTermA$. The elements $\LogBoolA$ are booleans that reflect validity. Note that in $\LogBoolIn{\LogTermA}{\LogPredVarA}$ (resp. $\LogNNorm{\LogTermA}{\LogNatA}$), $\LogBoolIn{}{}$ (resp. $\LogNNorm{}{}$) is formally a binary function symbol taking a term $\LogTermA$ and a set $\LogPredVarA$ (resp. a term $\LogTermA$ and a natural number $\LogNatA$) and returning a boolean. $\LogNNorm{\LogTermA}{\LogNatA}$ means that $\LogTermA$ can reduce for $\LogNatA$ steps of weak head reduction without reaching a normal form. We abbreviate $\LogTermListCons{\LogTermListCons{\ldots\LogTermListCons{\LogTermListNil}{\LogTermA_0}}{\ldots}}{\LogTermA_{n-1}}$ as $\LogTermListAbbrv{\LogTermA_0,\ldots,\LogTermA_{n-1}}$ and $\LogTermSubst{\LogTermA}{\LogTermListAbbrv{\LogTermB}}$ as $\LogTermSubst{\LogTermA}{\LogTermB}$. Formulas are defined as follows:
\begin{equation*}
\LogFormA,\LogFormB\GramDef\LogBoolA\BarSep\LogFormA\Rightarrow\LogFormB\BarSep\LogFormA\wedge\LogFormB\BarSep\forall\LogVarA\,\LogFormA
\end{equation*}
where $\LogVarA$ ranges over variables of any sort: $\LogNatVarA$, $\LogTermVarA$, $\LogTermListVarA$, $\LogPredVarA$, $\LogBoolVarA$. Formally, $\LogBoolA$ is a unary predicate symbol taking a boolean ($\LogBoolA$ itself) and should be thought of as ``$\LogBoolA=\LogBoolT$''. We also define the following abbreviations:
\begin{flalign*}
\neg\LogFormA&\Def\LogFormA\Rightarrow\LogBoolF
&
\exists\LogVarA\,\LogFormA&\Def\neg\forall\LogVarA\neg\LogFormA
&
\LogNorm{\LogTermA}&\Def\neg\forall\LogNatVarA\,\LogNNorm{\LogTermA}{\LogNatVarA}
&
\LogFormA\Leftrightarrow\LogFormB&\Def\left(\LogFormA\Rightarrow\LogFormB\right)\wedge\left(\LogFormB\Rightarrow\LogFormA\right)
\end{flalign*}
where $\LogVarA$ ranges over variables of any sort. Note that our logic does not contain primitive existential quantifications. This is because we need classical logic to interpret the axiom scheme of comprehension, and therefore we choose to work in a subset of intuitionistic logic corresponding to the target of G\"odel's negative translation. This is to be contrasted with the dialectica-like interpretations that perform an explicit negative translation from classical to intuitionistic logic, before giving a computational interpretation of the target of the translation.\par
We also define the notion of dependent formulas that will be useful to our formalization of the normalization proof. A 1-formula is a formula depending on first-order elements of the logic. For example, $\LogFormA\left(\LogTermA,\LogBoolA\right)\equiv\forall\LogTermListVarA\left(\LogTermApp{\LogTermA}{\LogTermListVarA}\in\LogPredVarA\Rightarrow\LogBoolA\right)$ is a 1-formula depending on a term $\LogTermA$ and a boolean $\LogBoolA$ (containing moreover a free variable $\LogPredVarA$). We avoid the capture of bound variables, so $\LogFormA\left(\LogTermApp{\LogTermVarA}{\LogTermListVarA},\LogBoolIn{\LogTermVarA}{\LogPredVarA}\right)$ is $\forall\LogTermListVarA'\left(\LogBoolIn{\LogTermApp{\LogTermApp{\LogTermVarA}{\LogTermListVarA}}{\LogTermListVarA'}}{\LogPredVarA}\Rightarrow\LogBoolIn{\LogTermVarA}{\LogPredVarA}\right)$. We also consider 2-formulas: formulas depending on 1-formulas. The only 2-formulas that we consider depend on one 1-formula which itself depends on one term. An example of such a 2-formula is $\LogFormA\left(\LogFormB\right)\equiv\forall\LogTermListVarA\left(\LogFormB\left(\LogTermApp{\LogTermVarA}{\LogTermListVarA}\right)\Rightarrow\LogBoolIn{\LogTermVar{\LogNatZ}}{\LogPredVarA}\right)$. Again, we avoid the capture of bound variables: if for example $\LogFormB\left(\LogTermA\right)\equiv\LogBoolIn{\LogTermApp{\LogTermA}{\LogTermListVarA}}{\LogPredVarA}\Rightarrow\LogBoolIn{\LogTermA}{\LogPredVarA}$, then $\LogFormA\left(\LogFormB\right)$ is $\forall\LogTermListVarA'\left(\left(\LogBoolIn{\LogTermApp{\LogTermApp{\LogTermVarA}{\LogTermListVarA'}}{\LogTermListVarA}}{\LogPredVarA}\Rightarrow\LogBoolIn{\LogTermApp{\LogTermVarA}{\LogTermListVarA'}}{\LogPredVarA}\right)\Rightarrow\LogBoolIn{\LogTermVar{\LogNatZ}}{\LogPredVarA}\right)$.\par
For each variable $\LogPredVarA$ of sort set we define the 1-formula $\LogPredVarForm{\LogPredVarA}\left(\LogTermA\right)\equiv\LogBoolIn{\LogTermA}{\LogPredVarA}$. We also define the 1-formula $\LogNormForm\left(\LogTermA\right)\equiv\LogNorm{\LogTermA}$. If $\LogFormA$ is a formula and $\LogPredVarA$ is a variable of sort set, then we write $\LogPredVarForm{\LogPredVarA}\mapsto\LogFormA$ for the 2-formula such that $\left(\LogPredVarForm{\LogPredVarA}\mapsto\LogFormA\right)\left(\LogFormB\right)$ is $\LogFormA$ where every atom of the form $\LogBoolIn{\LogTermA}{\LogPredVarA}$ has been replaced with $\LogFormB\left(\LogTermA\right)$.\par
The 2-formula $\LogRedCand{\LogFormA}$ says that the set of $\LogTermA$ such that $\LogFormA\left(\LogTermA\right)$ holds is a reducibility candidate:
\begin{equation*}
\LogRedCand{\LogFormA}\Def\left(\forall\LogTermListVarA\,\LogFormA\left(\LogTermApp{\LogTermVar{\LogNatZ}}{\LogTermListVarA}\right)\wedge\forall\LogTermVarA\left(\LogFormA\left(\LogTermVarA\right)\Rightarrow\LogNorm{\LogTermVarA}\right)\right)\wedge\forall\LogTermVarA\,\forall\LogTermVarB\,\forall\LogTermListVarA\left(\LogFormA\left(\LogTermApp{\LogTermSubst{\LogTermVarA}{\LogTermVarB}}{\LogTermListVarA}\right)\Rightarrow\LogFormA\left(\LogTermApp{\LogTermApp{\left(\LogTermLam{\LogTermVarA}\right)}{\LogTermListAbbrv{\LogTermVarB}}}{\LogTermListVarA}\right)\right)
\end{equation*}
Finally, to each type $\FTypeA$ of system F built from variables $\LogPredVarA$ of our logic we associate the 1-formula $\LogRC{\FTypeA}$ defined inductively as:
\begin{gather*}
\begin{align*}
\LogRC{\LogPredVarA}&\Def\LogPredVarForm{\LogPredVarA}
&
\LogRC{\FTypeA\to\FTypeB}\left(\LogTermA\right)&\Def\forall\LogTermVarA\left(\LogRC{\FTypeA}\left(\LogTermVarA\right)\Rightarrow\LogRC{\FTypeB}\left(\LogTermApp{\LogTermA}{\LogTermListAbbrv{\LogTermVarA}}\right)\right)
&
\end{align*}
\\
\LogRC{\forall\LogPredVarA\,\FTypeA}\left(\LogTermA\right)\Def\forall\LogPredVarA\left(\LogRedCand{\LogPredVarForm{\LogPredVarA}}\Rightarrow\LogRC{\FTypeA}\left(\LogTermA\right)\right)
\end{gather*}
The free variables of sort set in $\LogRC{\FTypeA}\left(\LogTermA\right)$ are exactly the free variables of $\FTypeA$.
\subsection{Interpreting normalization of system F}
This section contains an overview of our interpretation from a purely logical point of view, the computational interpretation itself will be given in section~\ref{Realizability}. Since our logic is first-order, we have to interpret the instantiation of a set variable with an arbitrary formula. In order to do that we first interpret the axiom scheme of comprehension with the BBC functional. If $\LogFormA$ is a 1-formula depending on a term and if $\LogPredVarA$ is a set variable that is not free in $\LogFormA\left(\LogTermA\right)$, then the corresponding instance of comprehension is:
\begin{equation*}
\exists\LogPredVarA\,\forall\LogTermVarA\left(\LogBoolIn{\LogTermVarA}{\LogPredVarA}\Leftrightarrow\LogFormA\left(\LogTermVarA\right)\right)
\end{equation*}
Then, using comprehension, we interpret the first-order equivalent of the elimination of second-order quantification. If $\LogFormA$ is a 2-formula, $\LogFormB$ is a 1-formula depending on a term and if $\LogPredVarA\notin\FV{\LogFormA}$ (meaning that $\LogPredVarA\notin\FV{\LogFormA\left(\LogFormC\right)}$ whenever $\LogPredVarA\notin\FV{\LogFormC}$), then this is:
\begin{equation*}
\forall\LogPredVarA\LogFormA\left(\LogPredVarForm{\LogPredVarA}\right)\Rightarrow\LogFormA\left(\LogFormB\right)
\end{equation*}
Interpreting this family of implications from the axiom scheme of comprehension requires the definition of a realizer by induction on $\LogFormA$. The interpretation of the instantiation of set variables with arbitrary formulas provides us with an interpretation of full second-order arithmetic. Building on this, we then interpret the formalization of lemma~\ref{NormRC} in our logic:
\begin{equation*}
\LogRedCand{\LogNormForm}
\end{equation*}
As a second step, we interpret the formalization of lemma~\ref{RCisRC}. If $\FV{\FTypeA}\subseteq\SetSuch{\LogPredVarA_0,\ldots,\LogPredVarA_{n-1}}{}$ then this is:
\begin{equation*}
\forall\LogPredVarA_0(\LogRedCand{\LogPredVarForm{\LogPredVarA_0}}\Rightarrow\ldots\Rightarrow\forall\LogPredVarA_{n-1}(\LogRedCand{\LogPredVarForm{\LogPredVarA_{n-1}}}\Rightarrow\LogRedCand{\LogRC{\FTypeA}})\ldots)
\end{equation*}
As a last step we interpret the formalization of lemma~\ref{FNorm}. If $\FTypeA_0,\ldots,\FTypeA_{m-1},\FTypeB$ are types such that $\FV{\FTypeA_0,\ldots,\FTypeA_{m-1},\FTypeB}\subseteq\SetSuch{\LogPredVarA_0,\ldots,\LogPredVarA_{n-1}}{}$ and if $\FTypeA_{m-1},\ldots,\FTypeA_0\Entails\LogTermA:\FTypeB$ is the conclusion of a valid typing derivation in system F, then this is:
\begin{multline*}
\forall\LogPredVarA_0(\LogRedCand{\LogPredVarForm{\LogPredVarA_0}}\Rightarrow\ldots\Rightarrow\forall\LogPredVarA_{n-1}(\LogRedCand{\LogPredVarForm{\LogPredVarA_{n-1}}}\\*
\Rightarrow\forall\LogTermVarA_{m-1}(\LogRC{\FTypeA_{m-1}}\left(\LogTermVarA_{m-1}\right)\Rightarrow\ldots\Rightarrow\forall\LogTermVarA_0(\LogRC{\FTypeA_0}\left(\LogTermVarA_0\right)\\*
\Rightarrow\LogRC{\FTypeB}\left(\LogTermSubst{\LogTermA}{\LogTermListAbbrv{\LogTermVarA_0,\ldots,\LogTermVarA_{m-1}}}\right))\ldots))\ldots)
\end{multline*}
The interpretation of the formula above provides a realizer of $\LogRC{\FTypeA}\left(\LogTermA\right)$ for each closed term $\LogTermA$ of closed type $\FTypeA$ in system F, from which we extract a bound on the number of reduction steps needed for reaching a normal form. Finally, we use this extracted bound to compute the normal form of $\LogTermA$ using primitive recursion.
\section{A simply-typed programming language with the BBC functional}
In this section, we define the target of our translation of system F: a simply-typed functional programming language that we call system $\LTbbc$. This language has product types, basic types for natural numbers, $\lambda$-terms and lists of $\lambda$-terms, primitive recursion on these basic types and the BBC functional. We also give a domain-theoretic denotational semantics for this programming language that is sound and computationally adequate.
\subsection{Syntax of system \texorpdfstring{$\LTbbc$}{lambda-T-bbc}}
We first define system $\LT$, and then extend it to system $\LTbbc$ by adding the BBC functional together with its reduction rule. The programming language system $\LT$ is an extension of G\"odel's system $T$ with types for $\lambda$-terms and lists of $\lambda$-terms, together with primitive recursion on these. The types of system $\LT$ are defined by the following grammar:
\begin{equation*}
\LTTypeA,\LTTypeB\GramDef\LTTypeNat\BarSep\LTTypeLam\BarSep\LTTypeLamList\BarSep\LTTypeA\to\LTTypeB\BarSep\LTTypeA\times\LTTypeB
\end{equation*}
where $\LTTypeNat$ is the type of natural numbers, $\LTTypeLam$ is the type of $\lambda$-terms, $\LTTypeLamList$ is the type of lists of $\lambda$-terms, $\LTTypeA\to\LTTypeB$ is the type of functions from $\LTTypeA$ to $\LTTypeB$ and $\LTTypeA\times\LTTypeB$ is the product type of $\LTTypeA$ and $\LTTypeB$. The syntax of system $\LT$ is given along with its typing rules in figure~\ref{TypingLT} and its reduction rules are given in figure~\ref{RedLT}.
\begin{figure}
\begin{gather*}
\begin{align*}
&\AXM{}
\UIM{\Gamma,\LTVarA:\LTTypeA\Entails\LTVarA:\LTTypeA}
\DP
&
&\AXM{\Gamma,\LTVarA:\LTTypeA\Entails\LTTermA:\LTTypeB}
\UIM{\Gamma\Entails\lambda\LTVarA.\LTTermA:\LTTypeA\to\LTTypeB}
\DP
&
&\AXM{\Gamma\Entails\LTTermA:\LTTypeA\to\LTTypeB}
\AXM{\Gamma\Entails\LTTermB:\LTTypeA}
\BIM{\Gamma\Entails\LTTermA\,\LTTermB:\LTTypeB}
\DP
&
\end{align*}
\\[5pt]
\begin{align*}
&\AXM{}
\RLM{\left(\LTConstA:\LTTypeA\right)\in\LTCst}\UIM{\Gamma\Entails\LTConstA:\LTTypeA}
\DP
&
&\AXM{\Gamma\Entails\LTTermA:\LTTypeA}
\AXM{\Gamma\Entails\LTTermB:\LTTypeB}
\BIM{\Gamma\Entails\LTPair{\LTTermA}{\LTTermB}:\LTTypeA\times\LTTypeB}
\DP
&
&\AXM{\Gamma\Entails\LTTermA:\LTTypeA\times\LTTypeB}
\UIM{\Gamma\Entails\LTProj_1\,\LTTermA:\LTTypeA}
\DP
&
&\AXM{\Gamma\Entails\LTTermA:\LTTypeA\times\LTTypeB}
\UIM{\Gamma\Entails\LTProj_2\,\LTTermA:\LTTypeB}
\DP
&
\end{align*}
\\
\begin{flalign*}
&\text{where $\LTCst$ is:}&
\end{flalign*}
\\
\begin{flalign*}
&\begin{gathered}
\LTNatZ:\LTTypeNat
\hfill
\LTNatS:\LTTypeNat\to\LTTypeNat
\\
\LTNatIt:\LTTypeA\to\left(\LTTypeA\to\LTTypeA\right)\to\LTTypeNat\to\LTTypeA
\end{gathered}
&
&\begin{gathered}
\LTLamVar:\LTTypeNat\to\LTTypeLam
\hfill
\LTLamAbs:\LTTypeLam\to\LTTypeLam
\hfill
\LTLamApp:\LTTypeLam\to\LTTypeLam\to\LTTypeLam
\\
\LTLamIt:\left(\LTTypeNat\to\LTTypeA\right)\to\left(\LTTypeA\to\LTTypeA\right)\to\left(\LTTypeA\to\LTTypeA\to\LTTypeA\right)\to\LTTypeLam\to\LTTypeA
\end{gathered}
\end{flalign*}
\\
\begin{align*}
\LTLamListNil&:\LTTypeLamList
&
\LTLamListCons&:\LTTypeLamList\to\LTTypeLam\to\LTTypeLamList
&
\LTLamListIt&:\LTTypeA\to\left(\LTTypeA\to\LTTypeLam\to\LTTypeA\right)\to\LTTypeLamList\to\LTTypeA
&
\end{align*}
\end{gather*}
\caption{Typing rules of system $\LT$}
\label{TypingLT}
\end{figure}
\begin{figure}
\begin{gather*}
\begin{align*}
&\AXM{}
\UIM{\left(\lambda\LTVarA.\LTTermA\right)\LTTermB\LTRed\LTTermA\Subst{\LTVarA}{\LTTermB}}
\DP
&
&\AXM{}
\UIM{\LTNatIt\,\LTTermA\,\LTTermB\,\LTNatZ\LTRed\LTTermA}
\DP
&
&\AXM{}
\UIM{\LTLamIt\,\LTTermA\,\LTTermB\,\LTTermC\left(\LTLamVar\,\LTValA\right)\LTRed\LTTermA\,\LTValA}
\DP
&
\end{align*}
\\[5pt]
\begin{align*}
&\AXM{}
\UIM{\LTProj_1\,\LTPair{\LTTermA}{\LTTermB}\LTRed\LTTermA}
\DP
&
&\AXM{}
\UIM{\LTNatIt\,\LTTermA\,\LTTermB\left(\LTNatS\,\LTValA\right)\LTRed\LTTermB\left(\LTNatIt\,\LTTermA\,\LTTermB\,\LTValA\right)}
\DP
&
&\AXM{}
\UIM{\LTLamIt\,\LTTermA\,\LTTermB\,\LTTermC\left(\LTLamAbs\,\LTValA\right)\LTRed\LTTermB\left(\LTLamIt\,\LTTermA\,\LTTermB\,\LTTermC\,\LTValA\right)}
\DP
&
\end{align*}
\\[5pt]
\begin{align*}
&\AXM{}
\UIM{\LTProj_2\,\LTPair{\LTTermA}{\LTTermB}\LTRed\LTTermB}
\DP
&
&\AXM{}
\UIM{\LTLamIt\,\LTTermA\,\LTTermB\,\LTTermC\left(\LTLamApp\,\LTValA\,\LTValB\right)\LTRed\LTTermC\left(\LTLamIt\,\LTTermA\,\LTTermB\,\LTTermC\,\LTValA\right)\left(\LTLamIt\,\LTTermA\,\LTTermB\,\LTTermC\,\LTValB\right)}
\DP
&
\end{align*}
\\[5pt]
\begin{align*}
&\AXM{}
\UIM{\LTLamListIt\,\LTTermA\,\LTTermB\,\LTLamListNil\LTRed\LTTermA}
\DP
&
&\AXM{}
\UIM{\LTLamListIt\,\LTTermA\,\LTTermB\left(\LTLamListCons\,\LTValA\,\LTValB\right)\LTRed\LTTermB\left(\LTLamListIt\,\LTTermA\,\LTTermB\,\LTValA\right)\LTValB}
\DP
&
&\AXM{\LTTermA\LTRed\LTTermB}
\UIM{\LTEnvA{\LTTermA}\LTRed\LTEnvA{\LTTermB}}
\DP
&
\end{align*}
\\
\begin{flalign*}
&\text{where:}&
\end{flalign*}
\\
\LTValA,\LTValB\GramDef\LTNatZ\BarSep\LTNatS\,\LTValA\BarSep\LTLamVar\,\LTValA\BarSep\LTLamAbs\,\LTValA\BarSep\LTLamApp\,\LTValA\,\LTValB\BarSep\LTLamListNil\BarSep\LTLamListCons\,\LTValA\,\LTValB\\
\begin{aligned}
\LTEnvA{\_}\GramDef\_&\BarSep\LTEnvA{\_}\,\LTTermA\BarSep\LTProj_1\,\LTEnvA{\_}\BarSep\LTProj_2\,\LTEnvA{\_}\BarSep\LTNatIt\,\LTTermA\,\LTTermB\,\LTEnvA{\_}\BarSep\LTLamIt\,\LTTermA\,\LTTermB\,\LTTermC\,\LTEnvA{\_}\BarSep\LTLamListIt\,\LTTermA\,\LTTermB\,\LTEnvA{\_}\\
&\BarSep\LTNatS\,\LTEnvA{\_}\BarSep\LTLamVar\,\LTEnvA{\_}\BarSep\LTLamAbs\,\LTEnvA{\_}\BarSep\LTLamApp\,\LTEnvA{\_}\BarSep\LTLamApp\,\LTValA\,\LTEnvA{\_}\BarSep\LTLamListCons\,\LTEnvA{\_}\BarSep\LTLamListCons\,\LTValA\,\LTEnvA{\_}\\
\end{aligned}
\end{gather*}
\caption{Reductions in system $\LT$}
\label{RedLT}
\end{figure}
Note that we write lists with the most recent element at the end since the addition of an element to a list corresponds to the extension of an applicative context with one more argument. We use iterators rather than recursors for simplicity, recursors can nevertheless be defined using iterators and pairs. Our operational semantics involves values and evaluation contexts because we will interpret system $\LTbbc$ in a domain semantics that allows a priori non terminating computations, even though one can prove that all programs of system $\LTbbc$ terminate. This is the reason why our iterators first reduce their last argument to a value, so if this argument does not terminate then the iterator does not terminate either. Using iterators we define a generalized application $\LTLamListApp\Def\lambda\LTVarA.\LTLamListIt\,\LTVarA\,\LTLamApp:\LTTypeLam\to\LTTypeLamList\to\LTTypeLam$ such that if $\LTTermA,\LTTermB_0,\ldots,\LTTermB_{n-1}\in\LTTypeLam$:
\begin{equation*}
\LTLamListApp\,\LTTermA\left(\LTLamListCons\left(\LTLamListCons\left(\ldots\LTLamListCons\,\LTLamListNil\,\LTTermB_0\ldots\right)\LTTermB_{n-1}\right)\right)\LTRed^*\LTLamApp\left(\ldots\left(\LTLamApp\,\LTTermA\,\LTValA_0\right)\ldots\right)\LTValA_{n-1}
\end{equation*}
where $\LTTermB_i\LTRed^*\LTValA_i$. We can also define a shift operation on lists of terms $\LTLamListShift:\LTTypeLamList\to\LTTypeLamList$ implementing the operation $\FormalShift{}$ described in section~\ref{Subst}. Finally, we can define a substitution operation $\LTLamSubst:\LTTypeLam\to\LTTypeNat\to\LTTypeLamList\to\LTTypeLam$ implementing the operation $\_\FormalSubst{\_}{\_}$ described in section~\ref{Subst}.\par
We now extend $\LT$ with the BBC functional that interprets the axiom scheme of comprehension on $\lambda$-terms. The lack of a canonical ordering on $\lambda$-terms is our main motivation for choosing the BBC functional rather than modified bar recursion. The BBC functional builds a partial function on $\LTTypeLam$ piece by piece, so we need a type for such partial functions. We encode them in system $\LT$ as total functions to the type of pairs of a natural number and a value, with the convention that the function is defined when the natural number reduces to $\LTNatZ$. The type of partial functions to $\LTTypeA$ is therefore defined as:
\begin{equation*}
\LTTypeFinFun{\LTTypeA}\Def\LTTypeLam\to\LTTypeNat\times\LTTypeA
\end{equation*}
with the convention that $\LTTermA:\LTTypeFinFun{\LTTypeA}$ is defined at $\LTTermB:\LTTypeLam$ if $\LTProj_1\left(\LTTermA\,\LTTermB\right)\LTRed^*\LTNatZ$, in which case its value is $\LTProj_2\left(\LTTermA\,\LTTermB\right)$, and undefined otherwise.\par
In order to define the empty function we need to have a canonical element $\LTcan_\LTTypeA$ at every type $\LTTypeA$, defined inductively as follows:
\begin{flalign*}
&\LTcan_\LTTypeNat\Def\LTNatZ
&
&\LTcan_\LTTypeLam\Def\LTLamVar\,\LTNatZ
&
&\LTcan_\LTTypeLamList\Def\LTLamListNil
&
&\LTcan_{\LTTypeA\to\LTTypeB}\Def\lambda\_.\LTcan_\LTTypeB
&
&\LTcan_{\LTTypeA\times\LTTypeB}\Def\LTPair{\LTcan_\LTTypeA}{\LTcan_\LTTypeB}
\end{flalign*}
The strict partial function with empty support is defined as follows:
\begin{equation*}
\LTFinFunEmpty\Def\LTLamIt\left(\lambda\_.\LTPair{\LTNatS\,\LTNatZ}{\LTcan_\LTTypeA}\right)\left(\lambda\_.\LTPair{\LTNatS\,\LTNatZ}{\LTcan_\LTTypeA}\right)\left(\lambda\_\_.\LTPair{\LTNatS\,\LTNatZ}{\LTcan_\LTTypeA}\right):\LTTypeFinFun{\LTTypeA}
\end{equation*}
This partial function is such that $\LTFinFunEmpty\,\LTTermA\LTRed^*\LTPair{\LTNatS\,\LTNatZ}{\LTcan_\LTTypeA}$ for any (terminating) $\LTTermA:\LTTypeLam$, that is, $\LTFinFunEmpty$ is the everywhere undefined function. Again, even though all programs of system $\LTbbc$ terminate, we have to take into account non terminating arguments so it is important that $\LTFinFunEmpty$ is strict, that is, $\LTFinFunEmpty\,\LTTermA$ terminates only if $\LTTermA$ terminates. The strictness of $\LTFinFunEmpty$ will be necessary in the proof of correctness of the BBC functional in section~\ref{Comprehension}.\par
The extension/overwrite of a partial function $\LTTermA:\LTTypeFinFun{\LTTypeA}$ with a value $\LTTermC:\LTTypeA$ at input $\LTTermB:\LTTypeLam$ requires decidability on type $\LTTypeLam$, that is, the existence of a term $\LTTermD:\LTTypeLam\to\LTTypeLam\to\LTTypeNat$ such that for any $\LTTermE,\LTTermF:\LTTypeLam$, $\LTTermD\,\LTTermE\,\LTTermF\LTRed^*\LTNatZ$ if and only if $\LTTermE\LTRed^*\LTValA$ and $\LTTermF\LTRed^*\LTValA$ for some $\LTValA$. Such a term can indeed be defined in system $\LT$ using $\LTNatIt$, $\LTLamIt$ and pairs, so we can define $\LTFinFunExtend{\LTTermA}{\LTTermB}{\LTTermC}:\LTTypeFinFun{\LTTypeA}$ that behaves on terminating arguments as follows:
\begin{equation*}
\left(\LTFinFunExtend{\LTTermA}{\LTTermB}{\LTTermC}\right)\LTTermD\LTRed^*\left\{\begin{aligned}&\LTPair{\LTNatZ}{\LTTermC}&&\text{if }\LTTermB\LTRed^*\LTValA\text{ and }\LTTermD\LTRed^*\LTValA\text{ for some }\LTValA\\&\LTTermA\,\LTTermD&&\text{otherwise}\end{aligned}\right.
\end{equation*}\par
The BBC functional also combines a partial function $\LTTermA:\LTTypeFinFun{\LTTypeA}$ with a total function $\LTTermB:\LTTypeLam\to\LTTypeA$ to form the total function $\LTFinFunComplete{\LTTermA}{\LTTermB}:\LTTypeLam\to\LTTypeA$ that takes values from $\LTTermA$ when they are defined and values from $\LTTermB$ everywhere else. This combination is defined as:
\begin{equation*}
\LTFinFunComplete{\LTTermA}{\LTTermB}\Def\lambda\LTVarA.\LTNatIt\left(\LTProj_2\left(\LTTermA\,\LTVarA\right)\right)\left(\lambda\_.\LTTermB\,\LTVarA\right)\left(\LTProj_1\left(\LTTermA\,\LTVarA\right)\right)
\end{equation*}
and behaves as follows on terminating arguments:
\begin{equation*}
\left(\LTFinFunComplete{\LTTermA}{\LTTermB}\right)\LTTermC\LTRed^*\left\{\begin{aligned}&\LTProj_2\left(\LTTermA\,\LTTermC\right)&&\text{if }\LTProj_1\left(\LTTermA\,\LTTermC\right)\LTRed^*\LTNatZ\\&\LTTermB\,\LTTermC&&\text{otherwise}\end{aligned}\right.
\end{equation*}
With these new definitions we can now introduce the BBC functional:
\begin{equation*}
\LTBBC:\left(\left(\LTTypeA\to\LTTypeNat\right)\to\LTTypeA\right)\to\left(\left(\LTTypeLam\to\LTTypeA\right)\to\LTTypeNat\right)\to\LTTypeFinFun{\LTTypeA}\to\LTTypeNat
\end{equation*}
together with its reduction rule:
\begin{equation*}
\LTBBC\,\LTTermA\,\LTTermB\,\LTTermC\LTRed\LTTermB\left(\LTFinFunComplete{\LTTermC}{\lambda\LTVarB.\LTTermA\left(\lambda\LTVarC.\LTBBC\,\LTTermA\,\LTTermB\left(\LTFinFunExtend{\LTTermC}{\LTVarB}{\LTVarC}\right)\right)}\right)
\end{equation*}
System $\LTbbc$ is obtained by extending system $\LT$ with the constant $\LTBBC$ together with its reduction rule.
\subsection{Continuous semantics of system \texorpdfstring{$\LTbbc$}{lambda-T-bbc}}
\label{Semantics}
Interpreting the language in a model containing non-computable elements is a convenient way of proving correctness of the BBC functional. We follow this route and consider realizers that are elements of a continuous model of system $\LTbbc$ rather than mere programs. Since system $\LTbbc$ can be seen as a subset of PCF where recursion is restricted to primitive recursion and the BBC functional, it is natural to consider a domain-theoretic semantics. More precisely, we define a denotational semantics of system $\LTbbc$ in complete partial orders. We recall some basic definitions:
\begin{defi}[cpo]
A partial order $\left(\CPOA,\leq\right)$ is a complete partial order (cpo) if:
\begin{itemize}
\item$\CPOA$ has a least element $\CPObot$
\item Every directed subset $\CPOdir$ of $\CPOA$ has a least upper bound $\CPOsup\CPOdir$, where $\CPOdir\subseteq\CPOA$ is directed if it is non-empty and:
\begin{equation*}
\forall\CPOelA\in\CPOdir\,\forall\CPOelB\in\CPOdir\,\exists\CPOelC\in\CPOdir\left(\CPOelA\leq\CPOelC\wedge\CPOelB\leq\CPOelC\right)
\end{equation*}
\end{itemize}
\end{defi}
\begin{defi}[continuous function]
If $\left(\CPOA,\leq\right)$ and $\left(\CPOB,\leq\right)$ are cpos, a function $\CPOelA:\CPOA\to\CPOB$ is continuous if for every directed subset $\CPOdir$ of $\CPOA$, $\CPOelA\left(\CPOdir\right)$ is directed and:
\begin{equation*}
\CPOelA\left(\CPOsup\CPOdir\right)=\CPOsup\CPOelA\left(\CPOdir\right)
\end{equation*}
\end{defi}
\begin{defi}[product of cpos]
If $\left(\CPOA,\leq\right)$ and $\left(\CPOB,\leq\right)$ are cpos, then $\CPOA\times\CPOB$ is a cpo for the pointwise ordering:
\begin{equation*}
\left(\CPOelA,\CPOelB\right)\leq\left(\CPOelA',\CPOelB'\right)\quad\Longleftrightarrow\quad\CPOelA\leq\CPOelA'\wedge\CPOelB\leq\CPOelB'
\end{equation*}
\end{defi}
The projection functions from $\CPOA\times\CPOB$ to $\CPOA$ and $\CPOB$ will be written $\CPOproj_1$ and $\CPOproj_2$.
\begin{defi}[cpo of continuous functions]
If $\left(\CPOA,\leq\right)$ and $\left(\CPOB,\leq\right)$ are cpos, then the set of continuous functions from $\CPOA$ to $\CPOB$ is a cpo for the pointwise ordering:
\begin{equation*}
\CPOelA\leq\CPOelA'\quad\Longleftrightarrow\quad\forall\CPOelB\in\CPOA,\,\CPOelA\left(\CPOelB\right)\leq\CPOelA'\left(\CPOelB\right)
\end{equation*}
\end{defi}
\begin{defi}[flat cpo]
If $X$ is a set, then $X_\bot=X\cup\SetSuch{\CPObot}{}$ is a cpo for the following ordering:
\begin{equation*}
\CPOelA\leq\CPOelB\quad\Longleftrightarrow\quad\CPOelA=\CPOelB\vee\CPOelA=\bot
\end{equation*}
\end{defi}
The category of cpos and continuous functions is cartesian closed and provides a sound and computationally adequate semantics for PCF where the type of natural numbers is interpreted as $\mathbb{N}_\CPObot$, see e.g.~\cite{AmadioCurien}. We extend this semantics with a type of $\lambda$-terms interpreted as $\Lambda_\CPObot$, a type of lists of $\lambda$-terms interpreted as $\left(\Lambda^*\right)_\CPObot$ (where $\Lambda^*$ denotes the set of finite sequences of $\lambda$-terms) and product types interpreted with the categorical product of cpos. All the constants of system $\LTbbc$ can be interpreted with fixpoints and basic operations on flat domains, therefore the category of cpos and continuous functions forms a model of system $\LTbbc$.\par
We fix some notations. If $\LTTypeA$ is a type of system $\LTbbc$, then $\CPOinterp{\LTTypeA}$ denotes the cpo interpreting $\LTTypeA$. If $\LTVarA_0:\LTTypeA_0,\ldots,\LTVarA_{n-1}:\LTTypeA_{n-1}\Entails\LTTermA:\LTTypeB$  is the conclusion of a typing derivation in system $\LTbbc$ and if $\ValuatA$ is a valuation such that $\ValuatA\left(\LTVarA_i\right)\in\CPOinterp{\LTTypeA_i}$ for each $i$, then $\CPOinterp{\LTTermA}_\ValuatA\in\CPOinterp{\LTTypeB}$ is the denotation of $\LTTermA$ with valuation $\ValuatA$. The category of cpos and continuous functions provides a sound and computationally adequate model for system $\LTbbc$:
\begin{lem}
If $\LTTermA\LTRed\LTTermB$ in system $\LTbbc$ and if $\ValuatA$ is a valuation then:
\begin{equation*}
\CPOinterp{\LTTermA}_\ValuatA=\CPOinterp{\LTTermB}_\ValuatA
\end{equation*}
Moreover, if $\LTTermA:\LTTypeNat$ is a closed term and if $\CPOinterp{\LTTermA}$ is some $\RealNatA\in\mathbb{N}$ then:
\begin{equation*}
\LTTermA\LTRed^*\LTNatS^\RealNatA\,\LTNatZ
\end{equation*}
\end{lem}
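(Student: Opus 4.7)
The plan is to split the statement into its two components: soundness of the reduction and computational adequacy at type $\LTTypeNat$.

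For soundness, I would proceed by case analysis on the reduction rule $\LTTermA \LTRed \LTTermB$, with the congruence rule handled by an outer induction on the evaluation context $\LTEnvA{\_}$ using the fact that interpretation is compositional and that the semantics of every constructor and eliminator is a continuous function of its subterms. The only subtle base case is $\beta$-reduction, which requires a substitution lemma stating $\CPOinterp{\LTTermA\Subst{\LTVarA}{\LTTermB}}_\ValuatA = \CPOinterp{\LTTermA}_{\ValuatA[\LTVarA\mapsto\CPOinterp{\LTTermB}_\ValuatA]}$; this is proved by a straightforward induction on $\LTTermA$. All iterator reduction rules (for $\LTNatIt$, $\LTLamIt$, $\LTLamListIt$) and the pair-projection rules hold by direct inspection of the chosen fixpoint interpretations on the corresponding flat cpos, and the BBC reduction rule similarly holds by unfolding its fixpoint definition.

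For computational adequacy, the classical route via a logical (formal approximation) relation seems cleanest. I would define, for each type $\LTTypeA$ of system $\LTbbc$, a relation $\triangleleft_\LTTypeA\subseteq\CPOinterp{\LTTypeA}\times\SetSuch{\LTTermA}{\emptyset\Entails\LTTermA:\LTTypeA}$ by induction on $\LTTypeA$: at the three flat base types $\LTTypeNat$, $\LTTypeLam$, $\LTTypeLamList$, declare $\CPOelA\triangleleft\LTTermA$ to hold iff $\CPOelA=\CPObot$, or $\CPOelA$ is a value-denotation and $\LTTermA$ reduces to the corresponding syntactic value $\LTValA$; at product types, componentwise; at function types, $\CPOelA\triangleleft_{\LTTypeA\to\LTTypeB}\LTTermA$ iff for all $\CPOelB\triangleleft_\LTTypeA\LTTermB$ we have $\CPOelA(\CPOelB)\triangleleft_\LTTypeB\LTTermA\,\LTTermB$. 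Two properties of each $\triangleleft_\LTTypeA$ are then key: it is downward closed in its first argument, and it is admissible, meaning closed under suprema of directed sets of denotations (admissibility is the reason this style of relation works; it is proved by simultaneous induction on types, using the fact that continuous functions preserve directed suprema). It is also closed under head expansion with respect to $\LTRed$, by soundness.

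The fundamental lemma is then: for every derivation $\LTVarA_0:\LTTypeA_0,\ldots,\LTVarA_{n-1}:\LTTypeA_{n-1}\Entails\LTTermA:\LTTypeB$ and every substitution of closed terms $\LTTermB_i$ with $\ValuatA(\LTVarA_i)\triangleleft_{\LTTypeA_i}\LTTermB_i$, one has $\CPOinterp{\LTTermA}_\ValuatA\triangleleft_\LTTypeB\LTTermA\Subst{\LTVarA_0,\ldots,\LTVarA_{n-1}}{\LTTermB_0,\ldots,\LTTermB_{n-1}}$. This is proved by induction on the typing derivation, the constant cases reducing to a verification that each constant's denotation is related to the constant itself. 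The recursive constants $\LTNatIt$, $\LTLamIt$, $\LTLamListIt$ and $\LTBBC$ are handled by fixpoint induction: admissibility ensures that the relation is preserved at the supremum, and each finite approximant is handled by unfolding its defining equation together with head-expansion closure. Specializing the fundamental lemma to a closed $\LTTermA:\LTTypeNat$ with $\CPOinterp{\LTTermA}=\RealNatA$ forces $\LTTermA\LTRed^*\LTNatS^\RealNatA\,\LTNatZ$ by the definition of $\triangleleft_\LTTypeNat$.

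The main obstacle will be the $\LTBBC$ case of the fundamental lemma, since admissibility must interact correctly with the intricate reduction rule of the BBC functional, and the relation at type $\LTTypeFinFun{\LTTypeA}=\LTTypeLam\to\LTTypeNat\times\LTTypeA$ has to be strong enough to track both the ``defined'' flag and the stored value; once admissibility and head-expansion closure are set up cleanly for this compound type, the fixpoint-induction argument proceeds as for $\LTNatIt$.
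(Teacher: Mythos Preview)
Your proposal is correct and follows precisely the standard logical-relations approach to soundness and computational adequacy for PCF-like languages that the paper appeals to; the paper itself does not spell out a proof but simply refers to these standard techniques (citing e.g.\ Amadio--Curien). Your sketch is thus a faithful elaboration of what the paper leaves implicit, including the correct treatment of the $\LTBBC$ constant via admissibility and fixpoint induction.
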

These results are proved using standard techniques for continuous models of PCF, see e.g.~\cite{AmadioCurien}. In system $\LTbbc$, computational adequacy holds for every basic type but we only need it on the type $\LTTypeNat$ of natural numbers. Finally, we stress that the BBC functional is a total element in this model, and therefore system $\LTbbc$ is a total language: all computations terminate. In particular, if $\LTTermA$ is a closed term in system $\LTbbc$ then $\CPOinterp{\LTTermA}\neq\CPObot$. We do not prove totality of the BBC functional here but the proof is a straightforward simplification of its proof of adequacy (lemma~\ref{RealBarRec}).\par
Finally, we mention a result that will be useful for the proof of adequacy of the BBC functional and is a consequence of the properties of cpos:
\begin{lem}
\label{CPOCont}
Write $\CPOA^{X_\CPObot}$ for the cpo of continuous functions from $X_\CPObot$ to $\CPOA$. If $\CPOelA$ is a continuous function from $\CPOA^{X_\CPObot}$ to $Y_\CPObot$ and if $\CPOelB\in\CPOA^{X_\CPObot}$ is such that $\CPOelA\left(\CPOelB\right)\neq\CPObot$ and $\CPOelB\left(\CPObot\right)=\CPObot$, then there exists a finite set $F\subseteq X$ such that:
\begin{equation*}
\forall\CPOelB'\in\CPOA^{X_\CPObot}\left(\forall\CPOelC\in F\left(\CPOelB'\left(\CPOelC\right)=\CPOelB\left(\CPOelC\right)\right)\Rightarrow\CPOelA\left(\CPOelB'\right)=\CPOelA\left(\CPOelB\right)\right)
\end{equation*}
\end{lem}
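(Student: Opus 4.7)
The plan is to express $\CPOelB$ as a directed supremum of its ``finite approximations'' in $\CPOA^{X_\CPObot}$ and exploit continuity of $\CPOelA$ together with flatness of $Y_\CPObot$.

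For each finite $F\subseteq X$, define $\CPOelB_F:X_\CPObot\to\CPOA$ by letting $\CPOelB_F(\CPOelC)=\CPOelB(\CPOelC)$ when $\CPOelC\in F$ and $\CPOelB_F(\CPOelC)=\CPObot$ otherwise, so in particular $\CPOelB_F(\CPObot)=\CPObot$. Since $X_\CPObot$ is flat, its only directed subsets are singletons and two-element sets of the form $\{\CPObot,\CPOelC\}$, so continuity of $\CPOelB_F$ reduces to monotonicity, which is clear from $\CPOelB_F(\CPObot)=\CPObot$. The family $\{\CPOelB_F\}_F$ is directed in $\CPOA^{X_\CPObot}$ because $\CPOelB_{F_1},\CPOelB_{F_2}\leq\CPOelB_{F_1\cup F_2}$, and its pointwise supremum equals $\CPOelB$: at $\CPObot$ one uses the hypothesis $\CPOelB(\CPObot)=\CPObot$, and at any $\CPOelC\in X$ one has $\CPOelB_{\{\CPOelC\}}(\CPOelC)=\CPOelB(\CPOelC)$.

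Next I apply continuity of $\CPOelA$ to obtain $\CPOelA(\CPOelB)=\CPOsup_F\CPOelA(\CPOelB_F)$ in $Y_\CPObot$. Since $Y_\CPObot$ is flat, every non-bottom element is maximal, so a directed set whose supremum is not $\CPObot$ must already contain that supremum. As $\CPOelA(\CPOelB)\neq\CPObot$ by hypothesis, I obtain a finite $F\subseteq X$ with $\CPOelA(\CPOelB_F)=\CPOelA(\CPOelB)$.

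I claim this $F$ is the one required. Given $\CPOelB'\in\CPOA^{X_\CPObot}$ that agrees with $\CPOelB$ at every $\CPOelC\in F$, I check pointwise that $\CPOelB_F\leq\CPOelB'$: on $F$ both sides coincide by assumption, while at $\CPObot$ and at points of $X\setminus F$ the left side is $\CPObot$. Monotonicity of $\CPOelA$ gives $\CPOelA(\CPOelB_F)\leq\CPOelA(\CPOelB')$, and flatness of $Y_\CPObot$ together with $\CPOelA(\CPOelB_F)=\CPOelA(\CPOelB)\neq\CPObot$ forces $\CPOelA(\CPOelB')=\CPOelA(\CPOelB)$. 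I do not expect a genuine obstacle; the only subtle use of the hypothesis $\CPOelB(\CPObot)=\CPObot$ is in verifying that the supremum of the $\CPOelB_F$ really is $\CPOelB$, after which the argument is a standard combination of continuity and flatness.
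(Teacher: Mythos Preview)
Your proof is correct and follows essentially the same approach as the paper: define the finite approximations $\CPOelB_F$, use continuity of $\CPOelA$ and flatness of $Y_\CPObot$ to find an $F$ with $\CPOelA(\CPOelB_F)=\CPOelA(\CPOelB)$, then conclude by monotonicity for any $\CPOelB'$ agreeing with $\CPOelB$ on $F$. Your write-up is in fact slightly more explicit than the paper's in justifying continuity of $\CPOelB_F$ and pinpointing where the hypothesis $\CPOelB(\CPObot)=\CPObot$ is used.
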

\begin{proof}
Define for $F$ finite subset of $X$ the continuous function:
\begin{equation*}
\CPOelB_F\left(\CPOelC\right)=\left\{\begin{aligned}&\CPOelB\left(\CPOelC\right)&&\text{ if }\CPOelC\in F\\&\CPObot&&\text{ otherwise}\end{aligned}\right.
\end{equation*}
Then $\SetSuch{\CPOelB_F}{F\subseteq X\text{ finite}}$ is directed and:
\begin{equation*}
\CPOelB=\CPOsup\SetSuch{\CPOelB_F}{F\subseteq X\text{ finite}}
\end{equation*}
so the continuity of $\CPOelA$ implies that:
\begin{equation*}
\CPOelA\left(\CPOelB\right)=\CPOsup\SetSuch{\CPOelA\left(\CPOelB_F\right)}{F\subseteq X\text{ finite}}
\end{equation*}
By definition of the order on $Y_\CPObot$, this means that there must exist some finite $F\subseteq X$ such that $\CPOelA\left(\CPOelB_F\right)=\CPOelA\left(\CPOelB\right)$. If $\CPOelB'$ is such that $\CPOelB'\left(\CPOelC\right)=\CPOelB\left(\CPOelC\right)$ for every $\CPOelC\in F$, then $\CPOelB'\geq\CPOelB_F$ so $\CPOelA\left(\CPOelB'\right)\geq\CPOelA\left(\CPOelB_F\right)=\CPOelA\left(\CPOelB\right)$. Finally, since $\CPOelA\left(\CPOelB\right)\neq\CPObot$ we obtain $\CPOelA\left(\CPOelB'\right)=\CPOelA\left(\CPOelB\right)$.
\end{proof}
\section{Realizability}
\label{Realizability}
This section contains the main contribution of our work: a translation of system F into system $\LTbbc$ through a bar recursive interpretation of second-order arithmetic. Our realizability model follows the lines of Kreisel's modified realizability. A plain Dialectica interpretation of our logic would not be possible because the interpretation of contraction ($\LogFormA\Rightarrow\LogFormA\wedge\LogFormA$) requires the decidability of quantifier-free formulas, which we do not have in our logic: $\LogBoolIn{\LogTermA}{\RealPredA}$ is undecidable when $\RealPredA$ is the set of normalizing terms for example. However, the Diller-Nahm interpretation~\cite{DillerNahm} circumvents this difficulty and provides a finite set of potential witnesses with the property that one of them is correct. Since our translation relies on the extraction of an upper bound, taking the maximum of the set of potential witnesses would give an alternative translation of system F into system $\LTbbc$ that we plan to investigate.\par
We first define a syntactic mapping from our logic into system $\LTbbc$ and the realizability values of formulas. Then we show how we interpret classical logic, the axiom scheme of comprehension and the instantiation of a set variable with an arbitrary formula. Finally, we define the interpretation of the normalization proof of section~\ref{NormProof} and derive our translation from it.
\subsection{Mapping the logic into system \texorpdfstring{$\LTbbc$}{lambda-T-bbc}}
\label{Mapping}
Our interpretation is in the style of Kreisel's modified realizability in which realizers are typed. In our setting, we associate to each formula $\LogFormA$ a type $\LTinterp{\LogFormA}$ of system $\LTbbc$, so that realizers of $\LogFormA$ are elements of the cpo $\CPOinterp{\LTinterp{A}}$ interpreting $\LTinterp{\LogFormA}$. Moreover, our realizers will manipulate natural numbers, terms and lists of terms of our logic so we also associate to each element $\LogNatA$ (respectively $\LogTermA$, $\LogTermListA$) of the logic a program $\LTinterp{\LogNatA}:\LTTypeNat$ (respectively $\LTinterp{\LogTermA}:\LTTypeLam$, $\LTinterp{\LogTermListA}:\LTTypeLamList$).\par
The mapping $\LTinterp{\_}$ on formulas is defined as follows:
\begin{gather*}
\begin{align*}
&\LTinterp{\left(\LogFormA\Rightarrow\LogFormB\right)}=\LTinterp{\LogFormA}\to\LTinterp{\LogFormB}
&
&\LTinterp{\left(\LogFormA\wedge\LogFormB\right)}=\LTinterp{\LogFormA}\times\LTinterp{\LogFormB}
&
&\LTinterp{\left(\forall\LogPredVarA\,\LogFormA\right)}=\LTinterp{\left(\forall\LogBoolVarA\,\LogFormA\right)}=\LTinterp{\LogFormA}
&
\end{align*}
\\
\begin{align*}
&\LTinterp{\LogBoolA}=\LTTypeNat
&
&\LTinterp{\left(\forall\LogNatVarA\,\LogFormA\right)}=\LTTypeNat\to\LTinterp{\LogFormA}
&
&\LTinterp{\left(\forall\LogTermVarA\,\LogFormA\right)}=\LTTypeLam\to\LTinterp{\LogFormA}
&
&\LTinterp{\left(\forall\LogTermListVarA\,\LogFormA\right)}=\LTTypeLamList\to\LTinterp{\LogFormA}
&
\end{align*}
\end{gather*}
Atomic formulas are mapped to the type $\LTTypeNat$ of natural numbers because we want to extract natural numbers (bounds on the numbers of reduction steps for reaching a normal form) from proofs in classical logic. We perform the standard technique of defining the set of realizers of the false formula as a well-chosen subset of the natural numbers. This technique was already used in~\cite{BerardiBezemCoquand} and is the computational counterpart of Friedman's $A$-translation. The sorts of the logic are divided in two groups. The sorts of natural numbers, terms and lists of terms called computational: a realizer of a quantification on a computational sort takes an element of that sort as input and builds a realizer of the instantiation of the formula with that element. Conversely, the sorts of sets and booleans are not computational: a realizer of a quantification on a non-computational sort must be uniform, in the sense that it must realize all the instantiations regardless of the element the formula is instantiated with.\par
Since the type associated to a formula does not depend on the particular first-order elements in the formula, the type associated to an instance of a 1-formula $\LogFormA\left(\_,\ldots,\_\right)$ does not depend on the instance and will simply be written $\LTinterp{\LogFormA}$. On the other hand, the type associated to a 2-formula depends on its particular instance.\par
We now define the mapping from elements of a computational sort in our logic to system $\LTbbc$ programs of the corresponding type. For simplicity and without loss of generality we suppose that the variables $\LogNatVarA$, $\LogTermVarA$ and $\LogTermListVarA$ of the logic are also variables of system $\LTbbc$ with respective types $\LTTypeNat$, $\LTTypeLam$ and $\LTTypeLamList$. A first-order element $\LogNatA$, $\LogTermA$ or $\LogTermListA$ of the logic is then mapped to a program $\LTinterp{\LogNatA}$, $\LTinterp{\LogTermA}$ or $\LTinterp{\LogTermListA}$ with the same set of variables as follows:
\begin{gather*}
\begin{flalign*}
&\LTinterp{\LogNatVarA}=\LogNatVarA
&
&\LTinterp{\LogTermVarA}=\LogTermVarA
&
&\LTinterp{\LogTermListVarA}=\LogTermListVarA
&
&\LTinterp{\LogNatZ}=\LTNatZ
&
&\LTinterp{\left(\LogNatS{\LogNatA}\right)}=\LTNatS\,\LTinterp{\LogNatA}
&
&\LTinterp{\LogTermVar{\LogNatA}}=\LTLamVar\,\LTinterp{\LogNatA}
&
&\LTinterp{\left(\LogTermLam{\LogTermA}\right)}=\LTLamAbs\,\LTinterp{\LogTermA}
\end{flalign*}
\\
\begin{flalign*}
&\LTinterp{\left(\LogTermApp{\LogTermA}{\LogTermListA}\right)}=\LTLamListApp\,\LTinterp{\LogTermA}\,\LTinterp{\LogTermListA}
&
&\LTinterp{\LogTermListNil}=\LTLamListNil
&
&\LTinterp{\LogTermListCons{\LogTermListA}{\LogTermA}}=\LTLamListCons\,\LTinterp{\LogTermListA}\,\LTinterp{\LogTermA}
&
&\LTinterp{\left(\LogTermSubst{\LogTermA}{\LogTermListA}\right)}=\LTLamSubst\,\LTinterp{\LogTermA}\,\LTNatZ\,\LTinterp{\LogTermListA}
\end{flalign*}
\end{gather*}\par
\subsection{Realizability values}
We now define the realizability model that will ensure the correctness of our translation from system F to system $\LTbbc$. We define for each formula $\LogFormA$ the set $\RealVal{\LogFormA}\subseteq\CPOinterp{\LTinterp{\LogFormA}}$ of realizers of $\LogFormA$, where $\LTinterp{\_}$ is the mapping from formulas to types of system $\LTbbc$ defined in section~\ref{Mapping} and $\CPOinterp{\_}$ is the interpretation of system $\LTbbc$ in cpos defined in section~\ref{Semantics}.\par
Because $\LogFormA$ may contain free variables, its realizability value $\RealVal{\LogFormA}$ depends on a valuation, that is, a function $\ValuatA$ on the free variables of $\LogFormA$ such that:
\begin{align*}
&\ValuatA\left(\LogNatVarA\right)\in\mathbb{N}
&
&\ValuatA\left(\LogTermVarA\right)\in\Lambda
&
&\ValuatA\left(\LogTermListVarA\right)\in\Lambda^*
&
&\ValuatA\left(\LogPredVarA\right)\in\mathcal{P}\left(\Lambda\right)
&
&\ValuatA\left(\LogBoolVarA\right)\in\SetSuch{\RealBoolT;\RealBoolF}{}
\end{align*}
where $\Lambda^*$ denotes the set of finite sequences of $\lambda$-terms. Since $\mathbb{N}\subseteq\mathbb{N}_\CPObot=\CPOinterp{\LTTypeNat}$, $\Lambda\subseteq\Lambda_\CPObot=\CPOinterp{\LTTypeLam}$ and $\Lambda^*\subseteq\left(\Lambda^*\right)_\CPObot=\CPOinterp{\LTTypeLamList}$, we have that for any term $\LogNatA$, $\LogTermA$ or $\LogTermListA$ appearing in $\LogFormA$, a valuation on $\LogFormA$ is in particular a valuation on $\LTinterp{\LogNatA}$, $\LTinterp{\LogTermA}$ or $\LTinterp{\LogTermListA}$ in the sense of cpos, where $\LTinterp{\_}$ is the mapping from computational elements of the logic to programs of system $\LTbbc$ defined in section~\ref{Mapping}. Therefore, $\CPOinterp{\LTinterp{\LogNatA}}_\ValuatA\in\CPOinterp{\LTTypeNat}$, $\CPOinterp{\LTinterp{\LogTermA}}_\ValuatA\in\CPOinterp{\LTTypeLam}$ and $\CPOinterp{\LTinterp{\LogTermListA}}_\ValuatA\in\CPOinterp{\LTTypeLamList}$ are well-defined. Moreover, $\CPOinterp{\LTinterp{\LogNatA}}_\ValuatA\in\mathbb{N}$, $\CPOinterp{\LTinterp{\LogTermA}}_\ValuatA\in\Lambda$ and $\CPOinterp{\LTinterp{\LogTermListA}}_\ValuatA\in\Lambda^*$: they are different from $\CPObot$.\par
As explained in the previous section, we fix the set of realizers of false atomic formulas to a well-chosen set of natural numbers so we can extract computational content from proofs in classical logic. For now this set is a parameter of our realizability model:
\begin{equation*}
\RealBot\subseteq\mathbb{N}
\end{equation*}
From that parameter, we define the realizability value $\RealVal{\LogFormA}_\ValuatA\subseteq\CPOinterp{\LTinterp{\LogFormA}}$ of a formula $\LogFormA$ with valuation $\ValuatA$ in figure~\ref{RealVal}.
\begin{figure*}
\begin{gather*}
\begin{flalign*}
&\begin{aligned}
\RealVal{\LogBoolT}_\ValuatA&=\mathbb{N}_\CPObot\\
\RealVal{\LogBoolF}_\ValuatA&=\RealBot
\end{aligned}
&
\RealVal{\LogNNorm{\LogTermA}{\LogNatA}}_\ValuatA&=\left\{\begin{aligned}&\mathbb{N}_\CPObot&&\begin{aligned}[t]\text{if }\CPOinterp{\LTinterp{\LogTermA}}_\ValuatA\text{ can reduce for }\CPOinterp{\LTinterp{\LogNatA}}_\ValuatA\text{ steps of weak head}\\\text{reduction without reaching a normal form}\end{aligned}\\&\RealBot&&\text{otherwise}\end{aligned}\right.
\end{flalign*}
\\
\begin{flalign*}
\RealVal{\LogBoolVarA}_\ValuatA&=\left\{\begin{aligned}&\mathbb{N}_\CPObot&&\text{if }\ValuatA\left(\LogBoolVarA\right)=\RealBoolT\\&\RealBot&&\text{if }\ValuatA\left(\LogBoolVarA\right)=\RealBoolF\end{aligned}\right.
&
&\begin{aligned}
\RealVal{\LogFormA\Rightarrow\LogFormB}_\ValuatA&=\SetSuch{\CPOelA\in\CPOinterp{\LTinterp{\LogFormA}\to\LTinterp{\LogFormB}}}{\forall\CPOelB\in\RealVal{\LogFormA}_\ValuatA,\CPOelA\left(\CPOelB\right)\in\RealVal{\LogFormB}_\ValuatA}
\\
\RealVal{\LogFormA\wedge\LogFormB}_\ValuatA&=\SetSuch{\left(\CPOelA,\CPOelB\right)\in\CPOinterp{\LTinterp{\LogFormA}\times\LTinterp{\LogFormB}}}{\CPOelA\in\RealVal{\LogFormA}_\ValuatA\wedge\CPOelB\in\RealVal{\LogFormB}_\ValuatA}
\end{aligned}
\end{flalign*}
\\
\begin{flalign*}
\RealVal{\LogBoolIn{\LogTermA}{\LogPredVarA}}_\ValuatA&=\left\{\begin{alignedat}{2}&\mathbb{N}_\CPObot&&\text{if}\,\CPOinterp{\LTinterp{\LogTermA}}_\ValuatA\in\ValuatA\left(\LogPredVarA\right)\\&\RealBot&&\text{if}\,\CPOinterp{\LTinterp{\LogTermA}}_\ValuatA\notin\ValuatA\left(\LogPredVarA\right)\end{alignedat}\right.
&
\RealVal{\forall\LogNatVarA\,\LogFormA}_\ValuatA&=\SetSuch{\CPOelA\in\CPOinterp{\LTTypeNat\to\LTinterp{\LogFormA}}}{\forall\RealNatA\in\mathbb{N},\CPOelA\left(\RealNatA\right)\in\RealVal{\LogFormA}_{\ValuatA\uplus\SetSuch{\LogNatVarA\mapsto\RealNatA}{}}}
\end{flalign*}
\\
\begin{flalign*}
\RealVal{\forall\LogPredVarA\,\LogFormA}_\ValuatA&=\bigcap_{\RealPredA\in\mathcal{P}\left(\Lambda\right)}\RealVal{\LogFormA}_{\ValuatA\uplus\SetSuch{\LogPredVarA\mapsto\RealPredA}{}}
&
\RealVal{\forall\LogTermVarA\,\LogFormA}_\ValuatA&=\SetSuch{\CPOelA\in\CPOinterp{\LTTypeLam\to\LTinterp{\LogFormA}}}{\forall\RealTermA\in\Lambda,\CPOelA\left(\RealTermA\right)\in\RealVal{\LogFormA}_{\ValuatA\uplus\SetSuch{\LogTermVarA\mapsto\RealTermA}{}}}
\\
\RealVal{\forall\LogBoolVarA\,\LogFormA}_\ValuatA&=\bigcap_{\RealBoolA\in\SetSuch{\RealBoolT;\RealBoolF}{}}\RealVal{\LogFormA}_{\ValuatA\uplus\SetSuch{\LogBoolVarA\mapsto\RealBoolA}{}}
&
\RealVal{\forall\LogTermListVarA\,\LogFormA}_\ValuatA&=\SetSuch{\CPOelA\in\CPOinterp{\LTTypeLamList\to\LTinterp{\LogFormA}}}{\forall\RealTermListA\in\Lambda^*,\CPOelA\left(\RealTermListA\right)\in\RealVal{\LogFormA}_{\ValuatA\uplus\SetSuch{\LogTermListVarA\mapsto\RealTermListA}{}}}
\end{flalign*}
\end{gather*}
\caption{Realizability values}
\label{RealVal}
\end{figure*}
The realizability value of a boolean formula $\LogBoolA$ is either the whole set $\CPOinterp{\LTinterp{\LogBoolA}}=\mathbb{N}_\CPObot$ or the parameter $\RealBot$, which is a standard definition in realizability models for classical logic. In the definition of $\RealVal{\LogNNorm{\LogTermA}{\LogNatA}}_\ValuatA$, remember that $\CPOinterp{\LTinterp{\LogTermA}}_\ValuatA\in\Lambda$ and $\CPOinterp{\LTinterp{\LogNatA}}_\ValuatA\in\mathbb{N}$ (they are not $\CPObot$), so the definition is correct. The realizability values for universally quantified formulas depend on whether the sort of the quantified variable is computational or not. In the computational case, the realizer takes as input the element the formula is instantiated with, while in the non-computational case the realizer does not depend on the particular value the formula is instantiated with: the realizer is uniform. Realizability values of implications and conjunctions are standard.\par
As an alternative to valuations, we will also use terms and formulas with parameters. This means that we syntactically substitute elements of $\CPOinterp{\LTTypeA}$ for free variables of type $\LTTypeA$ in the interpretation of terms of system $\LTbbc$, and elements of $\mathbb{N}$, $\Lambda$, $\Lambda^*$, $\mathcal{P}\left(\Lambda\right)$ and $\SetSuch{\RealBoolT;\RealBoolF}{}$ for free variables of the corresponding sort in the realizability values of formulas. For example if $\CPOelA\in\CPOinterp{\LTTypeLam}$ then we can write $\CPOinterp{\lambda\LTVarA.\LTLamApp\,\CPOelA\,\LTVarA}$ instead of $\CPOinterp{\lambda\LTVarA.\LTLamApp\,\LTVarB\,\LTVarA}_{\SetSuch{\LTVarB\mapsto\CPOelA}{}}$, and we can write $\RealVal{\forall\LogTermVarA\,\LogNNorm{\LogTermVarA}{\LogNatS\,7}}$ for $\RealVal{\forall\LogTermVarA\,\LogNNorm{\LogTermVarA}{\LogNatS\,\LogNatVarA}}_{\SetSuch{\LogNatVarA\mapsto7}{}}$. A closed element with parameters is an element with parameters that does not have any free variables anymore.
\subsection{Classical logic}
\label{ClassicalLogic}
As explained in section~\ref{Logic}, we work in the target of G\"odel's negative translation so that classical principles can be realized. In particular, we can define realizers of double-negation elimination by induction on formulas:
\begin{gather*}
\begin{flalign*}
\LTdne_\LogBoolA&=\lambda\LTVarA.\LTVarA\left(\lambda\LTVarB.\LTVarB\right)
&
\LTdne_{\forall\LogBoolVarA\,\LogFormA}&=\LTdne_{\forall\LogPredVarA\,\LogFormA}=\LTdne_{\LogFormA}
\\
\LTdne_{\forall\LogRelVarA\,\LogFormA}&=\lambda\LTVarA\LogRelVarA.\LTdne_{\LogFormA}\left(\lambda\LTVarB.\LTVarA\left(\lambda\LTVarC.\LTVarB\left(\LTVarC\,\LogRelVarA\right)\right)\right)
&
\LTdne_{\LogFormA\Rightarrow\LogFormB}&=\lambda\LTVarA\LTVarB.\LTdne_{\LogFormB}\left(\lambda\LTVarC.\LTVarA\left(\lambda\LTVarD.\LTVarC\left(\LTVarD\,\LTVarB\right)\right)\right)
\end{flalign*}
\\
\LTdne_{\LogFormA\wedge\LogFormB}=\lambda\LTVarA.\LTPair{\LTdne_{\LogFormA}\left(\lambda\LTVarB.\LTVarA\left(\lambda\LTVarC.\LTVarB\left(\LTProj_1\,\LTVarC\right)\right)\right)}{\LTdne_{\LogFormB}\left(\lambda\LTVarB.\LTVarA\left(\lambda\LTVarC.\LTVarB\left(\LTProj_2\,\LTVarC\right)\right)\right)}
\end{gather*}
where $\LogRelVarA$ ranges over variables of a computational sort: $\LogNatVarA$, $\LogTermVarA$ and $\LogTermListVarA$. These terms indeed realize double-negation elimination:
\begin{lem}
If $\LogFormA$ is a closed formula with parameters then:
\begin{equation*}
\CPOinterp{\LTdne_{\LogFormA}}\in\RealVal{\neg\neg\LogFormA\Rightarrow\LogFormA}
\end{equation*}
\end{lem}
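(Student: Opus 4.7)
The plan is to prove this by induction on the structure of the closed formula (with parameters) $\LogFormA$, following exactly the case split in the definition of $\LTdne_{\LogFormA}$. In every case, the strategy is to unfold the definitions of $\RealVal{\neg\neg\LogFormA\Rightarrow\LogFormA}$ and $\RealVal{\neg\LogFormA}$ using the clauses of figure~\ref{RealVal}, recognize the body of $\LTdne_{\LogFormA}$ as a CPS-style computation that builds a realizer of $\neg\LogFormA$ from the data provided, feeds it to the double-negation realizer, and post-composes with the inductive $\LTdne$ on a smaller formula.

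For the base case of an atomic boolean $\LogBoolA$ (including $\LogBoolT$, $\LogBoolF$, $\LogBoolIn{\LogTermA}{\LogPredVarA}$, $\LogNNorm{\LogTermA}{\LogNatA}$, and boolean variables under the valuation), I would observe that $\RealVal{\LogBoolA}$ is either $\mathbb{N}_\CPObot$ or $\RealBot$. If it is $\mathbb{N}_\CPObot$, then $\CPOinterp{\LTdne_{\LogBoolA}}(\CPOelA)=\CPOelA(\lambda y.y)$ inhabits $\mathbb{N}_\CPObot=\CPOinterp{\LTTypeNat}$ trivially. If it is $\RealBot$, then $\lambda y.y$ realizes $\neg\LogBoolA$ (the identity on $\RealBot$ lands in $\RealBot$), so feeding it to a realizer of $\neg\neg\LogBoolA$ returns an element of $\RealBot=\RealVal{\LogBoolA}$. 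For the non-computational quantifiers $\forall\LogBoolVarA\,\LogFormA$ and $\forall\LogPredVarA\,\LogFormA$, the type $\LTinterp{\_}$ is unchanged and $\RealVal{\forall\ldots\LogFormA}$ is an intersection over all parameter choices, so the same realizer $\LTdne_{\LogFormA}$ works simultaneously by applying the induction hypothesis at each parameter instantiation.

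For a computational quantifier $\forall\LogRelVarA\,\LogFormA$ with $\LogRelVarA$ ranging over $\LogNatVarA$, $\LogTermVarA$ or $\LogTermListVarA$, I would fix an element $\RealTermA$ of the corresponding set ($\mathbb{N}$, $\Lambda$, or $\Lambda^*$) and show that $\CPOinterp{\LTdne_{\forall\LogRelVarA\LogFormA}}(\CPOelA)(\RealTermA)\in\RealVal{\LogFormA}_{\ValuatA\uplus\{\LogRelVarA\mapsto\RealTermA\}}$. The key check is that the subterm $\lambda z.\CPOelA(\lambda w.z(w\,\RealTermA))$ inside $\LTdne_{\LogFormA}$ realizes $\neg\neg\LogFormA$ with the instantiated valuation: given $z\in\RealVal{\neg\LogFormA}_{\ldots}$ and $w\in\RealVal{\forall\LogRelVarA\LogFormA}_\ValuatA$, instantiating $w$ at $\RealTermA$ produces a realizer of $\LogFormA$ which $z$ sends to $\RealBot$, hence $\lambda w.z(w\,\RealTermA)$ realizes $\neg\forall\LogRelVarA\LogFormA$, and applying $\CPOelA$ lands in $\RealBot$. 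The induction hypothesis on $\LogFormA$ then converts the double-negated realizer into a realizer of $\LogFormA$. The implication and conjunction cases are analogous CPS manipulations: for $\LogFormA\Rightarrow\LogFormB$ one uses the induction hypothesis on $\LogFormB$ after feeding $\lambda z.x(\lambda w.z(w\,y))$ to $\LTdne_{\LogFormB}$; for $\LogFormA\wedge\LogFormB$ one builds both components in parallel, projecting out with $\LTProj_1$ and $\LTProj_2$ respectively and invoking each induction hypothesis.

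The proof is essentially a mechanical verification; the main risk lies less in mathematical depth and more in bookkeeping. The delicate point is to check that in every inductive case the type $\LTinterp{\LTdne_{\LogFormA}}$ is indeed $\LTinterp{\neg\neg\LogFormA\Rightarrow\LogFormA}$, so that the cpo inclusions $\CPOinterp{\LTinterp{\LogFormA}}$ line up with the realizability values; and to be careful with the two alternative behaviors of the atomic clause, since the identity $\lambda y.y$ plays two conceptually different roles depending on whether $\RealVal{\LogBoolA}$ has collapsed to $\RealBot$ or not. Everything else reduces to routine unfolding of the realizability clauses in figure~\ref{RealVal}.
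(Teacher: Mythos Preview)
Your proposal is correct and follows essentially the same approach as the paper's proof: an induction on the structure of $\LogFormA$ mirroring the definition of $\LTdne_{\LogFormA}$, with the atomic case split according to whether $\RealVal{\LogBoolA}$ is $\mathbb{N}_\CPObot$ or $\RealBot$, immediate passage through non-computational quantifiers via the intersection, and CPS-style unfolding for computational quantifiers, implication and conjunction. The only cosmetic difference is that the paper phrases the atomic case as checking $\LogBoolT$ and $\LogBoolF$ explicitly, whereas you phrase it as a dichotomy on the value of $\RealVal{\LogBoolA}$; these are the same argument.
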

\begin{proof}
By induction:
\begin{itemize}
\item$\LogBoolA$: since by definition $\RealVal{\LogBoolA}$ is either $\RealVal{\LogBoolT}$ or $\RealVal{\LogBoolF}$, we only have to check these two cases:
\begin{itemize}
\item$\CPOinterp{\lambda\LTVarA.\LTVarA\left(\lambda\LTVarB.\LTVarB\right)}\in\RealVal{\neg\neg\LogBoolT\Rightarrow\LogBoolT}$: let $\CPOelA\in\RealVal{\left(\LogBoolT\Rightarrow\LogBoolF\right)\Rightarrow\LogBoolF}$. We show $\CPOinterp{\CPOelA\left(\lambda\LTVarB.\LTVarB\right)}\in\RealVal{\LogBoolT}$, but this is immediate since $\RealVal{\LogBoolT}=\mathbb{N}_\CPObot$
\item$\CPOinterp{\lambda\LTVarA.\LTVarA\left(\lambda\LTVarB.\LTVarB\right)}\in\RealVal{\neg\neg\LogBoolF\Rightarrow\LogBoolF}$: let $\CPOelA\in\RealVal{\left(\LogBoolF\Rightarrow\LogBoolF\right)\Rightarrow\LogBoolF}$. We show $\CPOinterp{\CPOelA\left(\lambda\LTVarB.\LTVarB\right)}\in\RealVal{\LogBoolF}$, which is true because $\CPOinterp{\lambda\LTVarB.\LTVarB}\in\RealVal{\LogBoolF\Rightarrow\LogBoolF}$
\end{itemize}
\item$\forall\LogBoolVarA\,\LogFormA$ or $\forall\LogPredVarA\,\LogFormA$: immediate by induction hypothesis.
\item$\forall\LogRelVarA\,\LogFormA$: we only prove the case $\LogRelVarA\equiv\LogNatVarA$. Let $\CPOelA\in\RealVal{\neg\neg\forall\LogNatVarA\,\LogFormA}$ and let $\RealNatA\in\CPOinterp{\LTTypeNat}$. By induction hypothesis it is sufficient to show that $\CPOinterp{\lambda\LTVarB.\CPOelA\left(\lambda\LTVarC.\LTVarB\left(\LTVarC\,\RealNatA\right)\right)}\in\RealVal{\neg\neg\LogFormA}_{\SetSuch{\LogNatVarA\mapsto\RealNatA}{}}$. Let $\CPOelB\in\RealVal{\neg\LogFormA}_{\SetSuch{\LogNatVarA\mapsto\RealNatA}{}}$. Since $\CPOelA\in\RealVal{\neg\neg\forall\LogNatVarA\,\LogFormA}$ we are left to prove that $\CPOinterp{\lambda\LTVarC.\CPOelB\left(\LTVarC\,\RealNatA\right)}\in\RealVal{\neg\forall\LogNatVarA\,\LogFormA}$. Indeed, if $\CPOelC\in\RealVal{\forall\LogNatVarA\,\LogFormA}$ then $\CPOelC\left(\RealNatA\right)\in\RealVal{\LogFormA}_{\SetSuch{\LogNatVarA\mapsto\RealNatA}{}}$ so $\CPOelB\left(\CPOelC\left(\RealNatA\right)\right)\in\RealVal{\LogBoolF}$.
\item$\LogFormA\Rightarrow\LogFormB$: let $\CPOelA\in\RealVal{\neg\neg\left(\LogFormA\Rightarrow\LogFormB\right)}$ and let $\CPOelB\in\RealVal{\LogFormA}$. By induction hypothesis it is sufficient to show that $\CPOinterp{\lambda\LTVarC.\CPOelA\left(\lambda\LTVarD.\LTVarC\left(\LTVarD\,\CPOelB\right)\right)}\in\RealVal{\neg\neg\LogFormB}$. Let $\CPOelC\in\RealVal{\neg\LogFormB}$. Since $\CPOelA\in\RealVal{\neg\neg\left(\LogFormA\Rightarrow\LogFormB\right)}$ we are left to prove that $\CPOinterp{\lambda\LTVarD.\CPOelC\left(\LTVarD\,\CPOelB\right)}\in\RealVal{\neg\left(\LogFormA\Rightarrow\LogFormB\right)}$. Indeed, if $\CPOelD\in\RealVal{\LogFormA\Rightarrow\LogFormB}$ then $\CPOelD\left(\CPOelB\right)\in\RealVal{\LogFormB}$ so $\CPOelC\left(\CPOelD\left(\CPOelB\right)\right)\in\RealVal{\LogBoolF}$.
\item$\LogFormA\wedge\LogFormB$: let $\CPOelA\in\RealVal{\neg\neg\left(\LogFormA\wedge\LogFormB\right)}$. By induction hypotheses it is sufficient to prove that $\CPOinterp{\left(\lambda\LTVarB.\CPOelA\left(\lambda\LTVarC.\LTVarB\left(\LTProj_1\,\LTVarC\right)\right)\right)}\in\RealVal{\neg\neg\LogFormA}$ and $\CPOinterp{\left(\lambda\LTVarB.\CPOelA\left(\lambda\LTVarC.\LTVarB\left(\LTProj_2\,\LTVarC\right)\right)\right)}\in\RealVal{\neg\neg\LogFormB}$. The two claims are similar so we prove only the first one. Let $\CPOelB\in\RealVal{\neg\LogFormA}$. Since $\CPOelA\in\RealVal{\neg\neg\left(\LogFormA\wedge\LogFormB\right)}$ it is sufficient to prove that $\CPOinterp{\lambda\LTVarC.\CPOelB\left(\LTProj_1\,\LTVarC\right)}\in\RealVal{\neg\left(\LogFormA\wedge\LogFormB\right)}$. Indeed, if $\CPOelC\in\RealVal{\LogFormA\wedge\LogFormB}$ then $\CPOproj_1\left(\CPOelC\right)\in\RealVal{\LogFormA}$ so $\CPOelB\left(\CPOproj_1\left(\CPOelC\right)\right)\in\RealVal{\LogBoolF}$.\qedhere
\end{itemize}
\end{proof}
Using $\LTdne_\LogFormA$ we define the following term:
\begin{equation*}
\LTexf_{\LogFormA}=\lambda\LTVarA.\LTdne_{\LogFormA}\left(\lambda\_.\LTVarA\right)
\end{equation*}
which immediately realizes the \textit{ex falso quodlibet} principle:
\begin{equation*}
\CPOinterp{\LTexf_{\LogFormA}}\in\RealVal{\LogBoolF\Rightarrow\LogFormA}
\end{equation*}\par
\subsection{Realizing the axiom scheme of comprehension}
\label{Comprehension}
The combination of the axiom of countable choice with classical logic implies the comprehension scheme on natural numbers. Indeed, classical logic provides a proof of $\forall\LogNatVarA\,\exists\LogBoolVarA\,\left(\LogBoolVarA\Leftrightarrow\LogFormA\left(\LogNatVarA\right)\right)$ and then the axiom of countable choice implies $\exists f\,\forall\LogNatVarA\,\left(f\left(\LogNatVarA\right)\Leftrightarrow\LogFormA\left(\LogNatVarA\right)\right)$, where $f$ is a function from natural numbers to booleans. Therefore, we can interpret second-order arithmetic through an encoding of sets of natural numbers as functions from natural numbers to booleans.\par
In the current setting we interpret the comprehension scheme on $\lambda$-terms rather than on natural numbers, so we interpret the following version of the axiom of countable choice:
\begin{equation*}
\forall\LogTermVarA\,\exists\LogBoolVarA\,\LogFormA\left(\LogBoolVarA,\LogTermVarA\right)\Rightarrow\exists\LogPredVarA\,\forall\LogTermVarA\,\LogFormA\left(\LogBoolIn{\LogTermVarA}{\LogPredVarA},\LogTermVarA\right)
\end{equation*}
We actually interpret a weaker version: we define a program that turns an element of $\bigcap_{\RealTermA\in\Lambda}\RealVal{\exists\LogBoolVarA\,\LogFormA\left(\LogBoolVarA,\RealTermA\right)}$ into an element of $\RealVal{\exists\LogPredVarA\,\forall\LogTermVarA\,\LogFormA\left(\LogBoolIn{\LogTermVarA}{\LogPredVarA},\LogTermVarA\right)}$. The difference is that a realizer of $\forall\LogTermVarA\,\exists\LogBoolVarA\,\LogFormA\left(\LogBoolVarA,\LogTermVarA\right)$ takes a term as input (since the sort of terms is computational), while in our particular case we can build a realizer of $\exists\LogBoolVarA\,\left(\LogBoolVarA\Leftrightarrow\LogFormA\left(\LogTermVarA\right)\right)$ that is uniform in $\LogTermVarA$. Because of that, the weaker version is sufficient for the comprehension scheme. The usual BBC functional~\cite{BerardiBezemCoquand} (where the first argument would be of type $\LTTypeLam\to\left(\LTTypeA\to\LTTypeNat\right)\to\LTTypeA$) can in fact realize the stronger version where the left quantification on $\LogTermVarA$ is relativized. Our version is weaker because the first argument is only of type $\left(\LTTypeA\to\LTTypeNat\right)\to\LTTypeA$. It is not clear yet whether the usual version is computationally strictly stronger than our version. Our proof of adequacy is inspired by~\cite{BergerBBCDomains} and uses Zorn's lemma:
\begin{lem}
\label{RealBarRec}
If $\LogFormA\left(\LogBoolA,\LogTermA\right)$ is a closed 1-formula with parameters and $\CPOelA\in\bigcap_{\RealTermA\in\Lambda}\RealVal{\exists\LogBoolVarA\,\LogFormA\left(\LogBoolVarA,\RealTermA\right)}$ then:
\begin{equation*}
\CPOinterp{\lambda\LTVarA.\LTBBC\left(\lambda\LTVarB.\LTexf_{\LogFormA}\left(\CPOelA\,\LTVarB\right)\right)\LTVarA\LTFinFunEmpty}\in\RealVal{\exists\LogPredVarA\,\forall\LogTermVarA\,\LogFormA\left(\LogBoolIn{\LogTermVarA}{\LogPredVarA},\LogTermVarA\right)}
\end{equation*}
\end{lem}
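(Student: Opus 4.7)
The plan is to unfold the realizability statement into a concrete goal about the denotation of $\LTBBC$, and then derive that denotation via a Zorn's lemma argument in the cpo. Since $\exists\LogPredVarA\,\LogFormB\Def\neg\forall\LogPredVarA\,\neg\LogFormB$, it suffices to show that for any $\CPOelB\in\RealVal{\forall\LogPredVarA\,\neg\forall\LogTermVarA\,\LogFormA\left(\LogBoolIn{\LogTermVarA}{\LogPredVarA},\LogTermVarA\right)}$, the value $\CPOinterp{\LTBBC\,F\,\CPOelB\,\LTFinFunEmpty}$ lies in $\RealBot$, where $F\Def\CPOinterp{\lambda\LTVarB.\LTexf_{\LogFormA}\left(\CPOelA\,\LTVarB\right)}$. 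I will argue by contradiction: assume this denotation is not in $\RealBot$ and build a maximal configuration that produces one.

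Next, I would introduce the set $\mathcal{G}$ of typed partial functions $\left(g,\sigma\right)$: $g$ is an element of $\CPOinterp{\LTTypeFinFun{\LTinterp{\LogFormA}}}$ with support $S\subseteq\Lambda$ and $\sigma:S\to\SetSuch{\LogBoolT,\LogBoolF}{}$ such that $g\left(\RealTermA\right)\in\RealVal{\LogFormA\left(\sigma\left(\RealTermA\right),\RealTermA\right)}$ for every $\RealTermA\in S$ and the denotation $\CPOinterp{\LTBBC\,F\,\CPOelB\,g}$ is not in $\RealBot$, ordered by componentwise extension. The empty configuration witnesses that $\mathcal{G}$ is non-empty under our contradiction hypothesis. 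Chains have upper bounds in $\mathcal{G}$: the union is still typed, and because $\LTBBC$ is interpreted by a continuous functional and the target $\mathbb{N}_\CPObot$ is flat, the value of $\LTBBC$ on the union equals the value on some element of the chain (or $\CPObot$), so it stays outside $\RealBot$. Zorn's lemma then provides a maximal $\left(g^*,\sigma^*\right)$.

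The main computation is to analyze $\CPOinterp{\LTBBC\,F\,\CPOelB\,g^*}$ using the semantic unfolding $\CPOelB\left(g^*\mid h^*\right)$ with $h^*\left(\RealTermA\right)=F\left(\lambda z.\CPOinterp{\LTBBC\,F\,\CPOelB\left(g^*\cup\SetSuch{\RealTermA\mapsto z}{}\right)}\right)$, derived from the reduction rule and continuity. Letting $\RealPredA^*\Def\SetSuch{\RealTermA\in\mathrm{supp}\,g^*}{\sigma^*\left(\RealTermA\right)=\LogBoolT}$, I will verify that $g^*\mid h^*$ realizes $\forall\LogTermVarA\,\LogFormA\left(\LogBoolIn{\LogTermVarA}{\LogPredVarA},\LogTermVarA\right)$ under the valuation $\LogPredVarA\mapsto\RealPredA^*$. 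On the support of $g^*$ this is immediate from the typed condition; off the support I need $h^*\left(\RealTermA\right)\in\RealVal{\LogFormA\left(\LogBoolF,\RealTermA\right)}$, which by adequacy of $\LTexf_{\LogFormA}$ reduces to showing that $\lambda z.\CPOinterp{\LTBBC\,F\,\CPOelB\left(g^*\cup\SetSuch{\RealTermA\mapsto z}{}\right)}$ realizes $\forall\LogBoolVarA\,\neg\LogFormA\left(\LogBoolVarA,\RealTermA\right)$. Feeding any $v\in\RealVal{\LogFormA\left(b,\RealTermA\right)}$ yields a typed extension $\left(g^*\cup\SetSuch{\RealTermA\mapsto v}{},\sigma^*\cup\SetSuch{\RealTermA\mapsto b}{}\right)$ strictly above $\left(g^*,\sigma^*\right)$, and maximality forces its $\LTBBC$-denotation into $\RealBot$, which is exactly what is needed.

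Combining, $\CPOelB\left(g^*\mid h^*\right)\in\RealBot$ by applying $\CPOelB$ at $\RealPredA^*$, contradicting $\left(g^*,\sigma^*\right)\in\mathcal{G}$. The delicate step will be the off-support subcase where $b=\LogBoolT$ and $v$ is a fresh realizer of $\LogFormA\left(\LogBoolT,\RealTermA\right)$: the argument hinges on the fact that maximality of a \emph{typed} configuration—rather than merely a safe one—suffices to rule out this extension too, which is why $\sigma$ must be part of the data. I also need to confirm carefully that the semantic unfolding of $\LTBBC$ on a non-syntactic partial function $g^*$ (possibly with infinite support) is valid; this follows from $\LTBBC$ being the least fixed point of a continuous functional on the cpo and requires only routine checking.
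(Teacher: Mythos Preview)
Your overall architecture is the paper's: a Zorn's lemma argument on a set of ``safe'' typed partial functions on which the $\LTBBC$-value avoids $\RealBot$. The differences are cosmetic—the paper shows the set has no maximal element rather than taking a maximal one and contradicting it, and instead of carrying your tag $\sigma$ it recovers $\RealPredA$ a posteriori as $\SetSuch{\RealTermA}{\CPOproj_2\left(\CPOelD\left(\RealTermA\right)\right)\in\RealVal{\LogFormA\left(\RealBoolT,\RealTermA\right)}}$.

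There is, however, one real gap in your chain step. You justify ``the value of $\LTBBC$ on the union equals the value on some element of the chain (or $\CPObot$)'' by continuity, but under the encoding $\LTTypeFinFun{\LTTypeA}=\LTTypeLam\to\LTTypeNat\times\LTTypeA$ the value ``undefined'' is $\left(1,\CPOinterp{\LTcan_{\LTinterp{\LogFormA}}}\right)$, not $\CPObot$; hence the extension order is \emph{not} the cpo order and your chain is not directed there, so continuity along the chain does not apply. The paper closes this by (i) requiring every $\CPOelD$ in the set to be strict, $\CPOelD\left(\CPObot\right)=\CPObot$, a condition you omit and which is needed for what follows, and (ii) invoking a finite-dependency fact (its Lemma~\ref{CPOCont}): if $\CPOelC\left(\CPOelD_{\max}\right)\neq\CPObot$ then it is determined by the values of $\CPOelD_{\max}$ on some finite $F\subseteq\Lambda$, and since the chain is totally ordered by extension a single chain element already agrees with $\CPOelD_{\max}$ on all of $F$. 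With strictness added to the definition of $\mathcal{G}$ and this finite-dependency argument in place of the bare appeal to continuity, the rest of your proof goes through as written.
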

\begin{proof}
Remember that in our logic, $\exists$ is encoded as $\neg\forall\neg$. Let $\CPOelB\in\RealVal{\forall\LogPredVarA\,\neg\forall\LogTermVarA\,\LogFormA\left(\LogBoolIn{\LogTermVarA}{\LogPredVarA},\LogTermVarA\right)}$ and write $\CPOelC=\CPOinterp{\LTBBC\left(\lambda\LTVarB.\LTexf_{\LogFormA}\left(\CPOelA\,\LTVarB\right)\right)\CPOelB}$. We have to prove that:
\begin{equation*}
\CPOelC\left(\CPOinterp{\LTFinFunEmpty}\right)\in\RealVal{\LogBoolF}
\end{equation*}
First, we define the following set:
\begin{equation*}
E=\SetSuch{\CPOelD\in\CPOinterp{\LTTypeFinFun{\LTinterp{\LogFormA}}}}{\begin{aligned}
&\begin{aligned}
\CPOelD\left(\RealTermA\right)\in{}&\SetSuch{0}{}\times\RealVal{\LogFormA\left(\RealBoolT,\RealTermA\right)}\\
{}\cup{}&\SetSuch{0}{}\times\RealVal{\LogFormA\left(\RealBoolF,\RealTermA\right)}\\
{}\cup{}&\SetSuch{1}{}\times\SetSuch{\CPOinterp{\LTcan_{\LTinterp{\LogFormA}}}}{}
\end{aligned}\\
&\CPOelD\left(\CPObot\right)=\CPObot\\
&\CPOelC\left(\CPOelD\right)\notin\RealVal{\LogBoolF}
\end{aligned}}
\end{equation*}
In particular, elements of $E$ are strict partial functions that take values in $\RealVal{\LogFormA\left(\RealBoolT,\RealTermA\right)}$ or $\RealVal{\LogFormA\left(\RealBoolF,\RealTermA\right)}$ where they are defined. We define a partial order $\prec$ on $E$:
\begin{equation*}
\CPOelD\prec\CPOelD'\quad\Longleftrightarrow\quad\forall\RealTermA\in\Lambda\left(\CPOproj_1\left(\CPOelD\left(\RealTermA\right)\right)=0\Rightarrow\CPOelD'\left(\RealTermA\right)=\CPOelD\left(\RealTermA\right)\right)
\end{equation*}
That is, $\CPOelD\prec\CPOelD'$ if $\CPOelD'$ is more defined than $\CPOelD$. We now prove that every non-empty chain of $E$ has an upper bound in $E$ and that $E$ has no maximal element. Therefore by Zorn's lemma the empty set cannot have an upper bound in $E$ and so $E=\emptyset$. In particular $\CPOinterp{\LTFinFunEmpty}\notin E$ and so $\CPOelC\left(\CPOinterp{\LTFinFunEmpty}\right)\in\RealVal{\LogBoolF}$ because $\CPOinterp{\LTFinFunEmpty}$ satisfies all other conditions of $E$ (since $\LTFinFunEmpty$ is strict).
\begin{itemize}
\item Every non-empty chain of $E$ has an upper bound in $E$:\\
Let $C$ be a non-empty chain of $E$ and build $\CPOelD_{max}$ as follows:
\begin{align*}
&\CPOelD_{max}\left(\RealTermA\right)=\left\{\begin{aligned}&\CPOelD\left(\RealTermA\right)\text{ if }\CPOproj_1\left(\CPOelD\left(\RealTermA\right)\right)=0\text{ for some }\CPOelD\in C\\&\left(1,\CPOinterp{\LTcan_{\LTinterp{\LogFormA}}}\right)\text{ otherwise}\end{aligned}\right.\\
&\CPOelD_{max}\left(\CPObot\right)=\CPObot
\end{align*}
This function is well-defined because $C$ is a chain for $\prec$ so if $\CPOelD,\CPOelD'\in C$ are such that $\CPOproj_1\left(\CPOelD\left(\RealTermA\right)\right)=\CPOproj_1\left(\CPOelD'\left(\RealTermA\right)\right)=0$ for some $\RealTermA$, then $\CPOelD\left(\RealTermA\right)=\CPOelD'\left(\RealTermA\right)$. Also, if $\CPOproj_1\left(\CPOelD_{max}\left(\RealTermA\right)\right)=0$ then $\CPOproj_1\left(\CPOelD\left(\RealTermA\right)\right)=0$ for some $\CPOelD\in C$, and therefore:
\begin{equation*}
\CPOproj_2\left(\CPOelD_{max}\left(\RealTermA\right)\right)=\CPOproj_2\left(\CPOelD\left(\RealTermA\right)\right)\in\RealVal{\LogFormA\left(\RealBoolT,\RealTermA\right)}\cup\RealVal{\LogFormA\left(\RealBoolF,\RealTermA\right)}
\end{equation*}
The only non-trivial property left to prove in order to get $\CPOelD_{max}\in E$ is that $\CPOelC\left(\CPOelD_{max}\right)\notin\RealVal{\LogBoolF}$. Suppose $\CPOelC\left(\CPOelD_{max}\right)\in\RealVal{\LogBoolF}$. Then, $\CPOelC\left(\CPOelD_{max}\right)\neq\CPObot$ because $\RealVal{\LogBoolF}=\RealBot\subseteq\mathbb{N}$. We also have $\CPOelD_{max}\left(\CPObot\right)=\CPObot$ so we can apply lemma~\ref{CPOCont} with $X=\Lambda$, $\CPOA=\CPOinterp{\LTTypeNat\times\LTinterp{\LogFormA}}$ and $Y=\mathbb{N}$ to get a finite set $F\subseteq\Lambda$ such that:
\begin{equation*}
\forall\CPOelD\left(\forall\RealTermA\in F\left(\CPOelD\left(\RealTermA\right)=\CPOelD_{max}\left(\RealTermA\right)\right)\Rightarrow\CPOelC\left(\CPOelD\right)=\CPOelC\left(\CPOelD_{max}\right)\right)
\end{equation*}
For every $\RealTermA\in F$ there is some $\CPOelD_\RealTermA\in C$ such that $\CPOelD_\RealTermA\left(\RealTermA\right)=\CPOelD_{max}\left(\RealTermA\right)$. Indeed, if $\CPOproj_1\left(\CPOelD_{max}\left(\RealTermA\right)\right)=0$ then this is by definition of $\CPOelD_{max}$ and if $\CPOproj_1\left(\CPOelD_{max}\left(\RealTermA\right)\right)\neq0$ then any element of $C$ meets the condition (remember that $C$ is non-empty). $C$ is a non-empty chain and $\SetSuch{\CPOelD_\RealTermA}{\RealTermA\in F}$ is a finite subset of $C$ so it has an upper bound $\CPOelD_{\RealTermA_0}\in C$. Then it is easy to see that for any $\RealTermA\in F$, $\CPOelD_{\RealTermA_0}\left(\RealTermA\right)=\CPOelD_{max}\left(\RealTermA\right)$. Therefore $\CPOelC\left(\CPOelD_{max}\right)=\CPOelC\left(\CPOelD_{\RealTermA_0}\right)$, but $\CPOelC\left(\CPOelD_{\RealTermA_0}\right)\notin\RealVal{\LogBoolF}$ since $\CPOelD_{\RealTermA_0}\in C\subseteq E$, hence the contradiction.
\item $E$ has no maximal element:\\
Suppose for the sake of contradiction that $\CPOelD$ is some maximal element of $E$. By definition of the reduction rule for the BBC functional, we have the following equation:
\begin{equation*}
\CPOinterp{\CPOelC\,\CPOelD}=\CPOinterp{\CPOelB\left(\LTFinFunComplete{\CPOelD}{\lambda\LTVarB.\LTexf_{\LogFormA}\left(\CPOelA\left(\lambda\LTVarC.\CPOelC\left(\LTFinFunExtend{\CPOelD}{\LTVarB}{\LTVarC}\right)\right)\right)}\right)}
\end{equation*}
Let $\RealPredA=\SetSuch{\RealTermA\in\Lambda}{\CPOproj_2\left(\CPOelD\left(\RealTermA\right)\right)\in\RealVal{\LogFormA\left(\RealBoolT,\RealTermA\right)}}$. Since we have $\CPOelB\in\RealVal{\neg\forall\LogTermVarA\,\LogFormA\left(\LogBoolIn{\LogTermVarA}{\RealPredA},\LogTermVarA\right)}$ and $\CPOelC\left(\CPOelD\right)\notin\RealVal{\LogBoolF}$, we get:
\begin{equation*}
\CPOinterp{\LTFinFunComplete{\CPOelD}{\lambda\LTVarB.\LTexf_{\LogFormA}\left(\CPOelA\left(\lambda\LTVarC.\CPOelC\left(\LTFinFunExtend{\CPOelD}{\LTVarB}{\LTVarC}\right)\right)\right)}}\notin\RealVal{\forall\LogTermVarA\,\LogFormA\left(\LogBoolIn{\LogTermVarA}{\RealPredA},\LogTermVarA\right)}
\end{equation*}
Therefore there is some $\RealTermA\in\Lambda$ such that:
\begin{equation*}
\CPOinterp{\left(\LTFinFunComplete{\CPOelD}{\lambda\LTVarB.\LTexf_{\LogFormA}\left(\CPOelA\left(\lambda\LTVarC.\CPOelC\left(\LTFinFunExtend{\CPOelD}{\LTVarB}{\LTVarC}\right)\right)\right)}\right)\RealTermA}\notin\RealVal{\LogFormA\left(\LogBoolIn{\RealTermA}{\RealPredA},\RealTermA\right)}
\end{equation*}
If $\CPOproj_1\left(\CPOelD\left(\RealTermA\right)\right)=0$ then $\CPOproj_2\left(\CPOelD\left(\RealTermA\right)\right)\notin\RealVal{\LogFormA\left(\LogBoolIn{\RealTermA}{\RealPredA},\RealTermA\right)}$, but since $\CPOelD\in E$ we also have:
\begin{equation*}
\CPOproj_2\left(\CPOelD\left(\RealTermA\right)\right)\in\RealVal{\LogFormA\left(\RealBoolT,\RealTermA\right)}\cup\RealVal{\LogFormA\left(\RealBoolF,\RealTermA\right)}
\end{equation*}
and both cases lead to a contradiction by definition of $\RealPredA$. Therefore $\CPOproj_1\left(\CPOelD\left(\RealTermA\right)\right)\neq0$. Moreover $\CPOproj_1\left(\CPOelD\left(\RealTermA\right)\right)\neq\CPObot$ because $\CPOelD\in E$, so we obtain:
\begin{equation*}
\CPOinterp{\LTexf_{\LogFormA}\left(\CPOelA\left(\lambda\LTVarC.\CPOelC\left(\LTFinFunExtend{\CPOelD}{\RealTermA}{\LTVarC}\right)\right)\right)}\notin\RealVal{\LogFormA\left(\LogBoolIn{\RealTermA}{\RealPredA},\RealTermA\right)}
\end{equation*}
and therefore $\CPOinterp{\CPOelA\left(\lambda\LTVarC.\CPOelC\left(\LTFinFunExtend{\CPOelD}{\RealTermA}{\LTVarC}\right)\right)}\notin\RealVal{\LogBoolF}$. Finally, since $\CPOelA\in\RealVal{\neg\forall\LogBoolVarA\,\neg\LogFormA\left(\LogBoolVarA,\RealTermA\right)}$, we have:
\begin{equation*}
\CPOinterp{\lambda\LTVarC.\CPOelC\left(\LTFinFunExtend{\CPOelD}{\RealTermA}{\LTVarC}\right)}\notin\RealVal{\forall\LogBoolVarA\,\neg\LogFormA\left(\LogBoolVarA,\RealTermA\right)}
\end{equation*}
which means that there exists some:
\begin{equation*}
\CPOelE\in\RealVal{\LogFormA\left(\RealBoolT,\RealTermA\right)}\cup\RealVal{\LogFormA\left(\RealBoolF,\RealTermA\right)}
\end{equation*}
such that $\CPOinterp{\CPOelC\left(\LTFinFunExtend{\CPOelD}{\RealTermA}{\CPOelE}\right)}\notin\RealVal{\LogBoolF}$. It is then easy to check that $\CPOinterp{\LTFinFunExtend{\CPOelD}{\RealTermA}{\CPOelE}}\in E$ and $\CPOelD\prec\CPOinterp{\LTFinFunExtend{\CPOelD}{\RealTermA}{\CPOelE}}$, contradicting the maximality of $\CPOelD$.\qedhere
\end{itemize}
\end{proof}
As we explained before the lemma, the next step is the definition of an element of $\bigcap_{\RealTermA\in\Lambda}\RealVal{\exists\LogBoolVarA\left(\LogBoolVarA\Leftrightarrow\LogFormA\left(\RealTermA\right)\right)}$, so that its combination with the realizer above provides an interpretation of the comprehension scheme: $\exists\LogPredVarA\,\forall\LogTermVarA\left(\LogBoolIn{\LogTermVarA}{\LogPredVarA}\Leftrightarrow\LogFormA\left(\LogTermVarA\right)\right)$.
\begin{lem}
If $\LogFormA\left(\LogTermA\right)$ is a closed 1-formula with parameters such that $\LogBoolVarA\notin\FV{\LogFormA\left(\LogTermVarA\right)}$, then:
\begin{equation*}
\CPOinterp{\lambda\LTVarA.\LTVarA\LTPair{\LTexf_{\LogFormA}}{\lambda\LTVarB.\LTVarA\LTPair{\lambda\_.\LTVarB}{\lambda\_.\LTNatZ}}}
\in\bigcap_{\RealTermA\in\Lambda}\RealVal{\exists\LogBoolVarA\left(\LogBoolVarA\Leftrightarrow\LogFormA\left(\RealTermA\right)\right)}
\end{equation*}
\end{lem}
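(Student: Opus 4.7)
The plan is to unfold all the abbreviations. The target realizability value is
$\RealVal{\exists\LogBoolVarA\,(\LogBoolVarA\Leftrightarrow\LogFormA(\RealTermA))}=\RealVal{\neg\forall\LogBoolVarA\,\neg(\LogBoolVarA\Leftrightarrow\LogFormA(\RealTermA))}$.
So fix $\RealTermA\in\Lambda$ and $\CPOelA\in\RealVal{\forall\LogBoolVarA\,\neg(\LogBoolVarA\Leftrightarrow\LogFormA(\RealTermA))}$; our job is to exhibit the result of feeding $\CPOelA$ to the candidate realizer as an element of $\RealVal{\LogBoolF}=\RealBot$. By $\beta$-reduction and the soundness of the cpo semantics, this result equals $\CPOelA\,\CPOinterp{\LTPair{\LTexf_\LogFormA}{\lambda\LTVarB.\CPOelA\LTPair{\lambda\_.\LTVarB}{\lambda\_.\LTNatZ}}}$, so it suffices to show that this pair realizes $\LogBoolF\Leftrightarrow\LogFormA(\RealTermA)$, which is the instance of $\forall\LogBoolVarA\,\neg(\LogBoolVarA\Leftrightarrow\LogFormA(\RealTermA))$ at $\LogBoolVarA\mapsto\RealBoolF$.

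Unfolding $\Leftrightarrow$ as a conjunction of two implications, this amounts to two checks. The first component must realize $\LogBoolF\Rightarrow\LogFormA(\RealTermA)$, and this is immediate from the previously proved property $\CPOinterp{\LTexf_\LogFormA}\in\RealVal{\LogBoolF\Rightarrow\LogFormA}$. The second component, $\lambda\LTVarB.\CPOelA\LTPair{\lambda\_.\LTVarB}{\lambda\_.\LTNatZ}$, must realize $\LogFormA(\RealTermA)\Rightarrow\LogBoolF$, i.e.\ $\neg\LogFormA(\RealTermA)$.

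The core of the argument is this last check, which uses $\CPOelA$ a second time at the other boolean value. Given $\CPOelB\in\RealVal{\LogFormA(\RealTermA)}$, we must show $\CPOelA\,\CPOinterp{\LTPair{\lambda\_.\CPOelB}{\lambda\_.\LTNatZ}}\in\RealVal{\LogBoolF}$. Since $\CPOelA$ also realizes $\neg(\LogBoolT\Leftrightarrow\LogFormA(\RealTermA))$, it suffices to show that this second pair realizes $\LogBoolT\Leftrightarrow\LogFormA(\RealTermA)$. The first component $\lambda\_.\CPOelB$ ignores its input and returns $\CPOelB\in\RealVal{\LogFormA(\RealTermA)}$, so it realizes $\LogBoolT\Rightarrow\LogFormA(\RealTermA)$; the second component $\lambda\_.\LTNatZ$ always returns $\LTNatZ\in\mathbb{N}_\CPObot=\RealVal{\LogBoolT}$, so it realizes $\LogFormA(\RealTermA)\Rightarrow\LogBoolT$.

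There is no real obstacle; the only subtlety is keeping track of which instance of $\CPOelA$ (at $\RealBoolT$ or at $\RealBoolF$) is used where, and verifying that the trivial functions $\lambda\_.\CPOelB$ and $\lambda\_.\LTNatZ$ realize the correct booleans because $\RealVal{\LogBoolT}$ is the whole cpo $\mathbb{N}_\CPObot$. Note that the argument does not depend on $\RealTermA$ in any way beyond substitution inside $\LogFormA$, so the uniformity over $\RealTermA\in\Lambda$ that places the realizer in the intersection is automatic.
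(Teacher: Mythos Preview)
Your proof is correct and follows essentially the same approach as the paper's own proof: you unfold the abbreviations, instantiate $\CPOelA$ first at $\RealBoolF$ to reduce to showing the pair realizes $\RealBoolF\Leftrightarrow\LogFormA(\RealTermA)$, handle the $\LTexf_\LogFormA$ component directly, and then instantiate $\CPOelA$ at $\RealBoolT$ for the second component, checking the two trivial constant functions against $\RealVal{\LogBoolT}=\mathbb{N}_\CPObot$. The structure and the key observations match the paper exactly.
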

\begin{proof}
Let $\RealTermA\in\Lambda$ and $\CPOelA\in\RealVal{\forall\LogBoolVarA\,\neg\left(\LogBoolVarA\Leftrightarrow\LogFormA\left(\RealTermA\right)\right)}$. We have to prove that:
\begin{equation*}
\CPOinterp{\CPOelA\LTPair{\LTexf_{\LogFormA}}{\lambda\LTVarB.\CPOelA\LTPair{\lambda\_.\LTVarB}{\lambda\_.\LTNatZ}}}\in\RealVal{\LogBoolF}
\end{equation*}
Since $\CPOelA\in\RealVal{\neg\left(\RealBoolF\Leftrightarrow\LogFormA\left(\RealTermA\right)\right)}$, it is sufficient to prove:
\begin{align*}
\CPOinterp{\LTexf_{\LogFormA}}&\in\RealVal{\RealBoolF\Rightarrow\LogFormA\left(\RealTermA\right)}&\CPOinterp{\lambda\LTVarB.\CPOelA\LTPair{\lambda\_.\LTVarB}{\lambda\_.\LTNatZ}}&\in\RealVal{\neg\LogFormA\left(\RealTermA\right)}
\end{align*}
The first one is immediate. For the second, let $\CPOelB\in\RealVal{\LogFormA\left(\RealTermA\right)}$. Since $\CPOelA\in\RealVal{\neg\left(\RealBoolT\Leftrightarrow\LogFormA\left(\RealTermA\right)\right)}$, it is sufficient to prove:
\begin{align*}
\CPOinterp{\lambda\_.\CPOelB}&\in\RealVal{\RealBoolT\Rightarrow\LogFormA\left(\RealTermA\right)}&\CPOinterp{\lambda\_.\LTNatZ}&\in\RealVal{\LogFormA\left(\RealTermA\right)\Rightarrow\RealBoolT}
\end{align*}
The first one is immediate, and the second one follows from $\RealVal{\RealBoolT}=\mathbb{N}_\CPObot$.
\end{proof}
Combining the two realizers above, we can now define:
\begin{equation*}
\LTcomp_{\LogFormA}=\lambda\LTVarA.\LTBBC\left(\lambda\LTVarB.\LTexf_\LogFormA\left(\LTVarB\LTPair{\LTexf_\LogFormA}{\lambda\LTVarD.\LTVarB\LTPair{\lambda\_.\LTVarD}{\lambda\_.\LTNatZ}}\right)\right)\LTVarA\LTFinFunEmpty
\end{equation*}
which by construction realizes the axiom scheme of comprehension:
\begin{equation*}
\CPOinterp{\LTcomp_{\LogFormA}}\in\RealVal{\exists\LogPredVarA\,\forall\LogTermVarA\left(\LogBoolIn{\LogTermVarA}{\LogPredVarA}\Leftrightarrow\LogFormA\left(\LogTermVarA\right)\right)}
\end{equation*}
\subsection{Realizing second-order elimination}
We have now realized the axiom scheme of comprehension that asserts the existence of a first-order element of sort set witnessing any formula. However, we still need to interpret the equivalent of second-order elimination in our setting: subtitution of an arbitrary 1-formula for a first-order set variable. In other words, we have to interpret:
\begin{equation*}
\forall\LogPredVarA\LogFormA\left(\LogPredVarForm{\LogPredVarA}\right)\Longrightarrow\LogFormA\left(\LogFormB\right)
\end{equation*}
for arbitrary 2-formula $\LogFormA\left(\LogFormC\right)$ and 1-formula $\LogFormB\left(\LogTermA\right)$. The first step towards the interpretation of second-order elimination is the interpretation of the following formula:
\begin{equation*}
\forall\LogTermVarA\left(\LogFormB\left(\LogTermVarA\right)\Leftrightarrow\LogFormC\left(\LogTermVarA\right)\right)\Longrightarrow\left(\LogFormA\left(\LogFormB\right)\Leftrightarrow\LogFormA\left(\LogFormC\right)\right)
\end{equation*}
The combination of a realizer of that formula with $\LTcomp_\LogFormB$ will then provide an interpretation of second-order elimination.\par
Since we build the realizer $\LTrepl_\LogFormA$ of that formula by induction on $\LogFormA$, we need to explicitly take into account the free variables of $\LogFormA$: the free vriables of $\LTrepl_\LogFormA$ are the free variables of $\LogFormA$ that are of a computational sort, i.e. $\LogNatVarA$, $\LogTermVarA$ or $\LogTermListVarA$. In particular, if $\LogFormA$ is closed then $\LTrepl_\LogFormA$ is closed as well. For simplicity, we first define $\LTrepl'_\LogFormA$ such that $\FV{\LTrepl'_\LogFormA}=\FV{\LTrepl_\LogFormA}\cup\SetSuch{\LTVarA}{}$, and then define $\LTrepl_\LogFormA=\lambda\LTVarA.\LTrepl'_\LogFormA$. The definition of $\LTrepl'_\LogFormA$ is given in figure~\ref{DefRepl} and we can prove the intended result by induction on $\LogFormA$:
\begin{figure*}
\begin{gather*}
\begin{align*}
\LTrepl'_{\LogPredVarForm{\LogPredVarA}\mapsto\LogBoolIn{\LogTermA}{\LogPredVarA}}&=\LTVarA\,\LTinterp{\LogTermA}
&
\LTrepl'_{\LogPredVarForm{\LogPredVarA}\mapsto\LogBoolA}&=\LTPair{\lambda\LTVarB.\LTVarB}{\lambda\LTVarB.\LTVarB}\text{ if }\LogBoolA\not\equiv\LogBoolIn{\LogTermA}{\LogPredVarA}
&
\end{align*}
\\
\LTrepl'_{\LogFormA_1\Rightarrow\LogFormA_2}=\LTPair{\lambda\LTVarB\LTVarC.\LTProj_1\,\LTrepl'_{\LogFormA_2}\left(\LTVarB\left(\LTProj_2\,\LTrepl'_{\LogFormA_1}\,\LTVarC\right)\right)}{\lambda\LTVarB\LTVarC.\LTProj_2\,\LTrepl'_{\LogFormA_2}\left(\LTVarB\left(\LTProj_1\,\LTrepl'_{\LogFormA_1}\,\LTVarC\right)\right)}
\\
\LTrepl'_{\LogFormA_1\wedge\LogFormA_2}=\LTPair{\lambda\LTVarB.\LTPair{\LTProj_1\,\LTrepl'_{\LogFormA_1}\left(\LTProj_1\,\LTVarB\right)}{\LTProj_1\,\LTrepl'_{\LogFormA_2}\left(\LTProj_2\,\LTVarB\right)}\right.}{\hspace{150pt}\\\hspace{150pt}\left.\lambda\LTVarB.\LTPair{\LTProj_2\,\LTrepl'_{\LogFormA_1}\left(\LTProj_1\,\LTVarB\right)}{\LTProj_2\,\LTrepl'_{\LogFormA_2}\left(\LTProj_2\,\LTVarB\right)}}
\\
\begin{flalign*}
\LTrepl'_{\forall\LogRelVarA\,\LogFormA}&=\LTPair{\lambda\LTVarB\LogRelVarA.\LTProj_1\,\LTrepl'_\LogFormA\left(\LTVarB\,\LogRelVarA\right)}{\lambda\LTVarB\LogRelVarA.\LTProj_2\,\LTrepl'_\LogFormA\left(\LTVarB\,\LogRelVarA\right)}
&
\LTrepl'_{\forall\LogPredVarA\,\LogFormA}&=\LTrepl'_{\forall\LogBoolVarA\,\LogFormA}=\LTrepl'_{\LogFormA}
\end{flalign*}
\end{gather*}
\caption{Definition of $\LTrepl_\LogFormA$}
\label{DefRepl}
\end{figure*}
\begin{lem}
\label{RealRepl}
If $\LogFormA\left(\LogFormD\right)$ is a 2-formula, $\LogFormB\left(\LogTermA\right)$, $\LogFormC\left(\LogTermA\right)$ are closed 1-formulas with parameters and $\ValuatA$ is a valuation on $\LogFormA$ then:
\begin{equation*}
\CPOinterp{\LTrepl_\LogFormA}_\ValuatA\in\RealVal{\forall\LogTermVarA\left(\LogFormB\left(\LogTermVarA\right)\Leftrightarrow\LogFormC\left(\LogTermVarA\right)\right)\Rightarrow\left(\LogFormA\left(\LogFormB\right)\Leftrightarrow\LogFormA\left(\LogFormC\right)\right)}_\ValuatA
\end{equation*}
\end{lem}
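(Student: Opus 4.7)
The plan is to proceed by structural induction on the 2-formula $\LogFormA$, following the case split used in the definition of $\LTrepl'_\LogFormA$ in figure~\ref{DefRepl}. Fix a valuation $\ValuatA$, a realizer $\CPOelE\in\RealVal{\forall\LogTermVarA(\LogFormB(\LogTermVarA)\Leftrightarrow\LogFormC(\LogTermVarA))}_\ValuatA$, and write $\CPOinterp{\LTrepl'_\LogFormA}_{\ValuatA\uplus\{\LTVarA\mapsto\CPOelE\}}$ as the object to be shown to lie in $\RealVal{\LogFormA(\LogFormB)\Leftrightarrow\LogFormA(\LogFormC)}_\ValuatA$. Unfolding $\lambda\LTVarA$ then immediately gives the lemma.

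For the two atomic cases (where the 2-formula has the shape $\LogPredVarForm{\LogPredVarA}\mapsto\LogBoolA$), I would first handle $\LogBoolA\equiv\LogBoolIn{\LogTermA}{\LogPredVarA}$, where $\LogFormA(\LogFormB)$ unfolds to $\LogFormB(\LogTermA)$ and $\LogFormA(\LogFormC)$ to $\LogFormC(\LogTermA)$: then the definition $\LTrepl'_\LogFormA=\LTVarA\,\LTinterp{\LogTermA}$ combined with $\CPOelE$ being a uniform realizer of the pointwise equivalence yields the result by instantiation at $\CPOinterp{\LTinterp{\LogTermA}}_\ValuatA\in\Lambda$. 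For the other atomic case $\LogBoolA$ contains no occurrence of $\LogPredVarA$, so $\LogFormA(\LogFormB)$ and $\LogFormA(\LogFormC)$ are literally the same formula $\LogBoolA$, and the pair $\LTPair{\lambda\LTVarB.\LTVarB}{\lambda\LTVarB.\LTVarB}$ of identities clearly realizes the reflexive equivalence.

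The inductive cases for $\Rightarrow$ and $\wedge$ are purely functorial: given by induction hypothesis realizers of $\LogFormA_i(\LogFormB)\Leftrightarrow\LogFormA_i(\LogFormC)$ for $i=1,2$, the pairs defined in figure~\ref{DefRepl} transport any realizer in one direction by pre-composing with the appropriate projection of $\LTrepl'_{\LogFormA_1}$ and post-composing with the appropriate projection of $\LTrepl'_{\LogFormA_2}$; this is routine unwinding of the realizability clauses for $\Rightarrow$ and $\wedge$. For $\forall\LogRelVarA\,\LogFormA$ with $\LogRelVarA$ of a computational sort, the realizer acts pointwise: given any concrete element $\RealNatA$, $\RealTermA$ or $\RealTermListA$ from the corresponding semantic domain, I would apply the induction hypothesis at the extended valuation $\ValuatA\uplus\{\LogRelVarA\mapsto\cdot\}$ and then package the result back into a function. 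The non-computational cases $\forall\LogPredVarA\,\LogFormA$ and $\forall\LogBoolVarA\,\LogFormA$ reduce to the inner induction hypothesis uniformly over $\RealPredA\in\mathcal{P}(\Lambda)$ or $\RealBoolA\in\{\RealBoolT;\RealBoolF\}$, exactly because $\LTinterp{\forall\LogPredVarA\,\LogFormA}=\LTinterp{\forall\LogBoolVarA\,\LogFormA}=\LTinterp{\LogFormA}$ and $\RealVal{\forall\LogPredVarA\,\LogFormA}$, $\RealVal{\forall\LogBoolVarA\,\LogFormA}$ are intersections over those parameters.

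The main obstacle I expect is not any single case in isolation but the bookkeeping around the 2-formula substitution mechanism: the side condition that capture of bound variables is avoided when forming $\LogFormA(\LogFormB)$ must be respected so that when I pass from $\LogFormA$ to a subformula $\LogFormA'$ in the induction I still have a well-formed 2-formula $\LogFormA'$ with the same placeholder, and so that the realizer $\CPOelE$ of $\forall\LogTermVarA(\LogFormB(\LogTermVarA)\Leftrightarrow\LogFormC(\LogTermVarA))$ remains applicable at every recursive call. Concretely, under the quantifier case $\forall\LogRelVarA\,\LogFormA$ one must check that $\CPOelE$ still realizes the correct equivalence under the extended valuation; this is where the closedness assumption on $\LogFormB$ and $\LogFormC$ (in the sense of containing no free variable of the ambient logic that would be captured) is used. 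Once this is checked carefully at the quantifier step, the remaining cases combine into the required equivalence.
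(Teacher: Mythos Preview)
Your proposal is correct and follows essentially the same approach as the paper: induction on the 2-formula $\LogFormA$, with the same case split and the same use of the closedness of $\LogFormB$ and $\LogFormC$ to keep the equivalence realizer $\CPOelE$ valid under extended valuations. The only place where the paper is slightly more explicit than you are is the non-computational quantifier case, where it spells out the passage from $\RealVal{\forall\LogPredVarA\left(\LogFormA_0\left(\LogFormB\right)\Leftrightarrow\LogFormA_0\left(\LogFormC\right)\right)}_\ValuatA$ to $\RealVal{\forall\LogPredVarA\,\LogFormA_0\left(\LogFormB\right)\Leftrightarrow\forall\LogPredVarA\,\LogFormA_0\left(\LogFormC\right)}_\ValuatA$ via the identities $\RealVal{\forall\LogPredVarA\left(\LogFormD\wedge\LogFormD'\right)}=\RealVal{\forall\LogPredVarA\,\LogFormD\wedge\forall\LogPredVarA\,\LogFormD'}$ and $\RealVal{\forall\LogPredVarA\left(\LogFormD\Rightarrow\LogFormD'\right)}\subseteq\RealVal{\forall\LogPredVarA\,\LogFormD\Rightarrow\forall\LogPredVarA\,\LogFormD'}$; this is exactly what your remark about intersections amounts to, so there is no gap.
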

\begin{proof}
We have to prove that if $\CPOelA\in\RealVal{\forall\LogTermVarA\left(\LogFormB\left(\LogTermVarA\right)\Leftrightarrow\LogFormC\left(\LogTermVarA\right)\right)}$, then:
\begin{equation*}
\CPOinterp{\lambda\LTVarA.\LTrepl'_\LogFormA}_\ValuatA\left(\CPOelA\right)=\CPOinterp{\LTrepl'_\LogFormA}_{\ValuatA\uplus\SetSuch{\LTVarA\mapsto\CPOelA}{}}\in\RealVal{\LogFormA\left(\LogFormB\right)\Leftrightarrow\LogFormA\left(\LogFormC\right)}_\ValuatA
\end{equation*}
We write $\ValuatA'=\ValuatA\uplus\SetSuch{\LTVarA\mapsto\CPOelA}{}$ and proceed by induction on $\LogFormA$:
\begin{itemize}
\item$\LogFormA\left(\LogFormD\right)\equiv\LogFormD\left(\LogTermA\right)$: since $\CPOinterp{\LTinterp{\LogTermA}}_\ValuatA\in\Lambda$, we have by hypothesis on $\CPOelA$:
\begin{align*}
\CPOinterp{\LTVarA\,\LTinterp{\LogTermA}}_{\ValuatA'}=\CPOelA\left(\CPOinterp{\LTinterp{\LogTermA}}_\ValuatA\right)\in\RealVal{\LogFormB\left(\LogTermVarA\right)\Leftrightarrow\LogFormC\left(\LogTermVarA\right)}_{\ValuatA\uplus\SetSuch{\LogTermVarA\mapsto\CPOinterp{\LTinterp{\LogTermA}}_\ValuatA}{}}&=\RealVal{\LogFormB\left(\LogTermA\right)\Leftrightarrow\LogFormC\left(\LogTermA\right)}_\ValuatA\\
&=\RealVal{\LogFormA\left(\LogFormB\right)\Leftrightarrow\LogFormA\left(\LogFormC\right)}_\ValuatA
\end{align*}
\item$\LogFormA\left(\LogFormD\right)\equiv\LogBoolA\not\equiv\LogFormD\left(\LogTermA\right)$: immediate since $\LogFormA\left(\LogFormB\right)\equiv\LogFormA\left(\LogFormC\right)$ in that case
\item$\LogFormA\left(\LogFormD\right)\equiv\LogFormA_1\left(\LogFormD\right)\Rightarrow\LogFormA_2\left(\LogFormD\right)$: we have by induction hypothesis:
\begin{equation*}
\CPOinterp{\LTProj_2\,\LTrepl'_{\LogFormA_1}}_{\ValuatA'}\in\RealVal{\LogFormA_1\left(\LogFormC\right)\Rightarrow\LogFormA_1\left(\LogFormB\right)}_\ValuatA
\end{equation*}
therefore if $\CPOelB\in\RealVal{\LogFormA\left(\LogFormB\right)}_\ValuatA$ and $\CPOelC\in\RealVal{\LogFormA_1\left(\LogFormC\right)}_\ValuatA$ we get:
\begin{equation*}
\CPOinterp{\CPOelB\left(\LTProj_2\,\LTrepl'_{\LogFormA_1}\,\CPOelC\right)}_{\ValuatA'}\in\RealVal{\LogFormA_2\left(\LogFormB\right)}_\ValuatA
\end{equation*}
but the second induction hypothesis gives:
\begin{equation*}
\CPOinterp{\LTProj_1\,\LTrepl'_{\LogFormA_2}}_{\ValuatA'}\in\RealVal{\LogFormA_2\left(\LogFormB\right)\Rightarrow\LogFormA_2\left(\LogFormC\right)}_\ValuatA
\end{equation*}
so we have:
\begin{equation*}
\CPOinterp{\LTProj_1\,\LTrepl'_{\LogFormA_2}\left(\CPOelB\left(\LTProj_2\,\LTrepl'_{\LogFormA_1}\,\CPOelC\right)\right)}_{\ValuatA'}\in\RealVal{\LogFormA_2\left(\LogFormC\right)}_\ValuatA
\end{equation*}
and therefore:
\begin{equation*}
\CPOinterp{\lambda\LTVarB\LTVarC.\LTProj_1\,\LTrepl'_{\LogFormA_2}\left(\LTVarB\left(\LTProj_2\,\LTrepl'_{\LogFormA_1}\,\LTVarC\right)\right)}_{\ValuatA'}\in\RealVal{\LogFormA\left(\LogFormB\right)\Rightarrow\LogFormA\left(\LogFormC\right)}_\ValuatA
\end{equation*}
similarly we have:
\begin{equation*}
\CPOinterp{\lambda\LTVarB\LTVarC.\LTProj_2\,\LTrepl'_{\LogFormA_2}\left(\LTVarB\left(\LTProj_1\,\LTrepl'_{\LogFormA_1}\,\LTVarC\right)\right)}_{\ValuatA'}\in\RealVal{\LogFormA\left(\LogFormC\right)\Rightarrow\LogFormA\left(\LogFormB\right)}_\ValuatA
\end{equation*}
and therefore $\CPOinterp{\LTrepl'_\LogFormA}_{\ValuatA'}\in\RealVal{\LogFormA\left(\LogFormB\right)\Leftrightarrow\LogFormA\left(\LogFormC\right)}_\ValuatA$
\item$\LogFormA\left(\LogFormD\right)\equiv\LogFormA_1\left(\LogFormD\right)\wedge\LogFormA_2\left(\LogFormD\right)$: we have by induction hypothesis:
\begin{equation*}
\CPOinterp{\LTProj_1\,\LTrepl'_{\LogFormA_1}}_{\ValuatA'}\in\RealVal{\LogFormA_1\left(\LogFormB\right)\Rightarrow\LogFormA_1\left(\LogFormC\right)}_\ValuatA
\end{equation*}
therefore if $\CPOelB\in\RealVal{\LogFormA\left(\LogFormB\right)}_\ValuatA$ we get:
\begin{equation*}
\CPOinterp{\LTProj_1\,\LTrepl'_{\LogFormA_1}\left(\LTProj_1\,\CPOelB\right)}_{\ValuatA'}\in\RealVal{\LogFormA_1\left(\LogFormC\right)}_\ValuatA
\end{equation*}
but the second induction hypothesis gives:
\begin{equation*}
\CPOinterp{\LTProj_1\,\LTrepl'_{\LogFormA_2}}_{\ValuatA'}\in\RealVal{\LogFormA_2\left(\LogFormB\right)\Rightarrow\LogFormA_2\left(\LogFormC\right)}_\ValuatA
\end{equation*}
so we have:
\begin{equation*}
\CPOinterp{\LTProj_1\,\LTrepl'_{\LogFormA_2}\left(\LTProj_2\,\CPOelB\right)}_{\ValuatA'}\in\RealVal{\LogFormA_2\left(\LogFormC\right)}_\ValuatA
\end{equation*}
and therefore:
\begin{equation*}
\CPOinterp{\lambda\LTVarB.\LTPair{\LTProj_1\,\LTrepl'_{\LogFormA_1}\left(\LTProj_1\,\LTVarB\right)}{\LTProj_1\,\LTrepl'_{\LogFormA_2}\left(\LTProj_2\,\LTVarB\right)}}_{\ValuatA'}\in\RealVal{\LogFormA\left(\LogFormB\right)\Rightarrow\LogFormA\left(\LogFormC\right)}_\ValuatA
\end{equation*}
similarly we have:
\begin{equation*}
\CPOinterp{\lambda\LTVarB.\LTPair{\LTProj_2\,\LTrepl'_{\LogFormA_1}\left(\LTProj_1\,\LTVarB\right)}{\LTProj_2\,\LTrepl'_{\LogFormA_2}\left(\LTProj_2\,\LTVarB\right)}}_{\ValuatA'}\in\RealVal{\LogFormA\left(\LogFormC\right)\Rightarrow\LogFormA\left(\LogFormB\right)}_\ValuatA
\end{equation*}
and therefore $\CPOinterp{\LTrepl'_\LogFormA}_{\ValuatA'}\in\RealVal{\LogFormA\left(\LogFormB\right)\Leftrightarrow\LogFormA\left(\LogFormC\right)}_\ValuatA$
\item$\LogFormA\left(\LogFormD\right)\equiv\forall\LogRelVarA\,\LogFormA_0\left(\LogFormD\right)$: we do the case $\LogRelVarA\equiv\LogTermVarA$, the other ones being similar. The induction hypothesis implies that for any $\RealTermA\in\Lambda$:
\begin{equation*}
\CPOinterp{\LTProj_1\,\LTrepl'_{\LogFormA_0}}_{\ValuatA'\uplus\SetSuch{\LogTermVarA\mapsto\RealTermA}{}}\in\RealVal{\LogFormA_0\left(\LogFormB\right)\Rightarrow\LogFormA_0\left(\LogFormC\right)}_{\ValuatA\uplus\SetSuch{\LogTermVarA\mapsto\RealTermA}{}}
\end{equation*}
if $\CPOelB\in\RealVal{\LogFormA\left(\LogFormB\right)}_\ValuatA$ and $\RealTermA\in\Lambda$ then:
\begin{equation*}
\CPOinterp{\CPOelB\,\LogTermVarA}_{\ValuatA'\uplus\SetSuch{\LogTermVarA\mapsto\RealTermA}{}}=\CPOelB\left(\RealTermA\right)\in\RealVal{\LogFormA_0\left(\LogFormB\right)}_{\ValuatA\uplus\SetSuch{\LogTermVarA\mapsto\RealTermA}{}}
\end{equation*}
so we have:
\begin{equation*}
\CPOinterp{\LTProj_1\,\LTrepl'_{\LogFormA_0}\left(\CPOelB\,\LogTermVarA\right)}_{\ValuatA'\uplus\SetSuch{\LogTermVarA\mapsto\RealTermA}{}}\in\RealVal{\LogFormA_0\left(\LogFormC\right)}_{\ValuatA\uplus\SetSuch{\LogTermVarA\mapsto\RealTermA}{}}
\end{equation*}
and therefore:
\begin{equation*}
\CPOinterp{\lambda\LTVarB\LogTermVarA.\LTProj_1\,\LTrepl'_{\LogFormA_0}\left(\LTVarB\,\LogTermVarA\right)}_{\ValuatA'}\in\RealVal{\LogFormA\left(\LogFormB\right)\Rightarrow\LogFormA\left(\LogFormC\right)}_\ValuatA
\end{equation*}
similarly we have:
\begin{equation*}
\CPOinterp{\lambda\LTVarB\LogTermVarA.\LTProj_2\,\LTrepl'_{\LogFormA_0}\left(\LTVarB\,\LogTermVarA\right)}_{\ValuatA'}\in\RealVal{\LogFormA\left(\LogFormC\right)\Rightarrow\LogFormA\left(\LogFormB\right)}_\ValuatA
\end{equation*}
and therefore $\CPOinterp{\LTrepl'_\LogFormA}_{\ValuatA'}\in\RealVal{\LogFormA\left(\LogFormB\right)\Leftrightarrow\LogFormA\left(\LogFormC\right)}_\ValuatA$
\item$\LogFormA\left(\LogFormD\right)\equiv\forall\LogPredVarA\,\LogFormA_0\left(\LogFormD\right)$ or $\LogFormA\left(\LogFormD\right)\equiv\forall\LogBoolVarA\,\LogFormA_0\left(\LogFormD\right)$: we treat only the case of $\LogPredVarA$ since the other one is similar. The induction hypothesis implies that for any $\RealPredA\subseteq\Lambda$:
\begin{equation*}
\CPOinterp{\LTrepl'_{\LogFormA_0}}_{\ValuatA'\uplus\SetSuch{\LogPredVarA\mapsto\RealPredA}{}}\in\RealVal{\LogFormA_0\left(\LogFormB\right)\Leftrightarrow\LogFormA_0\left(\LogFormC\right)}_{\ValuatA\uplus\SetSuch{\LogPredVarA\mapsto\RealPredA}{}}
\end{equation*}
but since $\LTrepl'_{\LogFormA_0}$ does not contain variable $\LogPredVarA$ we get:
\begin{equation*}
\CPOinterp{\LTrepl'_{\LogFormA_0}}_{\ValuatA'}\in\RealVal{\LogFormA_0\left(\LogFormB\right)\Leftrightarrow\LogFormA_0\left(\LogFormC\right)}_{\ValuatA\uplus\SetSuch{\LogPredVarA\mapsto\RealPredA}{}}
\end{equation*}
so we get:
\begin{equation*}
\CPOinterp{\LTrepl'_{\LogFormA_0}}_{\ValuatA'}\in\RealVal{\forall\LogPredVarA\left(\LogFormA_0\left(\LogFormB\right)\Leftrightarrow\LogFormA_0\left(\LogFormC\right)\right)}_\ValuatA
\end{equation*}
but since for any closed formulas with parameters $\LogFormD$ and $\LogFormD'$ we have:
\begin{gather*}
\RealVal{\forall\LogPredVarA\left(\LogFormD\wedge\LogFormD'\right)}=\RealVal{\forall\LogPredVarA\,\LogFormD\wedge\forall\LogPredVarA\,\LogFormD'}\\
\RealVal{\forall\LogPredVarA\left(\LogFormD\Rightarrow\LogFormD'\right)}\subseteq\RealVal{\forall\LogPredVarA\,\LogFormD\Rightarrow\forall\LogPredVarA\,\LogFormD'}
\end{gather*}
we then obtain $\CPOinterp{\LTrepl'_\LogFormA}_{\ValuatA'}\in\RealVal{\LogFormA\left(\LogFormB\right)\Leftrightarrow\LogFormA\left(\LogFormC\right)}_\ValuatA$\qedhere
\end{itemize}
\end{proof}
We can now interpret the instantiation of a set variable with an arbitrary 1-formula:
\begin{equation*}
\forall\LogPredVarA\,\LogFormA\left(\LogPredVarForm{\LogPredVarA}\right)\Rightarrow\LogFormA\left(\LogFormB\right)
\end{equation*}
Since the existential quantifier is not primitive in our logic, our version of the axiom scheme of comprehension is in fact:
\begin{equation*}
\neg\forall\LogPredVarA\,\neg\forall\LogTermVarA\left(\LogBoolIn{\LogTermVarA}{\LogPredVarA}\Leftrightarrow\LogFormB\left(\LogTermVarA\right)\right)
\end{equation*}
therefore, the elimination of such an existential quantifier will require classical logic. Our realizer $\LTelim_{\LogFormA,\LogFormB}$ of second-order elimination (where $\LogFormA\left(\LogFormC\right)$ is a 2-formula and $\LogFormB\left(\LogTermA\right)$ is a 1-formula) is such that $\FV{\LTelim_{\LogFormA,\LogFormB}}=\FV{\LogFormA}\cap\LogRelVarA$ (that is, the free variables of $\LTelim_{\LogFormA,\LogFormB}$ are the free variables of $\LogFormA$ which are of a computational sort) and is defined as:
\begin{equation*}
\LTelim_{\LogFormA,\LogFormB}=\lambda\LTVarA.\LTdne_{\LogFormA\left(\LogFormB\right)}\left(\lambda\LTVarB.\LTcomp_{\LogFormB}\left(\lambda\LTVarC.\LTVarB\left(\LTProj_1\left(\LTrepl_\LogFormA\,\LTVarC\right)\LTVarA\right)\right)\right)
\end{equation*}
Correctness of this realizer is then an easy consequence of the lemmas above:
\begin{lem}
\label{RealElim}
If $\LogFormA\left(\LogFormC\right)$ is a 2-formula such that $\LogPredVarA\notin\FV{\LogFormA}$ (meaning that $\LogPredVarA\notin\FV{\LogFormA\left(\LogFormC\right)}$ whenever $\LogPredVarA\notin\FV{\LogFormC}$), if $\LogFormB\left(\LogTermA\right)$ is a closed 1-formula with parameters and if $\ValuatA$ is a valuation on $\LogFormA$ then:
\begin{equation*}
\CPOinterp{\LTelim_{\LogFormA,\LogFormB}}_\ValuatA\in\RealVal{\forall\LogPredVarA\,\LogFormA\left(\LogPredVarForm{\LogPredVarA}\right)\Rightarrow\LogFormA\left(\LogFormB\right)}_\ValuatA
\end{equation*}
\end{lem}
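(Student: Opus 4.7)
The proof plan is to unfold the definition of $\LTelim_{\LogFormA,\LogFormB}$ and peel off the constructions $\LTdne$, $\LTcomp$, and $\LTrepl$ in turn, using their already-established realizability properties together with the uniformity of $\RealVal{\forall\LogPredVarA\,\_}$. Fix a valuation $\ValuatA$ on $\LogFormA$. Given $\CPOelA\in\RealVal{\forall\LogPredVarA\,\LogFormA(\LogPredVarForm{\LogPredVarA})}_\ValuatA$, I want to show that the remaining body realizes $\LogFormA(\LogFormB)$. Since $\LTdne_{\LogFormA(\LogFormB)}$ realizes $\neg\neg\LogFormA(\LogFormB)\Rightarrow\LogFormA(\LogFormB)$, the plan reduces to establishing
\begin{equation*}
\CPOinterp{\lambda\LTVarB.\LTcomp_{\LogFormB}(\lambda\LTVarC.\LTVarB(\LTProj_1(\LTrepl_\LogFormA\,\LTVarC)\,\CPOelA))}_\ValuatA\in\RealVal{\neg\neg\LogFormA(\LogFormB)}_\ValuatA.
\end{equation*}

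So I take an arbitrary $\CPOelB\in\RealVal{\neg\LogFormA(\LogFormB)}_\ValuatA$ and aim to show that the application lands in $\RealVal{\LogBoolF}$. Recall that $\LTcomp_{\LogFormB}$ realizes $\exists\LogPredVarA\,\forall\LogTermVarA(\LogBoolIn{\LogTermVarA}{\LogPredVarA}\Leftrightarrow\LogFormB(\LogTermVarA))$, which by the $\neg\forall\neg$ encoding of $\exists$ means its argument must lie in $\RealVal{\forall\LogPredVarA\,\neg\forall\LogTermVarA(\LogBoolIn{\LogTermVarA}{\LogPredVarA}\Leftrightarrow\LogFormB(\LogTermVarA))}_\ValuatA$. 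So I pick an arbitrary $\RealPredA\subseteq\Lambda$ and an arbitrary $\CPOelC\in\RealVal{\forall\LogTermVarA(\LogBoolIn{\LogTermVarA}{\RealPredA}\Leftrightarrow\LogFormB(\LogTermVarA))}_\ValuatA$ and must conclude $\CPOinterp{\CPOelB(\LTProj_1(\LTrepl_\LogFormA\,\CPOelC)\,\CPOelA)}_\ValuatA\in\RealVal{\LogBoolF}$.

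The key bridge is Lemma~\ref{RealRepl}, applied with the 1-formulas $\LogPredVarForm{\LogPredVarA}$ and $\LogFormB$ under the extended valuation $\ValuatA' = \ValuatA\uplus\SetSuch{\LogPredVarA\mapsto\RealPredA}{}$. Since the closed parametric form of $\LogPredVarForm{\LogPredVarA}(\LogTermA)$ under $\ValuatA'$ is exactly $\LogBoolIn{\LogTermA}{\RealPredA}$, the hypothesis $\CPOelC\in\RealVal{\forall\LogTermVarA(\LogPredVarForm{\LogPredVarA}(\LogTermVarA)\Leftrightarrow\LogFormB(\LogTermVarA))}_{\ValuatA'}$ holds, so Lemma~\ref{RealRepl} yields
\begin{equation*}
\CPOinterp{\LTrepl_\LogFormA\,\CPOelC}_{\ValuatA'}\in\RealVal{\LogFormA(\LogPredVarForm{\LogPredVarA})\Leftrightarrow\LogFormA(\LogFormB)}_{\ValuatA'},
\end{equation*}
and projecting gives a realizer of $\LogFormA(\LogPredVarForm{\LogPredVarA})\Rightarrow\LogFormA(\LogFormB)$ under $\ValuatA'$. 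Note that $\LTrepl_\LogFormA$ contains no free computational variable depending on $\LogPredVarA$, so its interpretation agrees under $\ValuatA$ and $\ValuatA'$.

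Finally I combine with the hypothesis on $\CPOelA$: since $\CPOelA$ is in $\bigcap_{\RealPredA}\RealVal{\LogFormA(\LogPredVarForm{\LogPredVarA})}_{\ValuatA\uplus\SetSuch{\LogPredVarA\mapsto\RealPredA}{}}$, in particular $\CPOelA\in\RealVal{\LogFormA(\LogPredVarForm{\LogPredVarA})}_{\ValuatA'}$. Applying the extracted implication gives $\CPOinterp{\LTProj_1(\LTrepl_\LogFormA\,\CPOelC)\,\CPOelA}_{\ValuatA'}\in\RealVal{\LogFormA(\LogFormB)}_{\ValuatA'}$, and because $\LogPredVarA\notin\FV{\LogFormA}$ (and in particular $\LogPredVarA\notin\FV{\LogFormB}$) we have $\RealVal{\LogFormA(\LogFormB)}_{\ValuatA'}=\RealVal{\LogFormA(\LogFormB)}_\ValuatA$. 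Applying $\CPOelB$ then produces the desired element of $\RealVal{\LogBoolF}$. The only delicate point I expect is the bookkeeping around the side condition $\LogPredVarA\notin\FV{\LogFormA}$ to guarantee that changing the valuation on $\LogPredVarA$ leaves the realizability values of $\LogFormA(\LogFormB)$ invariant; everything else is a straightforward composition of the already-proven realizers.
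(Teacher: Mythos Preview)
Your proposal is correct and follows essentially the same approach as the paper: peel off $\LTdne$, then $\LTcomp$, then invoke Lemma~\ref{RealRepl} at the chosen $\RealPredA$ to turn $\CPOelA$ into a realizer of $\LogFormA(\LogFormB)$, and finally apply $\CPOelB$. Your extra bookkeeping about the valuation extension $\ValuatA'$ versus $\ValuatA$ is handled in the paper by working directly with the parametrized formula $\LogPredVarForm{\RealPredA}$, but the argument is the same.
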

\begin{proof}
Let $\CPOelA\in\RealVal{\forall\LogPredVarA\,\LogFormA\left(\LogPredVarForm{\LogPredVarA}\right)}_\ValuatA$. Since:
\begin{equation*}
\CPOinterp{\LTdne_{\LogFormA\left(\LogFormB\right)}}\in\RealVal{\neg\neg\LogFormA\left(\LogFormB\right)\Rightarrow\LogFormA\left(\LogFormB\right)}_\ValuatA
\end{equation*}
we are left to prove:
\begin{equation*}
\CPOinterp{\lambda\LTVarB.\LTcomp_\LogFormB\left(\lambda\LTVarC.\LTVarB\left(\LTProj_1\left(\LTrepl_\LogFormA\,\LTVarC\right)\CPOelA\right)\right)}_\ValuatA\in\RealVal{\neg\neg\LogFormA\left(\LogFormB\right)}_\ValuatA
\end{equation*}
Let $\CPOelB\in\RealVal{\neg\LogFormA\left(\LogFormB\right)}_\ValuatA$. Since:
\begin{equation*}
\CPOinterp{\LTcomp_\LogFormB}\in\RealVal{\neg\forall\LogPredVarA\,\neg\forall\LogTermVarA\left(\LogBoolIn{\LogTermVarA}{\LogPredVarA}\Leftrightarrow\LogFormB\left(\LogTermVarA\right)\right)}
\end{equation*}
we are left to prove:
\begin{equation*}
\CPOinterp{\lambda\LTVarC.\CPOelB\left(\LTProj_1\left(\LTrepl_\LogFormA\,\LTVarC\right)\CPOelA\right)}_\ValuatA\in\RealVal{\forall\LogPredVarA\,\neg\forall\LogTermVarA\left(\LogBoolIn{\LogTermVarA}{\LogPredVarA}\Leftrightarrow\LogFormB\left(\LogTermVarA\right)\right)}_\ValuatA
\end{equation*}
Let $\RealPredA\subseteq\Lambda$ and $\CPOelC\in\RealVal{\forall\LogTermVarA\left(\LogBoolIn{\LogTermVarA}{\RealPredA}\Leftrightarrow\LogFormB\left(\LogTermVarA\right)\right)}_\ValuatA$. We need to prove:
\begin{equation*}
\CPOinterp{\CPOelB\left(\LTProj_1\left(\LTrepl_\LogFormA\,\CPOelC\right)\CPOelA\right)}_\ValuatA\in\RealVal{\LogBoolF}_\ValuatA
\end{equation*}
but we have:
\begin{equation*}
\CPOinterp{\LTProj_1\left(\LTrepl_\LogFormA\,\CPOelC\right)}_\ValuatA\in\RealVal{\LogFormA\left(\LogPredVarForm{\RealPredA}\right)\Rightarrow\LogFormA\left(\LogFormB\right)}_\ValuatA
\end{equation*}
and since $\CPOelA\in\RealVal{\LogFormA\left(\LogPredVarForm{\RealPredA}\right)}_\ValuatA$ we get:
\begin{equation*}
\CPOinterp{\LTProj_1\left(\LTrepl_\LogFormA\,\CPOelC\right)\CPOelA}_\ValuatA\in\RealVal{\LogFormA\left(\LogFormB\right)}_\ValuatA
\end{equation*}
finally, since $\CPOelB\in\RealVal{\neg\LogFormA\left(\LogFormB\right)}_\ValuatA$ we obtain $\CPOinterp{\CPOelB\left(\LTProj_1\left(\LTrepl_\LogFormA\,\CPOelC\right)\CPOelA\right)}_\ValuatA\in\RealVal{\LogBoolF}_\ValuatA$
\end{proof}
\subsection{Realizing normalization of system F}
We now have an interpretation of full second-order arithmetic. Therefore, we can describe the details of our interpretation of the proof of normalization of system F given in section~\ref{NormProof} using the realizer $\LTelim$ of the previous section. The first realizer corresponds to lemma~\ref{NormRC} and is defined as follows:
\begin{equation*}
\LTnormrc=\LTPair{\LTPair{\LTnormrc^{(1)}}{\LTnormrc^{(2)}}}{\LTnormrc^{(3)}}
\end{equation*}
where:
\begin{align*}
\LTnormrc^{(1)}&=\lambda\LogTermListVarA\LTVarA.\LTVarA\,\LTNatZ
&
\LTnormrc^{(2)}&=\lambda\LogTermVarA\LTVarA.\LTVarA
&
\LTnormrc^{(3)}&=\lambda\LogTermVarA\LogTermVarB\LogTermListVarA\LTVarA\LTVarB.\LTVarA\left(\lambda\LogNatVarA.\LTVarB\left(\LTNatS\,\LogNatVarA\right)\right)
\end{align*}
$\LTnormrc$ can be shown to be the computational interpretation of the proof of lemma~\ref{NormRC}:
\begin{lem}
\label{RealNormRC}
$\CPOinterp{\LTnormrc}\in\RealVal{\LogRedCand{\LogNormForm}}$
\end{lem}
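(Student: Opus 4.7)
The plan is to unfold $\LogRedCand{\LogNormForm}$ into its three conjuncts and verify, in turn, that $\LTnormrc^{(1)}$, $\LTnormrc^{(2)}$ and $\LTnormrc^{(3)}$ realize each of them. Because $\RealVal{\_ \wedge \_}$ is defined as a pairwise product, it suffices that the pair-of-pairs $\LTnormrc$ has its three projections realize respectively: (i) $\forall\LogTermListVarA\,\LogNorm{\LogTermApp{\LogTermVar{\LogNatZ}}{\LogTermListVarA}}$, (ii) $\forall\LogTermVarA(\LogNorm{\LogTermVarA}\Rightarrow\LogNorm{\LogTermVarA})$, and (iii) $\forall\LogTermVarA\forall\LogTermVarB\forall\LogTermListVarA(\LogNorm{\LogTermApp{\LogTermSubst{\LogTermVarA}{\LogTermVarB}}{\LogTermListVarA}}\Rightarrow\LogNorm{\LogTermApp{\LogTermApp{(\LogTermLam{\LogTermVarA})}{\LogTermListAbbrv{\LogTermVarB}}}{\LogTermListVarA}})$. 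In each case I unwind $\LogNorm{\LogTermA}\Def\neg\forall\LogNatVarA\,\LogNNorm{\LogTermA}{\LogNatVarA}$, so realizers take a supposed realizer of $\forall\LogNatVarA\,\LogNNorm{\_}{\LogNatVarA}$ and have to produce an element of $\RealVal{\LogBoolF}=\RealBot$.

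For the first conjunct, fix $\RealTermListA\in\Lambda^*$ and suppose $v\in\RealVal{\forall\LogNatVarA\,\LogNNorm{\LogTermApp{\LogTermVar{\LogNatZ}}{\RealTermListA}}{\LogNatVarA}}$. Applying $v$ to the natural number $0$ (which is $\CPOinterp{\LTNatZ}$) gives an element of $\RealVal{\LogNNorm{\LogTermApp{\LogTermVar{\LogNatZ}}{\RealTermListA}}{0}}$. Since $\LogTermApp{\LogTermVar{\LogNatZ}}{\RealTermListA}$ is in weak head normal form, this realizability value is $\RealBot$, so $v(0)\in\RealBot=\RealVal{\LogBoolF}$. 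The second conjunct is immediate: the identity function sends any realizer of $\LogNorm{\RealTermA}$ to itself.

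For the third, fix $\RealTermA,\RealTermB\in\Lambda$, $\RealTermListA\in\Lambda^*$, let $\CPOelA$ realize $\LogNorm{\LogTermApp{\RealTermA\FormalSubst{}{\RealTermB}}{\RealTermListA}}$, and let $\CPOelB$ realize $\forall\LogNatVarA\,\LogNNorm{\LogTermApp{\LogTermApp{(\LogTermLam{\RealTermA})}{\LogTermListAbbrv{\RealTermB}}}{\RealTermListA}}{\LogNatVarA}$. I will show that $\lambda\LogNatVarA.\CPOelB(\LTNatS\,\LogNatVarA)$ realizes $\forall\LogNatVarA\,\LogNNorm{\LogTermApp{\RealTermA\FormalSubst{}{\RealTermB}}{\RealTermListA}}{\LogNatVarA}$, so that feeding it to $\CPOelA$ yields an element of $\RealBot$ as required. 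For each $\RealNatA\in\mathbb{N}$, $\CPOelB(\RealNatA+1)$ lies in $\RealVal{\LogNNorm{\LogTermApp{\LogTermApp{(\LogTermLam{\RealTermA})}{\LogTermListAbbrv{\RealTermB}}}{\RealTermListA}}{\RealNatA+1}}$; by the weak head reduction rule $\LogTermApp{\LogTermApp{(\LogTermLam{\RealTermA})}{\LogTermB}}{\LogTermListA}\FRed\LogTermApp{\RealTermA\FormalSubst{}{\RealTermB}}{\LogTermListA}$, whenever the redex can run for $\RealNatA+1$ non-normalizing steps, its one-step reduct can run for $\RealNatA$ such steps; hence the inclusion $\RealVal{\LogNNorm{\LogTermApp{\LogTermApp{(\LogTermLam{\RealTermA})}{\LogTermListAbbrv{\RealTermB}}}{\RealTermListA}}{\RealNatA+1}}\subseteq\RealVal{\LogNNorm{\LogTermApp{\RealTermA\FormalSubst{}{\RealTermB}}{\RealTermListA}}{\RealNatA}}$ holds, which is exactly what is needed.

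There is no real obstacle: every case reduces to mechanically unfolding the definitions of $\RealVal{\_}$, $\LogNorm{\_}$ and $\LogNNorm{\_}{\_}$, plus the single observation that one step of weak head reduction turns the redex $\LogTermApp{\LogTermApp{(\LogTermLam{\RealTermA})}{\LogTermListAbbrv{\RealTermB}}}{\RealTermListA}$ into $\LogTermApp{\RealTermA\FormalSubst{}{\RealTermB}}{\RealTermListA}$. The only mildly delicate point is keeping track of which realizability values are $\mathbb{N}_\CPObot$ and which are $\RealBot$, and making sure to shift the step-count by $\LTNatS$ in the redex case so that the semantic reduction step is charged to the right side of the implication.
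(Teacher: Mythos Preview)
Your proposal is correct and follows essentially the same approach as the paper: both decompose $\LogRedCand{\LogNormForm}$ into its three conjuncts and verify that $\LTnormrc^{(1)}$, $\LTnormrc^{(2)}$, $\LTnormrc^{(3)}$ realize them, using in the third case the single weak head reduction step to relate the step counts. The only cosmetic difference is that you state an inclusion of realizability values where the paper states an equality, but the inclusion is all that is needed.
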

\begin{proof}
\begin{itemize}
\item$\CPOinterp{\LTnormrc^{(1)}}\in\RealVal{\forall\LogTermListVarA\,\neg\forall\LogNatVarA\,\LogNNorm{\LogTermApp{\LogTermVar{\LogNatZ}}{\LogTermListVarA}}{\LogNatVarA}}$: let $\RealTermListA\in\Lambda^*$ and let $\CPOelA\in\RealVal{\forall\LogNatVarA\,\LogNNorm{\LogTermApp{\LogTermVar{\LogNatZ}}{\RealTermListA}}{\LogNatVarA}}$. Then we have $\CPOinterp{\CPOelA\,\LTNatZ}=\CPOelA\left(0\right)\in\RealVal{\LogNNorm{\LogTermApp{\LogTermVar{\LogNatZ}}{\RealTermListA}}{0}}$. Since $\CPOinterp{\LTinterp{\left(\LogTermApp{\LogTermVar{\LogNatZ}}{\RealTermListA}\right)}}=\CPOinterp{\LTLamListApp\,\left(\LTLamVar\,\LTNatZ\right)\RealTermListA}$ is in head normal form, $\RealVal{\LogNNorm{\LogTermApp{\LogTermVar{\LogNatZ}}{\RealTermListA}}{0}}=\RealVal{\LogBoolF}$ and therefore $\CPOelA\left(0\right)\in\RealVal{\LogBoolF}$
\item$\CPOinterp{\LTnormrc^{(2)}}\in\RealVal{\forall\LogTermVarA\left(\LogNorm{\LogTermVarA}\Rightarrow\LogNorm{\LogTermVarA}\right)}$: immediate
\item$\CPOinterp{\LTnormrc^{(3)}}\in\RealVal{\forall\LogTermVarA\,\forall\LogTermVarB\,\forall\LogTermListVarA\left(\LogNorm{\left(\LogTermApp{\LogTermSubst{\LogTermVarA}{\LogTermVarB}}{\LogTermListVarA}\right)}\Rightarrow\LogNorm{\left(\LogTermApp{\LogTermApp{\left(\LogTermLam{\LogTermVarA}\right)}{\LogTermListAbbrv{\LogTermVarB}}}{\LogTermListVarA}\right)}\right)}$: let $\RealTermA\in\Lambda$, $\RealTermB\in\Lambda$, $\RealTermListA\in\Lambda^*$ and let $\CPOelA\in\RealVal{\LogNorm{\left(\LogTermApp{\LogTermSubst{\RealTermA}{\RealTermB}}{\RealTermListA}\right)}}$ and $\CPOelB\in\RealVal{\forall\LogNatVarA\,\LogNNorm{\left(\LogTermApp{\LogTermApp{\left(\LogTermLam{\RealTermA}\right)}{\LogTermListAbbrv{\RealTermB}}}{\RealTermListA}\right)}{\LogNatVarA}}$. We have to prove that:
\begin{equation*}
\CPOinterp{\CPOelA\left(\lambda\LogNatVarA.\CPOelB\left(\LTNatS\,\LogNatVarA\right)\right)}\in\RealVal{\LogBoolF}
\end{equation*}
but since $\CPOelA\in\RealVal{\LogNorm{\left(\LogTermApp{\LogTermSubst{\RealTermA}{\RealTermB}}{\RealTermListA}\right)}}$, this reduces to:
\begin{equation*}
\CPOinterp{\lambda\LogNatVarA.\CPOelB\left(\LTNatS\,\LogNatVarA\right)}\in\RealVal{\forall\LogNatVarA\,\LogNNorm{\left(\LogTermApp{\LogTermSubst{\RealTermA}{\RealTermB}}{\RealTermListA}\right)}{\LogNatVarA}}
\end{equation*}
Let $\RealNatA\in\mathbb{N}$, we need to prove:
\begin{equation*}
\CPOelB\left(\RealNatA+1\right)\in\RealVal{\LogNNorm{\left(\LogTermApp{\LogTermSubst{\RealTermA}{\RealTermB}}{\RealTermListA}\right)}{\RealNatA}}
\end{equation*}
But $\CPOelB\left(\RealNatA+1\right)\in\RealVal{\LogNNorm{\left(\LogTermApp{\LogTermApp{\left(\LogTermLam{\RealTermA}\right)}{\LogTermListAbbrv{\RealTermB}}}{\RealTermListA}\right)}{\RealNatA+1}}$ and:
\begin{equation*}
\CPOinterp{\LTLamListApp\left(\LTLamApp\left(\LTLamAbs\,\RealTermA\right)\RealTermB\right)\RealTermListA}\FRed\CPOinterp{\LTLamListApp\left(\LTLamSubst\,\RealTermA\,\LTNatZ\left(\LTLamListCons\,\LTLamListNil\,\RealTermB\right)\right)\RealTermListA}
\end{equation*}
for weak head reduction and therefore:
\begin{equation*}
\RealVal{\LogNNorm{\left(\LogTermApp{\LogTermApp{\left(\LogTermLam{\RealTermA}\right)}{\LogTermListAbbrv{\RealTermB}}}{\RealTermListA}\right)}{\RealNatA+1}}=\RealVal{\LogNNorm{\left(\LogTermApp{\LogTermSubst{\RealTermA}{\RealTermB}}{\RealTermListA}\right)}{\RealNatA}}
\end{equation*}
which concludes the proof.\qedhere
\end{itemize}
\end{proof}
We now give the interpretation of the proof of lemma~\ref{RCisRC}: if $\LogRedCand{\LogPredVarForm{\LogPredVarA}}$ for each $\LogPredVarA\in\FV{\FTypeA}$, then $\LogRedCand{\LogRC{\FTypeA}}$. For that we inductively define in figure~\ref{IsrcDef} for each type $\FTypeA$ of system F built from variables $\LogPredVarA$ of the logic a term:
\begin{equation*}
\LTisrc_\FTypeA=\LTPair{\LTPair{\LTisrc_\FTypeA^{(1)}}{\LTisrc_\FTypeA^{(2)}}}{\LTisrc_\FTypeA^{(3)}}
\end{equation*}
such that $\FV{\LTisrc_\FTypeA}=\SetSuch{\LTVarA_\LogPredVarA}{\LogPredVarA\in\FV{\FTypeA}}$.
\begin{figure*}
\begin{gather*}
\begin{align*}
\LTisrc_\LogPredVarA^{(1)}&=\LTProj_1\left(\LTProj_1\,\LTVarA_\LogPredVarA\right)
&
\LTisrc_\LogPredVarA^{(2)}&=\LTProj_2\left(\LTProj_1\,\LTVarA_\LogPredVarA\right)
&
\LTisrc_\LogPredVarA^{(3)}&=\LTProj_2\,\LTVarA_\LogPredVarA
&
\end{align*}
\\
\begin{flalign*}
\LTisrc_{\FTypeA\to\FTypeB}^{(1)}&=\lambda\LogTermListVarA\LogTermVarA\LTVarA.\LTisrc_\FTypeB^{(1)}\left(\LTLamListCons\,\LogTermListVarA\,\LogTermVarA\right)
&
\LTisrc_{\FTypeA\to\FTypeB}^{(3)}&=\lambda\LogTermVarA\LogTermVarB\LogTermListVarA\LTVarA\LogTermVarC\LTVarB.\LTisrc_\FTypeB^{(3)}\,\LogTermVarA\,\LogTermVarB\left(
\LTLamListCons\,\LogTermListVarA\,\LogTermVarC\right)\left(\LTVarA\,\LogTermVarC\,\LTVarB\right)
\end{flalign*}
\\
\LTisrc_{\FTypeA\to\FTypeB}^{(2)}=\lambda\LogTermVarA\LTVarA.\LTisrc_\FTypeB^{(2)}\left(\LTLamApp\,\LogTermVarA\left(\LTLamVar\,\LTNatZ\right)\right)\left(\LTVarA\left(\LTLamVar\,\LTNatZ\right)\left(\LTisrc_\FTypeA^{(1)}\,\LTLamListNil\right)\right)
\\
\LTisrc_{\forall\LogPredVarA\,\FTypeA}^{(1)}=\lambda\LogTermListVarA\LTVarA_\LogPredVarA.\LTisrc_\FTypeA^{(1)}\LogTermListVarA
\qquad
\LTisrc_{\forall\LogPredVarA\,\FTypeA}^{(3)}=\lambda\LogTermVarA\LogTermVarB\LogTermListVarA\LTVarB\LTVarA_\LogPredVarA.\LTisrc_{\FTypeA}^{(3)}\,\LogTermVarA\,\LogTermVarB\,\LogTermListVarA\left(\LTVarB\,\LTVarA_\LogPredVarA\right)
\\
\begin{flalign*}
&\LTisrc_{\forall\LogPredVarA\,\FTypeA}^{(2)}=\lambda\LogTermVarA\LTVarA.\LTelim_{\LogPredVarForm{\LogPredVarA}\mapsto\LogRedCand{\LogPredVarForm{\LogPredVarA}}\Rightarrow\forall\LogTermVarA\left(\LogRC{\FTypeA}\left(\LogTermVarA\right)\Rightarrow\LogNorm{\LogTermVarA}\right),\LogNormForm}\left(\lambda\LTVarA_\LogPredVarA.\LTisrc_\FTypeA^{(2)}\right)\LTnormrc\,\LogTermVarA&
\end{flalign*}
\\
\begin{flalign*}
&&\left(\LTelim_{\LogPredVarForm{\LogPredVarA}\mapsto\LogRedCand{\LogPredVarForm{\LogPredVarA}}\Rightarrow\LogRC{\FTypeA}\left(\LogTermVarA\right),\LogNormForm}\,\LTVarA\,\LTnormrc\right)
\end{flalign*}
\end{gather*}
\vspace{-15pt}
\caption{Definition of $\LTisrc_\FTypeA$}
\label{IsrcDef}
\end{figure*}
Our claim is then that if we substitute a realizer of $\LogRedCand{\LogPredVarForm{\LogPredVarA}}$ for each corresponding variable $\LTVarA_\LogPredVarA$, we obtain a realizer of $\LogRedCand{\LogRC{\FTypeA}}$:
\begin{lem}
\label{RealIsRC}
If $\FTypeA$ is a type of system F in which type variables are variables $\LogPredVarA$ of the logic, if $\ValuatA:\FV{\FTypeA}\to\mathcal{P}\left(\Lambda\right)$ is a valuation on $\LogRedCand{\LogRC{\FTypeA}}$ and if:
\begin{equation*}
\ValuatA':\SetSuch{\LTVarA_\LogPredVarA}{\LogPredVarA\in\FV{\FTypeA}}\to\CPOinterp{\LTinterp{\LogRedCand{\LogPredVarForm{\LogPredVarA}}}}
\end{equation*}
(this codomain does not depend on the particular $\LogPredVarA$ chosen) is a valuation on $\LTisrc_\FTypeA$ such that $\ValuatA'\left(\LTVarA_\LogPredVarA\right)\in\RealVal{\LogRedCand{\LogPredVarForm{\LogPredVarA}}}_\ValuatA$ for each $\LogPredVarA\in\FV{\FTypeA}$, then:
\begin{equation*}
\CPOinterp{\LTisrc_\FTypeA}_{\ValuatA'}\in\RealVal{\LogRedCand{\LogRC{\FTypeA}}}_\ValuatA
\end{equation*}
\end{lem}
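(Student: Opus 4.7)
The plan is to proceed by induction on $\FTypeA$, following the structure of $\LTisrc_\FTypeA$ in figure~\ref{IsrcDef} and mirroring the informal proof of lemma~\ref{RCisRC}. For each case we verify that the three components $\LTisrc_\FTypeA^{(1)}, \LTisrc_\FTypeA^{(2)}, \LTisrc_\FTypeA^{(3)}$ realize the three conjuncts of $\LogRedCand{\LogRC{\FTypeA}}$ under $\ValuatA'$, with realizability values computed under $\ValuatA$. The base case $\FTypeA\equiv\LogPredVarA$ is immediate: $\LogRC{\LogPredVarA}\equiv\LogPredVarForm{\LogPredVarA}$, so $\LogRedCand{\LogRC{\LogPredVarA}}\equiv\LogRedCand{\LogPredVarForm{\LogPredVarA}}$, and the three components of $\LTisrc_\LogPredVarA$ simply project the three components of $\ValuatA'(\LTVarA_\LogPredVarA)$, which by hypothesis realizes $\LogRedCand{\LogPredVarForm{\LogPredVarA}}$.

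For the arrow case $\FTypeA\to\FTypeB$, we unfold $\LogRC{\FTypeA\to\FTypeB}(\LogTermA)$ as $\forall\LogTermVarA(\LogRC{\FTypeA}(\LogTermVarA)\Rightarrow\LogRC{\FTypeB}(\LogTermApp{\LogTermA}{\LogTermListAbbrv{\LogTermVarA}}))$. Properties~1 and~3 reduce to their counterparts on $\FTypeB$ via the IH, after the identification of $\LogTermApp{(\LogTermApp{\LogTermA}{\LogTermListVarA})}{\LogTermListAbbrv{\LogTermVarA}}$ with $\LogTermApp{\LogTermA}{\LogTermListCons{\LogTermListVarA}{\LogTermVarA}}$. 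Property~2 first uses IH property~1 of $\FTypeA$ on the empty list to produce a realizer of $\LogRC{\FTypeA}(\LogTermVar{0})$, feeds $\LogTermVar{0}$ and this realizer to the hypothesis to obtain a realizer of $\LogRC{\FTypeB}(\LogTermApp{\LogTermVarA}{\LogTermListAbbrv{\LogTermVar{0}}})$, and then applies IH property~2 of $\FTypeB$ to get a realizer of $\LogNorm{\LogTermApp{\LogTermVarA}{\LogTermListAbbrv{\LogTermVar{0}}}}$. The final step is the semantic observation that this element also belongs to $\RealVal{\LogNorm{\LogTermVarA}}_\ValuatA$: any $n$-step weak head reduction of $\LogTermVarA$ lifts to one of $\LogTermApp{\LogTermVarA}{\LogTermListAbbrv{\LogTermVar{0}}}$ of the same length, yielding $\RealVal{\forall\LogNatVarA\,\LogNNorm{\LogTermVarA}{\LogNatVarA}}_\ValuatA\subseteq\RealVal{\forall\LogNatVarA\,\LogNNorm{\LogTermApp{\LogTermVarA}{\LogTermListAbbrv{\LogTermVar{0}}}}{\LogNatVarA}}_\ValuatA$ and hence the dual inclusion for $\LogNorm{\cdot}$.

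For the universal case $\forall\LogPredVarA\,\FTypeA$, with $\LogRC{\forall\LogPredVarA\,\FTypeA}(\LogTermA)\equiv\forall\LogPredVarA(\LogRedCand{\LogPredVarForm{\LogPredVarA}}\Rightarrow\LogRC{\FTypeA}(\LogTermA))$, properties~1 and~3 take an arbitrary $\RealPredA\in\mathcal{P}(\Lambda)$ together with a realizer $\LTVarA_\LogPredVarA$ of $\LogRedCand{\LogPredVarForm{\LogPredVarA}}$ under $\ValuatA\uplus\{\LogPredVarA\mapsto\RealPredA\}$, then invoke the IH on $\FTypeA$ with the extended valuation; uniformity in $\LogPredVarA$ is automatic since $\LTisrc_\FTypeA$ does not depend on $\RealPredA$ beyond $\LTVarA_\LogPredVarA$. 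Property~2 is the crux: from a realizer of $\forall\LogPredVarA(\LogRedCand{\LogPredVarForm{\LogPredVarA}}\Rightarrow\LogRC{\FTypeA}(\LogTermVarA))$ we must extract a realizer of $\LogNorm{\LogTermVarA}$, mirroring the informal instantiation of $\LogPredVarA$ with the set of normalizing terms. Since we are in first-order logic, this instantiation is implemented using the realizer $\LTelim$ of lemma~\ref{RealElim} applied twice, each time with the 2-formula evaluated at $\LogNormForm$. The first $\LTelim$ turns $\lambda\LTVarA_\LogPredVarA.\LTisrc_\FTypeA^{(2)}$ (a realizer of $\forall\LogPredVarA(\LogRedCand{\LogPredVarForm{\LogPredVarA}}\Rightarrow\forall\LogTermVarA(\LogRC{\FTypeA}(\LogTermVarA)\Rightarrow\LogNorm{\LogTermVarA}))$ by IH) into a realizer of the same formula with every $\LogBoolIn{\cdot}{\LogPredVarA}$ rewritten as $\LogNorm{\cdot}$; feeding it $\LTnormrc$ (lemma~\ref{RealNormRC}) and $\LogTermVarA$ yields a realizer of the implication $\widetilde{\LogRC{\FTypeA}}(\LogTermVarA)\Rightarrow\LogNorm{\LogTermVarA}$, where $\widetilde{\LogRC{\FTypeA}}$ is $\LogRC{\FTypeA}$ after the same rewriting. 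The second $\LTelim$, applied to the input $\LTVarA$ and $\LTnormrc$, analogously produces a realizer of $\widetilde{\LogRC{\FTypeA}}(\LogTermVarA)$; composing the two yields the required realizer of $\LogNorm{\LogTermVarA}$.

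The main obstacle will be this universal case property~2: it is where second-order comprehension is actually used in the interpretation, and the bookkeeping around the two instances of $\LTelim$ is delicate. One must verify that the 2-formulas used as subscripts meet the side condition of lemma~\ref{RealElim} (namely $\LogPredVarA$ does not escape the scope of the 2-formula abstraction), and that the rewriting $\LogBoolIn{\cdot}{\LogPredVarA}\mapsto\LogNorm{\cdot}$ applied inside $\LogRC{\FTypeA}(\LogTermVarA)$ produces the intended 1-formula, which is itself checked by an auxiliary induction on $\FTypeA$. All remaining cases reduce to mechanical applications of the IH along the structure of $\LTisrc_\FTypeA$.
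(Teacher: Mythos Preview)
Your proposal is correct and follows essentially the same approach as the paper's own proof: induction on $\FTypeA$, checking each of the three components of $\LTisrc_\FTypeA$ in each case, with the universal case's second component handled via two applications of $\LTelim$ (lemma~\ref{RealElim}) at the 1-formula $\LogNormForm$, combined with $\LTnormrc$ (lemma~\ref{RealNormRC}). Your identification of the delicate point---the bookkeeping around the two $\LTelim$ instances and the side condition on $\LogPredVarA$---is exactly where the paper spends its effort as well.
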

\begin{proof}
\begin{itemize}
\item$\LogPredVarA$: we have by hypothesis:
\begin{equation*}
\ValuatA'\left(\LTVarA_\LogPredVarA\right)\in\RealVal{\LogRedCand{\LogPredVarForm{\LogPredVarA}}}_\ValuatA=\RealVal{\LogRedCand{\LogRC{\LogPredVarA}}}_\ValuatA
\end{equation*}
and therefore:
\begin{equation*}
\CPOinterp{\LTPair{\LTPair{\LTProj_1\left(\LTProj_1\,\LTVarA_\LogPredVarA\right)}{\LTProj_2\left(\LTProj_1\,\LTVarA_\LogPredVarA\right)}}{\LTProj_2\,\LTVarA_\LogPredVarA}}_{\ValuatA'}=\CPOinterp{\LTVarA_\LogPredVarA}_{\ValuatA'}\in\RealVal{\LogRedCand{\LogRC{\LogPredVarA}}}_\ValuatA
\end{equation*}
\item$\FTypeA\to\FTypeB$:
\begin{itemize}
\item$\CPOinterp{\LTisrc_{\FTypeA\to\FTypeB}^{(1)}}_{\ValuatA'}\in\RealVal{\forall\LogTermListVarA\,\LogRC{\FTypeA\to\FTypeB}\left(\LogTermApp{\LogTermVar{\LogNatZ}}{\LogTermListVarA}\right)}_\ValuatA$: let $\RealTermListA\in\Lambda^*$, $\RealTermA\in\Lambda$ and $\CPOelA\in\RealVal{\LogRC{\FTypeA}\left(\RealTermA\right)}_\ValuatA$. The induction hypothesis gives:
\begin{equation*}
\CPOinterp{\LTisrc_\FTypeB^{(1)}}_{\ValuatA'}\in\RealVal{\forall\LogTermListVarA\,\LogRC{\FTypeB}\left(\LogTermApp{\LogTermVar{\LogNatZ}}{\LogTermListVarA}\right)}
\end{equation*}
and therefore:
\begin{equation*}
\CPOinterp{\LTisrc_\FTypeB^{(1)}\left(\LTLamListCons\,\RealTermListA\,\RealTermA\right)}_{\ValuatA'}\in\RealVal{\LogRC{\FTypeB}\left(\LogTermApp{\LogTermVar{\LogNatZ}}{\LogTermListAbbrv{\RealTermListA,\RealTermA}}\right)}
\end{equation*}
\item$\CPOinterp{\LTisrc_{\FTypeA\to\FTypeB}^{(2)}}_{\ValuatA'}\in\RealVal{\forall\LogTermVarA\left(\LogRC{\FTypeA\to\FTypeB}\left(\LogTermVarA\right)\Rightarrow\LogNorm{\LogTermVarA}\right)}_\ValuatA$: let $\RealTermA\in\Lambda$ and $\CPOelA\in\RealVal{\LogRC{\FTypeA\to\FTypeB}\left(\RealTermA\right)}_\ValuatA$. The induction hypothesis implies:
\begin{equation*}
\CPOinterp{\LTisrc_\FTypeA^{(1)}\,\LTLamListNil}_{\ValuatA'}\in\RealVal{\LogRC{\FTypeA}\left(\LogTermApp{\LogTermVar{\LogNatZ}}{\LogTermListNil}\right)}_\ValuatA=\RealVal{\LogRC{\FTypeA}\left(\LogTermVar{\LogNatZ}\right)}_\ValuatA
\end{equation*}
so since $\CPOinterp{\CPOelA\left(\LTLamVar\,\LTNatZ\right)}\in\RealVal{\LogRC{\FTypeA}\left(\LogTermVar{\LogNatZ}\right)\Rightarrow\LogRC{\FTypeB}\left(\LogTermApp{\RealTermA}{\LogTermListAbbrv{\LogTermVar{\LogNatZ}}}\right)}_\ValuatA$ we get:
\begin{equation*}
\CPOinterp{\CPOelA\left(\LTLamVar\,\LTNatZ\right)\left(\LTisrc_\FTypeA^{(1)}\,\LTLamListNil\right)}_{\ValuatA'}\in\RealVal{\LogRC{\FTypeB}\left(\LogTermApp{\RealTermA}{\LogTermListAbbrv{\LogTermVar{\LogNatZ}}}\right)}_\ValuatA
\end{equation*}
but the second induction hypothesis implies:
\begin{equation*}
\CPOinterp{\LTisrc_\FTypeB^{(2)}\left(\LTLamApp\,\RealTermA\left(\LTLamVar\,\LTNatZ\right)\right)}_{\ValuatA'}\in\RealVal{\LogRC{\FTypeB}\left(\LogTermApp{\RealTermA}{\LogTermListAbbrv{\LogTermVar{\LogNatZ}}}\right)\Rightarrow\LogNorm{\LogTermApp{\RealTermA}{\LogTermListAbbrv{\LogTermVar{\LogNatZ}}}}}_\ValuatA
\end{equation*}
and therefore:
\begin{equation*}
\CPOinterp{\LTisrc_\FTypeB^{(2)}\left(\LTLamApp\,\RealTermA\left(\LTLamVar\,\LTNatZ\right)\right)\left(\CPOelA\left(\LTLamVar\,\LTNatZ\right)\left(\LTisrc_\FTypeA^{(1)}\,\LTLamListNil\right)\right)}_{\ValuatA'}\in\RealVal{\LogNorm{\LogTermApp{\RealTermA}{\LogTermListAbbrv{\LogTermVar{\LogNatZ}}}}}_\ValuatA
\end{equation*}
and we conclude by proving that $\RealVal{\LogNorm{\LogTermApp{\RealTermA}{\LogTermListAbbrv{\LogTermVar{\LogNatZ}}}}}_\ValuatA\subseteq\RealVal{\LogNorm{\RealTermA}}_\ValuatA$. This inclusion is a consequence of $\RealVal{\forall\LogNatVarA\,\LogNNorm{\RealTermA}{\LogNatVarA}}_\ValuatA\subseteq\RealVal{\forall\LogNatVarA\,\LogNNorm{\LogTermApp{\RealTermA}{\LogTermListAbbrv{\LogTermVar{\LogNatZ}}}}{\LogNatVarA}}_\ValuatA$, which follows from $\RealVal{\LogNNorm{\RealTermA}{\RealNatA}}_\ValuatA\subseteq\RealVal{\LogNNorm{\LogTermApp{\RealTermA}{\LogTermListAbbrv{\LogTermVar{\LogNatZ}}}}{\RealNatA}}_\ValuatA$. This last inclusion comes from the fact that if $\RealTermA$ does not reach a normal form in $\RealNatA$ steps, then $\RealTermA\,0$ does not reach a normal form in $\RealNatA$ steps either.
\item$\CPOinterp{\LTisrc_{\FTypeA\to\FTypeB}^{(3)}}_{\ValuatA'}\in\RealVal{\forall\LogTermVarA\,\forall\LogTermVarB\,\forall\LogTermListVarA\left(\LogRC{\FTypeA\to\FTypeB}\left(\LogTermApp{\LogTermSubst{\LogTermVarA}{\LogTermVarB}}{\LogTermListVarA}\right)\Rightarrow\LogRC{\FTypeA\to\FTypeB}\left(\LogTermApp{\LogTermApp{\LogTermLam{\LogTermVarA}}{\LogTermVarB}}{\LogTermListVarA}\right)\right)}_\ValuatA$: let $\RealTermA\in\Lambda$, $\RealTermB\in\Lambda$, $\RealTermListA\in\Lambda^*$, $\CPOelA\in\RealVal{\LogRC{\FTypeA\to\FTypeB}\left(\LogTermApp{\LogTermSubst{\RealTermA}{\RealTermB}}{\RealTermListA}\right)}_\ValuatA$, $\RealTermC\in\Lambda$ and $\CPOelB\in\RealVal{\LogRC{\FTypeA}\left(\RealTermC\right)}_\ValuatA$. The induction hypothesis implies:
\begin{equation*}
\CPOinterp{\LTisrc_\FTypeB^{(3)}\,\RealTermA\,\RealTermB\left(\LTLamListCons\,\RealTermListA\,\RealTermC\right)}_{\ValuatA'}\in\RealVal{\LogRC{\FTypeB}\left(\LogTermApp{\LogTermSubst{\RealTermA}{\RealTermB}}{\LogTermListCons{\RealTermListA}{\RealTermC}}\right)\Rightarrow\LogRC{\FTypeB}\left(\LogTermApp{\LogTermApp{\left(\LogTermLam{\RealTermA}\right)}{\LogTermListAbbrv{\RealTermB}}}{\LogTermListCons{\RealTermListA}{\RealTermC}}\right)}_\ValuatA
\end{equation*}
and we also have:
\begin{equation*}
\CPOinterp{\CPOelA\,\RealTermC\,\CPOelB}\in\RealVal{\LogRC{\FTypeB}\left(\LogTermApp{\LogTermApp{\LogTermSubst{\RealTermA}{\RealTermB}}{\RealTermListA}}{\LogTermListAbbrv{\RealTermC}}\right)}_\ValuatA
\end{equation*}
so since:
\begin{equation*}
\CPOinterp{\LTLamApp\left(\LTLamListApp\left(\LTLamSubst\,\RealTermA\,\LTNatZ\left(\LTLamListCons\,\LTLamListNil\,\RealTermB\right)\right)\RealTermListA\right)\RealTermC}=\CPOinterp{\LTLamListApp\left(\LTLamSubst\,\RealTermA\,\LTNatZ\left(\LTLamListCons\,\LTLamListNil\,\RealTermB\right)\right)\left(\LTLamListCons\,\RealTermListA\,\RealTermC\right)}
\end{equation*}
we have:
\begin{equation*}
\RealVal{\LogRC{\FTypeB}\left(\LogTermApp{\LogTermApp{\LogTermSubst{\RealTermA}{\RealTermB}}{\RealTermListA}}{\LogTermListAbbrv{\RealTermC}}\right)}_\ValuatA=\RealVal{\LogRC{\FTypeB}\left(\LogTermApp{\LogTermSubst{\RealTermA}{\RealTermB}}{\LogTermListCons{\RealTermListA}{\RealTermC}}\right)}_\ValuatA
\end{equation*}
and therefore:
\begin{equation*}
\CPOinterp{\LTisrc_\FTypeB^{(3)}\,\RealTermA\,\RealTermB\left(\LTLamListCons\,\RealTermListA\,\RealTermC\right)\left(\CPOelA\,\RealTermC\,\CPOelB\right)}_{\ValuatA'}\in\RealVal{\LogRC{\FTypeB}\left(\LogTermApp{\LogTermApp{\left(\LogTermLam{\RealTermA}\right)}{\LogTermListAbbrv{\RealTermB}}}{\LogTermListCons{\RealTermListA}{\RealTermC}}\right)}_\ValuatA
\end{equation*}
but finally since:
\begin{equation*}
\CPOinterp{\LTLamListApp\left(\LTLamApp\left(\LTLamAbs\,\RealTermA\right)\RealTermB\right)\left(\LTLamListCons\,\RealTermListA\,\RealTermC\right)}=\CPOinterp{\LTLamApp\left(\LTLamListApp\left(\LTLamApp\left(\LTLamAbs\,\RealTermA\right)\RealTermB\right)\RealTermListA\right)\RealTermC}
\end{equation*}
we obtain:
\begin{equation*}
\CPOinterp{\LTisrc_\FTypeB^{(3)}\,\RealTermA\,\RealTermB\left(\LTLamListCons\,\RealTermListA\,\RealTermC\right)\left(\CPOelA\,\RealTermC\,\CPOelB\right)}_{\ValuatA'}\in\RealVal{\LogRC{\FTypeB}\left(\LogTermApp{\LogTermApp{\LogTermApp{\left(\LogTermLam{\RealTermA}\right)}{\LogTermListAbbrv{\RealTermB}}}{\RealTermListA}}{\LogTermListAbbrv{\RealTermC}}\right)}_\ValuatA
\end{equation*}
\end{itemize}
\item$\forall\LogPredVarA\,\FTypeA$:
\begin{itemize}
\item$\CPOinterp{\LTisrc_{\forall\LogPredVarA\,\FTypeA}^{(1)}}_{\ValuatA'}\mkern-1mu\in\RealVal{\forall\LogTermListVarA\,\LogRC{\forall\LogPredVarA\,\FTypeA}\left(\LogTermApp{\LogTermVar{\LogNatZ}}{\LogTermListVarA}\right)}_\ValuatA$: let $\RealTermListA\in\Lambda^*$, $\RealPredA\subseteq\Lambda$ and $\CPOelA\in\RealVal{\LogRedCand{\LogPredVarForm{\LogPredVarA}}}_{\ValuatA\uplus\SetSuch{\LogPredVarA\mapsto\RealPredA}{}}$. Then by induction hypothesis:
\begin{equation*}
\CPOinterp{\LTisrc_\FTypeA^{(1)}\,\RealTermListA}_{\ValuatA'\uplus\SetSuch{\LTVarA_\LogPredVarA\mapsto\CPOelA}{}}\in\RealVal{\LogRC{\FTypeA}\left(\LogTermApp{\LogTermVar{\LogNatZ}}{\RealTermListA}\right)}_{\ValuatA\uplus\SetSuch{\LogPredVarA\mapsto\RealPredA}{}}
\end{equation*}
therefore:
\begin{equation*}
\CPOinterp{\lambda\LTVarA_\LogPredVarA.\LTisrc_\FTypeA^{(1)}\,\RealTermListA}_{\ValuatA'}\in\RealVal{\LogRedCand{\LogPredVarForm{\LogPredVarA}}\Rightarrow\LogRC{\FTypeA}\left(\LogTermApp{\LogTermVar{\LogNatZ}}{\RealTermListA}\right)}_{\ValuatA\uplus\SetSuch{\LogPredVarA\mapsto\RealPredA}{}}
\end{equation*}
and finally:
\begin{equation*}
\CPOinterp{\lambda\LTVarA_\LogPredVarA.\LTisrc_\FTypeA^{(1)}\,\RealTermListA}_{\ValuatA'}\in\RealVal{\LogRC{\forall\LogPredVarA\,\FTypeA}\left(\LogTermApp{\LogTermVar{\LogNatZ}}{\RealTermListA}\right)}_\ValuatA
\end{equation*}
\item$\CPOinterp{\LTisrc_{\forall\LogPredVarA\,\FTypeA}^{(2)}}_{\ValuatA'}\in\RealVal{\forall\LogTermVarA\left(\LogRC{\forall\LogPredVarA\,\FTypeA}\left(\LogTermVarA\right)\Rightarrow\LogNorm{\LogTermVarA}\right)}_\ValuatA$: let $\RealTermA\in\Lambda$ and $\CPOelA\in\RealVal{\LogRC{\forall\LogPredVarA\,\FTypeA}\left(\LogTermVarA\right)}_{\ValuatA\uplus\SetSuch{\LogTermVarA\mapsto\RealTermA}{}}$. The induction hypothesis implies that for any $\RealPredA\subseteq\Lambda$:
\begin{equation*}
\CPOinterp{\lambda\LTVarA_\LogPredVarA.\LTisrc_\FTypeA^{(2)}}_{\ValuatA'}\in\RealVal{\LogRedCand{\LogPredVarForm{\LogPredVarA}}\Rightarrow\forall\LogTermVarA\left(\LogRC{\FTypeA}\left(\LogTermVarA\right)\Rightarrow\LogNorm{\LogTermVarA}\right)}_{\ValuatA\uplus\SetSuch{\LogPredVarA\mapsto\RealPredA}{}}
\end{equation*}
therefore:
\begin{equation*}
\CPOinterp{\lambda\LTVarA_\LogPredVarA.\LTisrc_\FTypeA^{(2)}}_{\ValuatA'}\in\RealVal{\forall\LogPredVarA\left(\LogRedCand{\LogPredVarForm{\LogPredVarA}}\Rightarrow\forall\LogTermVarA\left(\LogRC{\FTypeA}\left(\LogTermVarA\right)\Rightarrow\LogNorm{\LogTermVarA}\right)\right)}_\ValuatA
\end{equation*}
so by lemma~\ref{RealElim}:
\begin{multline*}
\CPOinterp{\LTelim_{\LogPredVarForm{\LogPredVarA}\mapsto\LogRedCand{\LogPredVarForm{\LogPredVarA}}\Rightarrow\forall\LogTermVarA\left(\LogRC{\FTypeA}\left(\LogTermVarA\right)\Rightarrow\LogNorm{\LogTermVarA}\right),\LogNormForm}\left(\lambda\LTVarA_\LogPredVarA.\LTisrc_\FTypeA^{(2)}\right)}_{\ValuatA'}
\\
\in\RealVal{\LogRedCand{\LogPredVarForm{\LogNormForm}}\Rightarrow\forall\LogTermVarA\left(\left(\LogPredVarForm{\LogPredVarA}\mapsto\LogRC{\FTypeA}\left(\LogTermVarA\right)\right)\left(\LogNormForm\right)\Rightarrow\LogNorm{\LogTermVarA}\right)}_\ValuatA
\end{multline*}
and then by lemma~\ref{RealNormRC}:
\begin{multline*}
\CPOinterp{\LTelim_{\LogPredVarForm{\LogPredVarA}\mapsto\LogRedCand{\LogPredVarForm{\LogPredVarA}}\Rightarrow\forall\LogTermVarA\left(\LogRC{\FTypeA}\left(\LogTermVarA\right)\Rightarrow\LogNorm{\LogTermVarA}\right),\LogNormForm}\left(\lambda\LTVarA_\LogPredVarA.\LTisrc_\FTypeA^{(2)}\right)\LTnormrc\,\LogTermVarA}_{\ValuatA\uplus\SetSuch{\LogTermVarA\mapsto\RealTermA}{}}
\\
\in\RealVal{\left(\LogPredVarForm{\LogPredVarA}\mapsto\LogRC{\FTypeA}\left(\LogTermVarA\right)\right)\left(\LogNormForm\right)\Rightarrow\LogNorm{\LogTermVarA}}_{\ValuatA\uplus\SetSuch{\LogTermVarA\mapsto\RealTermA}{}}
\end{multline*}
but on the other hand, since:
\begin{equation*}
\CPOelA\in\RealVal{\forall\LogPredVarA\left(\LogRedCand{\LogPredVarForm{\LogPredVarA}}\Rightarrow\LogRC{\FTypeA}\left(\LogTermVarA\right)\right)}_{\ValuatA\uplus\SetSuch{\LogTermVarA\mapsto\RealTermA}{}}
\end{equation*}
and since by lemma~\ref{RealElim}, for any $\RealTermA\in\lambda$:
\begin{multline*}
\CPOinterp{\LTelim_{\LogPredVarForm{\LogPredVarA}\mapsto\LogRedCand{\LogPredVarForm{\LogPredVarA}}\Rightarrow\LogRC{\FTypeA}\left(\LogTermVarA\right),\LogNormForm}}_{\ValuatA\uplus\SetSuch{\LogTermVarA\mapsto\RealTermA}{}}\\*
\in\RealVal{\forall\LogPredVarA\left(\LogRedCand{\LogPredVarForm{\LogPredVarA}}\Rightarrow\LogRC{\FTypeA}\left(\LogTermVarA\right)\right)\Rightarrow\LogRedCand{\LogPredVarForm{\LogNormForm}}\Rightarrow\left(\LogPredVarForm{\LogPredVarA}\mapsto\LogRC{\FTypeA}\left(\LogTermVarA\right)\right)\left(\LogNormForm\right)}_{\ValuatA\uplus\SetSuch{\LogTermVarA\mapsto\RealTermA}{}}
\end{multline*}
we have:
\begin{equation*}
\CPOinterp{\LTelim_{\LogPredVarForm{\LogPredVarA}\mapsto\LogRedCand{\LogPredVarForm{\LogPredVarA}}\Rightarrow\LogRC{\FTypeA}\left(\LogTermVarA\right),\LogNormForm}\,\CPOelA\,\LTnormrc}_{\ValuatA\uplus\SetSuch{\LogTermVarA\mapsto\RealTermA}{}}\in\RealVal{\left(\LogPredVarForm{\LogPredVarA}\mapsto\LogRC{\FTypeA}\left(\LogTermVarA\right)\right)\left(\LogNormForm\right)}_{\ValuatA\uplus\SetSuch{\LogTermVarA\mapsto\RealTermA}{}}
\end{equation*}
and therefore we get:
\begin{gather*}
\CPOinterp{\LTelim_{\LogPredVarForm{\LogPredVarA}\mapsto\LogRedCand{\LogPredVarForm{\LogPredVarA}}\Rightarrow\forall\LogTermVarA\left(\LogRC{\FTypeA}\left(\LogTermVarA\right)\Rightarrow\LogNorm{\LogTermVarA}\right),\LogNormForm}\left(\lambda\LTVarA_\LogPredVarA.\LTisrc_\FTypeA^{(2)}\right)\LTnormrc\,\LogTermVarA
\right.\hspace{80pt}\\\hspace{80pt}\left.
\left(\LTelim_{\LogPredVarForm{\LogPredVarA}\mapsto\LogRedCand{\LogPredVarForm{\LogPredVarA}}\Rightarrow\LogRC{\FTypeA}\left(\LogTermVarA\right),\LogNormForm}\,\CPOelA\,\LTnormrc\right)}_{\ValuatA\uplus\SetSuch{\LogTermVarA\mapsto\RealTermA}{}}\in\RealVal{\LogNorm{\LogTermVarA}}_{\ValuatA\uplus\SetSuch{\LogTermVarA\mapsto\RealTermA}{}}=\RealVal{\LogNorm{\RealTermA}}
\end{gather*}
\item$\CPOinterp{\LTisrc_{\forall\LogPredVarA\,\FTypeA}^{(3)}}_{\ValuatA'}\in\RealVal{\forall\LogTermVarA\,\forall\LogTermVarB\,\forall\LogTermListVarA\left(\LogRC{\forall\LogPredVarA\,\FTypeA}\left(\LogTermApp{\LogTermSubst{\LogTermVarA}{\LogTermVarB}}{\LogTermListVarA}\right)\Rightarrow\LogRC{\forall\LogPredVarA\,\FTypeA}\left(\LogTermApp{\LogTermApp{\LogTermLam{\LogTermVarA}}{\LogTermVarB}}{\LogTermListVarA}\right)\right)}_\ValuatA$: let $\RealTermA\in\Lambda$, $\RealTermB\in\Lambda$, $\RealTermListA\in\Lambda^*$, $\CPOelA\in\RealVal{\LogRC{\forall\LogPredVarA\,\FTypeA}\left(\LogTermApp{\LogTermSubst{\RealTermA}{\RealTermB}}{\RealTermListA}\right)}_\ValuatA$, $\RealPredA\subseteq\Lambda$ and $\CPOelB\in\RealVal{\LogRedCand{\LogPredVarForm{\LogPredVarA}}}_{\ValuatA\uplus\SetSuch{\LogPredVarA\mapsto\RealPredA}{}}$. The induction hypothesis implies:
\begin{equation*}
\CPOinterp{\LTisrc_\FTypeA^{(3)}\,\RealTermA\,\RealTermB\,\RealTermListA}_{\ValuatA\uplus\SetSuch{\LTVarA_\LogPredVarA\mapsto\CPOelB}{}}\in\RealVal{\LogRC{\FTypeA}\left(\LogTermApp{\LogTermSubst{\RealTermA}{\RealTermB}}{\RealTermListA}\right)\Rightarrow\LogRC{\FTypeA}\left(\LogTermApp{\LogTermApp{\LogTermLam{\RealTermA}}{\RealTermB}}{\RealTermListA}\right)}_{\ValuatA\uplus\SetSuch{\LogPredVarA\mapsto\RealPredA}{}}
\end{equation*}
but we have also:
\begin{equation*}
\CPOinterp{\CPOelA\,\CPOelB}\in\RealVal{\LogRC{\FTypeA}\left(\LogTermApp{\LogTermSubst{\RealTermA}{\RealTermB}}{\RealTermListA}\right)}_{\ValuatA\uplus\SetSuch{\LogPredVarA\mapsto\RealPredA}{}}
\end{equation*}
therefore:
\begin{equation*}
\CPOinterp{\LTisrc_\FTypeA^{(3)}\,\RealTermA\,\RealTermB\,\RealTermListA\left(\CPOelA\,\CPOelB\right)}_{\ValuatA\uplus\SetSuch{\LTVarA_\LogPredVarA\mapsto\CPOelB}{}}\in\RealVal{\LogRC{\FTypeA}\left(\LogTermApp{\LogTermApp{\LogTermLam{\RealTermA}}{\RealTermB}}{\RealTermListA}\right)}_{\ValuatA\uplus\SetSuch{\LogPredVarA\mapsto\RealPredA}{}}
\end{equation*}
and therefore:
\begin{equation*}
\CPOinterp{\lambda\LTVarA_\LogPredVarA.\LTisrc_\FTypeA^{(3)}\,\RealTermA\,\RealTermB\,\RealTermListA\left(\CPOelA\,\LTVarA_\LogPredVarA\right)}_{\ValuatA'}\in\RealVal{\LogRedCand{\LogPredVarForm{\LogPredVarA}}\Rightarrow\LogRC{\FTypeA}\left(\LogTermApp{\LogTermApp{\LogTermLam{\RealTermA}}{\RealTermB}}{\RealTermListA}\right)}_{\ValuatA\uplus\SetSuch{\LogPredVarA\mapsto\RealPredA}{}}
\end{equation*}
and since this holds for any $\RealPredA\subseteq\Lambda$ we obtain:
\begin{equation*}
\CPOinterp{\lambda\LTVarA_\LogPredVarA.\LTisrc_\FTypeA^{(3)}\,\RealTermA\,\RealTermB\,\RealTermListA\left(\CPOelA\,\LTVarA_\LogPredVarA\right)}_{\ValuatA'}\in\RealVal{\LogRC{\forall\LogPredVarA\,\FTypeA}\left(\LogTermApp{\LogTermApp{\LogTermLam{\RealTermA}}{\RealTermB}}{\RealTermListA}\right)}_\ValuatA
\end{equation*}
which concludes the proof.\qedhere
\end{itemize}
\end{itemize}
\end{proof}
The last step of the interpretation of normalization of system F is the interpretation of lemma~\ref{FNorm}, which is given in figure~\ref{AdeqDef}. Despite the fact that each term defined there depends on a full typing derivation in system F, we use the informal notation $\LTadeq_{\Gamma\Entails\LogTermA:\FTypeA}$, refering to the full derivation only by its conclusion. In order to ease our definition, the terms $\LTadeq_{\Gamma\Entails\LogTermA:\FTypeA}$ contain the following free variables:
\begin{equation*}
\SetSuch{\LTVarA_\LogPredVarA}{\LogPredVarA\in\FV{\Gamma,\FTypeA}}\cup\SetSuch{\LogTermVarA_\FTypeB}{\FTypeB\in\Gamma}\cup\SetSuch{\LTVarB_\FTypeB}{\FTypeB\in\Gamma}
\end{equation*}
where $\LTVarA_\LogPredVarA$ is meant to be replaced with a realizer of $\LogRedCand{\LogPredVarForm{\LogPredVarA}}$, $\LogTermVarA_\FTypeB$ is meant to be replaced with some term $\RealTermA_\FTypeB\in\Lambda$ and $\LTVarB_\FTypeB$ is meant to be replaced with a realizer of $\LogRC{\FTypeB}\left(\RealTermA_\FTypeB\right)$. In the notations $\LogTermVarA_\FTypeB$ and $\LTVarB_\FTypeB$, $\FTypeB$ refers to an occurence of $\FTypeB$ in $\Gamma$, rather than to $\FTypeB$ itself. The notation $\vec{\LogTermVarA_\Gamma}$ in figure~\ref{AdeqDef} and in the lemma is a shorthand for $\LTLamListCons\left(\LTLamListCons\left(\ldots\LTLamListCons\,\LTLamListNil\,t_{\FTypeB_0}\ldots\right)t_{\FTypeB_{n-1}}\right)$ if $\Gamma=\FTypeB_{n-1},\ldots,\FTypeB_0$.
\begin{figure*}
\begin{gather*}
\begin{align*}
\LTadeq_{\Gamma\Entails\LogTermVar{\LogNatA}:\FTypeB}&=\LTVarB_\FTypeB
&
\LTadeq_{\Gamma\Entails\LogTermA:\forall\LogPredVarA\,\FTypeA}&=\lambda\LTVarA_\LogPredVarA.\LTadeq_{\Gamma\Entails\LogTermA:\FTypeA}
&
\end{align*}
\\
\LTadeq_{\Gamma\Entails\LogTermA:\FTypeA\Subst{\LogPredVarA}{\FTypeB}}=\LTelim_{\LogPredVarForm{\LogPredVarA}\mapsto\LogRedCand{\LogPredVarForm{\LogPredVarA}}\Rightarrow\LogRC{\FTypeA}\left(\LogTermSubst{\LogTermA}{\vec{\LogTermVarA_\Gamma}}\right),\LogRC{\FTypeB}}\,\LTadeq_{\Gamma\Entails\LogTermA:\forall\LogPredVarA\,\FTypeA}\,\LTisrc_\FTypeB
\\
\LTadeq_{\Gamma\Entails\LogTermLam{\LogTermA}:\FTypeB\to\FTypeA}=\lambda\LogTermVarA_\FTypeB\LTVarB_\FTypeB.\LTisrc_\FTypeA^{(3)}\left(\LTLamSubst\,\LTinterp{\LogTermA}\left(\LTNatS\,\LTNatZ\right)\left(\LTLamListShift\,\vec{\LogTermVarA_\Gamma}\right)\right)\LogTermVarA_\FTypeB\,\LTLamListNil\,\LTadeq_{\Gamma,\FTypeB\Entails\LogTermA:\FTypeA}
\\
\LTadeq_{\Gamma\Entails\LogTermApp{\LogTermA}{\LogTermB}:\FTypeA}=\LTadeq_{\Gamma\Entails\LogTermA:\FTypeB\to\FTypeA}\left(\LTLamSubst\,\LTinterp{\LogTermB}\,\LTNatZ\,\vec{\LogTermVarA_\Gamma}\right)\LTadeq_{\Gamma\Entails\LogTermB:\FTypeB}
\end{gather*}
\vspace{-15pt}
\caption{Definition of $\LTadeq_{\Gamma\Entails\LogTermA:\FTypeA}$}
\label{AdeqDef}
\end{figure*}
It can then be shown that the terms $\LTadeq_{\Gamma\Entails\LogTermA:\FTypeA}$ satisfy the intended property:
\begin{thm}
\label{RealAdeq}
If $\Gamma\Entails\LogTermA:\FTypeA$ is a valid typing judgement in system F, and if $\ValuatA$ is a valuation such that:
\begin{itemize}
\item$\ValuatA\left(\LogPredVarA\right)\subseteq\Lambda$ and $\ValuatA\left(\LTVarA_\LogPredVarA\right)\in\RealVal{\LogRedCand{\LogPredVarForm{\LogPredVarA}}}_\ValuatA$ for each variable $\LogPredVarA\in\FV{\Gamma,\FTypeA}$
\item$\ValuatA\left(\LogTermVarA_\FTypeB\right)\in\Lambda$ and $\ValuatA\left(\LTVarB_\FTypeB\right)\in\RealVal{\LogRC{\FTypeB}\left(\LogTermVarA_\FTypeB\right)}_\ValuatA$ for each $\FTypeB\in\Gamma$
\end{itemize}
then $\CPOinterp{\LTadeq_{\Gamma\Entails\LogTermA:\FTypeA}}_\ValuatA\in\RealVal{\LogRC{\FTypeA}\left(\LogTermSubst{\LogTermA}{\vec{\LogTermVarA_\Gamma}}\right)}_\ValuatA$
\end{thm}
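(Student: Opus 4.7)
The plan is to proceed by structural induction on the typing derivation of $\Gamma\Entails\LogTermA:\FTypeA$, treating the five typing rules of system F in turn. Throughout, the induction hypothesis will supply, for each premise, a realizer at the ambient type under any valuation that respects the hypotheses on set and term variables, and the goal in each case is to combine these realizers with the already-constructed auxiliary realizers ($\LTisrc$, $\LTelim$, $\LTnormrc$) to obtain the desired element of $\RealVal{\LogRC{\FTypeA}\left(\LogTermSubst{\LogTermA}{\vec{\LogTermVarA_\Gamma}}\right)}_\ValuatA$. In what follows I describe each case briefly.

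For the variable rule, $\LTadeq$ is simply $\LTVarB_{\FTypeA_\LogNatA}$, so the conclusion is immediate from the valuation hypothesis on $\LTVarB_{\FTypeA_\LogNatA}$ together with the observation that $\LogTermSubst{\LogTermVar{\LogNatA}}{\vec{\LogTermVarA_\Gamma}}$ evaluates (in the cpo model) to $\ValuatA(\LogTermVarA_{\FTypeA_\LogNatA})$. For the application rule, the induction hypotheses on $\LogTermA$ and $\LogTermB$ produce realizers of $\LogRC{\FTypeB\to\FTypeA}\left(\LogTermSubst{\LogTermA}{\vec{\LogTermVarA_\Gamma}}\right)$ and $\LogRC{\FTypeB}\left(\LogTermSubst{\LogTermB}{\vec{\LogTermVarA_\Gamma}}\right)$; unfolding the definition of $\LogRC{\FTypeB\to\FTypeA}$ and using the fact that $\CPOinterp{\LTLamSubst\,\LTinterp{\LogTermB}\,\LTNatZ\,\vec{\LogTermVarA_\Gamma}}_\ValuatA=\LogTermSubst{\LogTermB}{\vec{\LogTermVarA_\Gamma}}$ yields the claim after observing that the denotation of $\LTinterp{\LogTermApp{\LogTermA}{\LogTermB}}$ agrees with the application required.

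For universal introduction, since $\LogPredVarA\notin\FV{\Gamma}$, the valuation hypotheses on $\Gamma$ survive any extension $\ValuatA\uplus\SetSuch{\LogPredVarA\mapsto\RealPredA,\LTVarA_\LogPredVarA\mapsto\CPOelB}{}$ with $\CPOelB\in\RealVal{\LogRedCand{\LogPredVarForm{\LogPredVarA}}}$, so the induction hypothesis delivers a realizer of $\LogRC{\FTypeA}\left(\LogTermSubst{\LogTermA}{\vec{\LogTermVarA_\Gamma}}\right)$ uniformly in $\RealPredA$ and $\CPOelB$; $\lambda\LTVarA_\LogPredVarA$-abstracting gives a realizer of $\LogRC{\forall\LogPredVarA\,\FTypeA}$. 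For universal elimination, the induction hypothesis produces a realizer of $\LogRC{\forall\LogPredVarA\,\FTypeA}\left(\LogTermSubst{\LogTermA}{\vec{\LogTermVarA_\Gamma}}\right)$, that is, an element of $\RealVal{\forall\LogPredVarA\left(\LogRedCand{\LogPredVarForm{\LogPredVarA}}\Rightarrow\LogRC{\FTypeA}\left(\LogTermSubst{\LogTermA}{\vec{\LogTermVarA_\Gamma}}\right)\right)}_\ValuatA$; applying lemma~\ref{RealElim} with the 2-formula $\LogPredVarForm{\LogPredVarA}\mapsto\LogRedCand{\LogPredVarForm{\LogPredVarA}}\Rightarrow\LogRC{\FTypeA}\left(\LogTermSubst{\LogTermA}{\vec{\LogTermVarA_\Gamma}}\right)$ and the 1-formula $\LogRC{\FTypeB}$, and then feeding in the realizer $\LTisrc_\FTypeB$ of $\LogRedCand{\LogRC{\FTypeB}}$ provided by lemma~\ref{RealIsRC}, produces a realizer of $\LogRC{\FTypeA}\left(\LogTermSubst{\LogTermA}{\vec{\LogTermVarA_\Gamma}}\right)$ after instantiating the set variable with $\LogRC{\FTypeB}$, which by the semantic identity $\LogRC{\FTypeA\Subst{\LogPredVarA}{\FTypeB}}=\LogRC{\FTypeA}\FormalSubst{\LogPredVarForm{\LogPredVarA}}{\LogRC{\FTypeB}}$ is the goal.

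The main obstacle is the abstraction case, where the definition of $\LTadeq_{\Gamma\Entails\LogTermLam{\LogTermA}:\FTypeB\to\FTypeA}$ invokes the third clause of the reducibility candidate property to reduce $\LogTermApp{\left(\LogTermLam{\LogTermA}\right)\FormalSubst{}{\vec{\LogTermVarA_\Gamma}}}{\LogTermVarA_\FTypeB}$ to $\LogTermSubst{\LogTermA}{\LogTermListAbbrv{\LogTermVarA_\FTypeB,\vec{\LogTermVarA_\Gamma}}}$. Concretely, given $\LogTermVarA_\FTypeB\in\Lambda$ and $\LTVarB_\FTypeB\in\RealVal{\LogRC{\FTypeB}\left(\LogTermVarA_\FTypeB\right)}_\ValuatA$, the induction hypothesis applied to the extended context yields a realizer of $\LogRC{\FTypeA}\left(\LogTermSubst{\LogTermA}{\LogTermListAbbrv{\LogTermVarA_\FTypeB,\vec{\LogTermVarA_\Gamma}}}\right)$; by the substitution lemma~\ref{SubstLemma}, this is $\LogRC{\FTypeA}\left(\LogTermA\FormalSubst{1}{\FormalShift{}\vec{\LogTermVarA_\Gamma}}\FormalSubst{}{\LogTermVarA_\FTypeB}\right)$, which matches the denotation of $\LTLamSubst\,\LTinterp{\LogTermA}\left(\LTNatS\,\LTNatZ\right)\left(\LTLamListShift\,\vec{\LogTermVarA_\Gamma}\right)$ applied to $\LogTermVarA_\FTypeB$; then $\LTisrc_\FTypeA^{(3)}$ (with $\LogTermListA=\LogTermListNil$) from lemma~\ref{RealIsRC} converts this realizer into one of $\LogRC{\FTypeA}\left(\LogTermApp{\LogTermLam{\left(\LogTermA\FormalSubst{1}{\FormalShift{}\vec{\LogTermVarA_\Gamma}}\right)}}{\LogTermVarA_\FTypeB}\right)$, and $\lambda\LogTermVarA_\FTypeB\LTVarB_\FTypeB$-abstracting yields the required realizer of $\LogRC{\FTypeB\to\FTypeA}\left(\LogTermSubst{\left(\LogTermLam{\LogTermA}\right)}{\vec{\LogTermVarA_\Gamma}}\right)$. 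The careful bookkeeping of shifts and the use of lemma~\ref{SubstLemma} is the delicate part; everything else is a mechanical unfolding of the definitions in figure~\ref{AdeqDef} guided by the corresponding step in the proof of lemma~\ref{FNorm}.
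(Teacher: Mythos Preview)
Your proposal is correct and follows essentially the same approach as the paper: induction on the typing derivation, with each case handled by unfolding the definition of $\LTadeq$ from figure~\ref{AdeqDef} and invoking the same auxiliary lemmas (lemma~\ref{RealIsRC} for $\LTisrc_\FTypeA^{(3)}$ and $\LTisrc_\FTypeB$, lemma~\ref{RealElim} for universal elimination, and lemma~\ref{SubstLemma} for the abstraction case). Your identification of the abstraction case as the delicate one, with its shift/substitution bookkeeping, matches the paper's treatment exactly.
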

\begin{proof}
\begin{itemize}
\item$\Gamma\Entails\LogTermVar{\LogNatA}:\FTypeB$: the hypothesis gives:
\begin{equation*}
\ValuatA\left(\LTVarB_\FTypeB\right)\in\RealVal{\LogRC{\FTypeB}\left(\LogTermVarA_\FTypeB\right)}_\ValuatA
\end{equation*}
but since $\CPOinterp{\LTLamSubst\left(\LTLamVar\left(\LTNatS^m\,\LTNatZ\right)\right)\LTNatZ\,\vec{\LogTermVarA_\Gamma}}_\ValuatA=\CPOinterp{\LTinterp{\LogTermVarA_\FTypeB}}_\ValuatA$ we obtain:
\begin{equation*}
\RealVal{\LogRC{\FTypeB}\left(\LogTermSubst{\LogTermVar{\LogNatA}}{\vec{\LogTermVarA_\Gamma}}\right)}_\ValuatA=\RealVal{\LogRC{\FTypeB}\left(\LogTermVarA_\FTypeB\right)}_\ValuatA
\end{equation*}
\item$\Gamma\Entails\LogTermLam{\LogTermA}:\FTypeB\to\FTypeA$: let $\RealTermA\in\Lambda$ and $\CPOelA\in\RealVal{\LogRC{\FTypeB}\left(\RealTermA\right)}_\ValuatA$. If we write:
\begin{equation*}
\RealTermB=\CPOinterp{\LTLamSubst\,\LTinterp{\LogTermA}\left(\LTNatS\,\LTNatZ\right)\left(\LTLamListShift\,\vec{\LogTermVarA_\Gamma}\right)}_\ValuatA\in\Lambda
\end{equation*}
then lemma~\ref{RealIsRC} implies:
\begin{equation*}
\CPOinterp{\LTisrc_\FTypeA^{(3)}\,\RealTermB\,\RealTermA\,\LTLamListNil}_\ValuatA\in\RealVal{\LogRC{\FTypeA}\left(\LogTermApp{\LogTermSubst{\RealTermB}{\RealTermA}}{\LogTermListNil}\right)\Rightarrow\LogRC{\FTypeA}\left(\LogTermApp{\LogTermApp{\left(\LogTermLam{\RealTermB}\right)}{\RealTermA}}{\LogTermListNil}\right)}_\ValuatA
\end{equation*}
and on the other hand, the induction hypothesis implies:
\begin{equation*}
\CPOinterp{\LTadeq_{\Gamma,\FTypeB\Entails\LogTermA:\FTypeA}}_{\ValuatA\uplus\SetSuch{\LogTermVarA_\FTypeB\mapsto\RealTermA;\LTVarA_\FTypeB\mapsto\CPOelA}{}}\in\RealVal{\LogRC{\FTypeA}\left(\LogTermSubst{\LogTermA}{\vec{\LogTermVarA_{\Gamma,\FTypeB}}}\right)}_{\ValuatA\uplus\SetSuch{\LogTermVarA_\FTypeB\mapsto\RealTermA}{}}
\end{equation*}
but since we have by a version of lemma~\ref{SubstLemma} in system $\LTbbc$:
\begin{align*}
\CPOinterp{\LTinterp{\left(\LogTermApp{\LogTermSubst{\RealTermB}{\RealTermA}}{\LogTermListNil}\right)}}_\ValuatA
&=\CPOinterp{\LTLamSubst\left(\LTLamSubst\,\LTinterp{\LogTermA}\left(\LTNatS\,\LTNatZ\right)\left(\LTLamListShift\,\vec{\LogTermVarA_\Gamma}\right)\right)\LTNatZ\left(\LTLamListCons\,\LTLamListNil\,\RealTermA\right)}_\ValuatA\\*
&=\CPOinterp{\LTLamSubst\,\LTinterp{\LogTermA}\,\LTNatZ\,\vec{\LogTermVarA_{\Gamma,\FTypeB}}}_{\ValuatA\uplus\SetSuch{\LogTermVarA_\FTypeB\mapsto\RealTermA}{}}\\*
&=\CPOinterp{\LTinterp{\left(\LogTermSubst{\LogTermA}{\vec{\LogTermVarA_{\Gamma,\FTypeB}}}\right)}}_{\ValuatA\uplus\SetSuch{\LogTermVarA_\FTypeB\mapsto\RealTermA}{}}
\end{align*}
we also have:
\begin{equation*}
\RealVal{\LogRC{\FTypeA}\left(\LogTermApp{\LogTermSubst{\RealTermB}{\RealTermA}}{\LogTermListNil}\right)}_\ValuatA=\RealVal{\LogRC{\FTypeA}\left(\LogTermSubst{\LogTermA}{\vec{\LogTermVarA_{\Gamma,\FTypeB}}}\right)}_{\ValuatA\uplus\SetSuch{\LogTermVarA_\FTypeB\mapsto\RealTermA}{}}
\end{equation*}
and therefore:
\begin{multline*}
\CPOinterp{\LTisrc_\FTypeA^{(3)}\left(\LTLamSubst\,\LTinterp{\LogTermA}\left(\LTNatS\,\LTNatZ\right)\left(\LTLamListShift\,\vec{\LogTermVarA_\Gamma}\right)\right)\LogTermVarA_\FTypeB\,\LTLamListNil\,\LTadeq_{\Gamma,\FTypeB\Entails\LogTermA:\FTypeA}}_{\ValuatA\uplus\SetSuch{\LogTermVarA_\FTypeB\mapsto\RealTermA;\LTVarA_\FTypeB\mapsto\CPOelA}{}}\\
\in\RealVal{\LogRC{\FTypeA}\left(\LogTermApp{\left(\LogTermLam{\RealTermB}\right)}{\RealTermA}\right)}_\ValuatA
\end{multline*}
\item$\Gamma\Entails\LogTermApp{\LogTermA}{\LogTermB}:\FTypeA$: the first induction hypothesis implies:
\begin{equation*}
\CPOinterp{\LTadeq_{\Gamma\Entails\LogTermA:\FTypeB\to\FTypeA}\left(\LTLamSubst\,\LTinterp{\LogTermB}\,\LTNatZ\,\vec{\LogTermVarA_\Gamma}\right)}_\ValuatA\in\RealVal{\LogRC{\FTypeB}\left(\LogTermSubst{\LogTermB}{\vec{\LogTermVarA_\Gamma}}\right)\Rightarrow\LogRC{\FTypeA}\left(\LogTermApp{\LogTermSubst{\LogTermA}{\vec{\LogTermVarA_\Gamma}}}{\LogTermListAbbrv{\LogTermSubst{\LogTermB}{\vec{\LogTermVarA_\Gamma}}}}\right)}_\ValuatA
\end{equation*}
and the second induction hypothesis gives:
\begin{equation*}
\CPOinterp{\LTadeq_{\Gamma\Entails\LogTermB:\FTypeB}}_\ValuatA\\*
\in\RealVal{\LogRC{\FTypeB}\left(\LogTermSubst{\LogTermB}{\vec{\LogTermVarA_\Gamma}}\right)}_\ValuatA
\end{equation*}
so since:
\begin{equation*}
\CPOinterp{\LTLamApp\left(\LTLamSubst\,\LTinterp{\LogTermA}\,\LTNatZ\,\vec{\LogTermVarA_\Gamma}\right)\left(\LTLamSubst\,\LTinterp{\LogTermB}\,\LTNatZ\,\vec{\LogTermVarA_\Gamma}\right)}_\ValuatA=\CPOinterp{\LTLamSubst\left(\LTLamApp\,\LTinterp{\LogTermA}\,\LTinterp{\LogTermA}\right)\LTNatZ\,\vec{\LogTermVarA_\Gamma}}_\ValuatA
\end{equation*}
we obtain:
\begin{equation*}
\CPOinterp{\LTadeq_{\Gamma\Entails\LogTermA:\FTypeB\to\FTypeA}\left(\LTLamSubst\,\LTinterp{\LogTermB}\,\LTNatZ\,\vec{\LogTermVarA_\Gamma}\right)\LTadeq_{\Gamma\Entails\LogTermB:\FTypeB}}_\ValuatA\in\RealVal{\LogRC{\FTypeA}\left(\LogTermSubst{\left(\LogTermApp{\LogTermA}{\LogTermListAbbrv{\LogTermB}}\right)}{\vec{\LogTermVarA_\Gamma}}\right)}_\ValuatA
\end{equation*}
\item$\Gamma\Entails\LogTermA:\forall\LogPredVarA\,\FTypeA$: let $\RealPredA\subseteq\Lambda$ and $\CPOelA\in\RealVal{\LogRedCand{\LogPredVarForm{\LogPredVarA}}}_{\ValuatA\uplus\SetSuch{\LogPredVarA\mapsto\RealPredA}{}}$. The induction hypothesis gives immediately:
\begin{equation*}
\CPOinterp{\LTadeq_{\Gamma\Entails\LogTermA:\FTypeA}}_{\ValuatA\uplus\SetSuch{\LogPredVarA\mapsto\RealPredA;\LTVarA_\LogPredVarA\mapsto\CPOelA}{}}\in\RealVal{\LogRC{\FTypeA}\left(\LogTermSubst{\LogTermA}{\vec{\LogTermVarA_\Gamma}}\right)}_{\ValuatA\uplus\SetSuch{\LogPredVarA\mapsto\RealPredA;\LTVarA_\LogPredVarA\mapsto\CPOelA}{}}
\end{equation*}
\item$\Gamma\Entails\LogTermA:\FTypeA\Subst{\LogPredVarA}{\FTypeB}$: the induction hypothesis gives:
\begin{equation*}
\CPOinterp{\LTadeq_{\Gamma\Entails\LogTermA:\forall\LogPredVarA\,\FTypeA}}_\ValuatA\in\RealVal{\forall\LogPredVarA\left(\LogRedCand{\LogPredVarForm{\LogPredVarA}}\Rightarrow\LogRC{\FTypeA}\left(\LogTermSubst{\LogTermA}{\vec{\LogTermVarA_\Gamma}}\right)\right)}_\ValuatA
\end{equation*}
therefore lemma~\ref{RealElim} implies:
\begin{multline*}
\CPOinterp{\LTelim_{\LogPredVarForm{\LogPredVarA}\mapsto\LogRedCand{\LogPredVarForm{\LogPredVarA}}\Rightarrow\LogRC{\FTypeA}\left(\LogTermSubst{\LogTermA}{\vec{\LogTermVarA_\Gamma}}\right),\LogRC{\FTypeB}}\LTadeq_{\Gamma\Entails\LogTermA:\forall\LogPredVarA\,\FTypeA}}_\ValuatA\\
\in\RealVal{\LogRedCand{\LogRC{\FTypeB}}\Rightarrow\left(\LogPredVarForm{\LogPredVarA}\mapsto\LogRC{\FTypeA}\left(\LogTermSubst{\LogTermA}{\vec{\LogTermVarA_\Gamma}}\right)\right)\left(\LogRC{\FTypeB}\right)}_\ValuatA
\end{multline*}
and since by lemma~\ref{RealIsRC} we have:
\begin{equation*}
\CPOinterp{\LTisrc_\FTypeB}_\ValuatA\in\RealVal{\LogRedCand{\LogRC{\FTypeB}}}_\ValuatA
\end{equation*}
and moreover for any $\LogTermA$:
\begin{equation*}
\left(\LogPredVarForm{\LogPredVarA}\mapsto\LogRC{\FTypeA}\left(\LogTermA\right)\right)\left(\LogRC{\FTypeB}\right)\equiv\LogRC{\FTypeA\Subst{\LogPredVarA}{\FTypeB}}\left(\LogTermA\right)
\end{equation*}
we obtain:
\begin{equation*}
\CPOinterp{\LTelim_{\LogPredVarForm{\LogPredVarA}\mapsto\LogRedCand{\LogPredVarForm{\LogPredVarA}}\Rightarrow\LogRC{\FTypeA}\left(\LogTermSubst{\LogTermA}{\vec{\LogTermVarA_\Gamma}}\right),\LogRC{\FTypeB}}\LTadeq_{\Gamma\Entails\LogTermA:\forall\LogPredVarA\,\FTypeA}\,\LTisrc_\FTypeB}_\ValuatA\in\RealVal{\LogRC{\FTypeA\Subst{\LogPredVarA}{\FTypeB}}\left(\LogTermSubst{\LogTermA}{\vec{\LogTermVarA_\Gamma}}\right)}_\ValuatA
\end{equation*}
which concludes the proof.\qedhere
\end{itemize}
\end{proof}
Finally, if a closed term $\LogTermA$ is of closed type $\FTypeA$ in system F we can define:
\begin{equation*}
\LTnorm_{\Entails\LogTermA:\FTypeA}=\LTisrc_{\FTypeA}^{(2)}\,\LTinterp{\LogTermA}\,\LTadeq_{\Entails\LogTermA:\FTypeA}
\end{equation*}
Immediately, we have:
\begin{equation*}
\CPOinterp{\LTnorm_{\Entails\LogTermA:\FTypeA}}\in\RealVal{\LogNorm{\LogTermA}}=\RealVal{\neg\forall\LogNatVarA\,\LogNNorm{\LogTermA}{\LogNatVarA}}
\end{equation*}
As a final step we extract a witness $\RealNatA\in\mathbb{N}$ such that $\LogTermA$ normalizes in at most $\RealNatA$ steps of weak head reduction. The technique is standard in realizability for classical logic and requires that we fix the set of realizers of false boolean formulas to a well-chosen set:
\begin{thm}
If a closed term $\LogTermA$ is of closed type $\FTypeA$ in system F, then $\LTnorm_{\Entails\LogTermA:\FTypeA}\left(\lambda\LTVarA.\LTVarA\right)$ reduces to some $\LTNatS^\RealNatA\LTNatZ$ where $\RealNatA$ is such that $\LogTermA$ reaches a weak head normal form in at most $\RealNatA$ steps.
\end{thm}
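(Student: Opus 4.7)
The plan is to exploit the freedom in choosing the parameter $\RealBot$ of the realizability model. Up to this point $\RealBot \subseteq \mathbb{N}$ has been left unspecified, and neither $\LTnorm_{\Entails\LogTermA:\FTypeA}$ nor the earlier lemmas mention it constructively. So I would now fix, for the particular closed term $\LogTermA$ under consideration,
\begin{equation*}
\RealBot = \SetSuch{\RealNatA \in \mathbb{N}}{\LogTermA \text{ reaches a weak head normal form in at most } \RealNatA \text{ steps}}
\end{equation*}
and verify that with this choice the desired witness can be extracted.

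The first step is to show that $\CPOinterp{\lambda\LTVarA.\LTVarA}$, namely the identity on $\mathbb{N}_\CPObot$, realizes $\forall\LogNatVarA\,\LogNNorm{\LogTermA}{\LogNatVarA}$. Fix $\RealNatA \in \mathbb{N}$; we must check $\RealNatA \in \RealVal{\LogNNorm{\LogTermA}{\RealNatA}}$. If $\LogTermA$ admits $\RealNatA$ steps of weak head reduction without reaching normal form, then by definition this realizability value is $\mathbb{N}_\CPObot$ and the claim is immediate. Otherwise $\LogTermA$ reaches a normal form in at most $\RealNatA$ steps, so the realizability value is exactly $\RealBot$, which by our choice contains $\RealNatA$. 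Thus the case split matches the two-sided definition of $\LogNNorm{\LogTermA}{\RealNatA}$ perfectly.

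Next, since $\CPOinterp{\LTnorm_{\Entails\LogTermA:\FTypeA}} \in \RealVal{\neg\forall\LogNatVarA\,\LogNNorm{\LogTermA}{\LogNatVarA}}$, applying it to the realizer just built yields
\begin{equation*}
\CPOinterp{\LTnorm_{\Entails\LogTermA:\FTypeA}\left(\lambda\LTVarA.\LTVarA\right)} \in \RealVal{\LogBoolF} = \RealBot \subseteq \mathbb{N}.
\end{equation*}
In particular this denotation is some standard $\RealNatA \in \mathbb{N}$, hence different from $\CPObot$. Applying the soundness and computational adequacy result from section~\ref{Semantics} on the closed term $\LTnorm_{\Entails\LogTermA:\FTypeA}\left(\lambda\LTVarA.\LTVarA\right):\LTTypeNat$ then gives $\LTnorm_{\Entails\LogTermA:\FTypeA}\left(\lambda\LTVarA.\LTVarA\right) \LTRed^* \LTNatS^\RealNatA\,\LTNatZ$ for that very $\RealNatA$. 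Finally, by the defining property of $\RealBot$, membership $\RealNatA \in \RealBot$ says precisely that $\LogTermA$ reaches a weak head normal form in at most $\RealNatA$ steps, which is the conclusion of the theorem.

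The only delicate point is the matching between $\RealBot$ and the predicate $\LogNNorm{\LogTermA}{\LogNatVarA}$ in the first step; everything else is a routine combination of adequacy and the previously established realizability of $\LTnorm_{\Entails\LogTermA:\FTypeA}$. This is the standard extraction pattern of classical realizability and the computational counterpart of Friedman's $A$-translation, instantiated here with the normalization bound as the target predicate.
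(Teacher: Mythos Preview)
Your proof is correct and follows essentially the same approach as the paper's own proof: fix $\RealBot$ to be the set of bounds on the reduction length of $\LogTermA$, verify that the identity realizes $\forall\LogNatVarA\,\LogNNorm{\LogTermA}{\LogNatVarA}$ by the two-case analysis, and conclude via computational adequacy. The only cosmetic difference is that the paper splits on whether $\RealNatA\in\RealBot$ while you split on the truth of the underlying predicate, but these are the same case distinction.
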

\begin{proof}
We first fix the set of realizers of false boolean formulas:
\begin{equation*}
\RealBot=\SetSuch{\RealNatA\in\mathbb{N}}{\LogTermA\text{ reaches a normal form in at most $\RealNatA$ steps}}
\end{equation*}
Now we prove that:
\begin{equation*}
\CPOinterp{\lambda\LTVarA.\LTVarA}\in\RealVal{\forall\LogNatVarA\,\LogNNorm{\LogTermA}{\LogNatVarA}}
\end{equation*}
Indeed, let $\RealNatA\in\mathbb{N}$ and let show that $\RealNatA\in\RealVal{\LogNNorm{\LogTermA}{\RealNatA}}$. If $\RealNatA\in\RealBot$ then $\CPOinterp{\LTinterp{\LogTermA}}=\LogTermA$ reaches a normal form in at most $\RealNatA$ steps so $\RealVal{\LogNNorm{\LogTermA}{\RealNatA}}=\RealBot$ and therefore $\RealNatA\in\RealVal{\LogNNorm{\LogTermA}{\RealNatA}}$. If $\RealNatA\notin\RealBot$ then $\CPOinterp{\LTinterp{\LogTermA}}=\LogTermA$ can reduce for $\RealNatA$ steps without reaching a normal form so $\RealVal{\LogNNorm{\LogTermA}{\RealNatA}}=\mathbb{N}_\CPObot$ and therefore $\RealNatA\in\RealVal{\LogNNorm{\LogTermA}{\RealNatA}}$ trivially. Using that result, we obtain:
\begin{equation*}
\CPOinterp{\LTnorm_{\Entails\LogTermA:\FTypeA}\left(\lambda\LTVarA.\LTVarA\right)}\in\RealVal{\LogBoolF}=\RealBot
\end{equation*}
Now, computational adequacy of the model with respect to system $\LTbbc$ implies that $\LTnorm_{\Entails\LogTermA:\FTypeA}\left(\lambda\LTVarA.\LTVarA\right)$ reduces to some $\LTNatS^\RealNatA\LTNatZ$ where $\RealNatA\in\RealBot$, so $\RealNatA$ is such that $\LogTermA$ reaches a weak head normal form in at most $\RealNatA$ steps.
\end{proof}
It is easy to implement one-step weak head reduction in system $\LTbbc$, that is, there exists a term $\LTred:\LTTypeLam\to\LTTypeLam$ such that for every $\lambda$-term $\LogTermA$:
\begin{itemize}
\item if $\LogTermA\FRed\LogTermB$, then $\LTred\,\LTinterp{\LogTermA}\LTRed^*\LTinterp{\LogTermB}$
\item if $\LogTermA$ is in weak head normal form then $\LTred\,\LTinterp{\LogTermA}\LTRed^*\LTinterp{\LogTermA}$
\end{itemize}
Therefore, using our extracted bound we can compute the normal form of any closed $\lambda$-term $\LogTermA$ of closed type $\FTypeA$ in system F:
\begin{equation*}
\LTNatIt\,\LTinterp{\LogTermA}\,\LTred\left(\LTnorm_{\Entails\LogTermA:\FTypeA}\left(\lambda\LTVarA.\LTVarA\right)\right)\LTRed^*\LTValA
\end{equation*}
where $\LTValA$ is the representation of the normal form of $\LogTermA$ in system $\LTbbc$.
\bibliographystyle{plain}
\bibliography{paper}

\begin{thebibliography}{10}

\bibitem{AbelNbeF}
Andreas Abel.
\newblock {Weak beta-eta-Normalization and Normalization by Evaluation for
  System F}.
\newblock In {\em 15th International Conference on Logic for Programming,
  Artificial Intelligence, and Reasoning}, pages 497--511. Springer, 2008.

\bibitem{AHSNbe}
Thorsten Altenkirch, Martin Hofmann, and Thomas Streicher.
\newblock {Reduction-Free Normalisation for a Polymorphic System}.
\newblock In {\em 11th {IEEE} Symposium on Logic in Computer Science}, pages
  98--106. {IEEE} Computer Society, 1996.

\bibitem{AmadioCurien}
Roberto Amadio and Pierre-Louis Curien.
\newblock {\em {Domains and Lambda-Calculi}}, volume~46 of {\em Cambridge
  Tracts in Theoretical Computer Science}.
\newblock Cambridge University Press, 1998.

\bibitem{BerardiBezemCoquand}
Stefano Berardi, Marc Bezem, and Thierry Coquand.
\newblock {On the Computational Content of the Axiom of Choice}.
\newblock {\em Journal of Symbolic Logic}, 63(2):600--622, 1998.

\bibitem{BergerBBCDomains}
Ulrich Berger.
\newblock {The Berardi-Bezem-Coquand-functional in a domain-theoretic setting}.
\newblock \texttt{http://www-compsci.swan.ac.uk/\textasciitilde
  csulrich/ftp/bbc.ps.gz}.

\bibitem{BergerNbe}
Ulrich Berger.
\newblock {Program Extraction from Normalization Proofs}.
\newblock In {\em 1st International Conference on Typed Lambda Calculi and
  Applications}, pages 91--106. Springer, 1993.

\bibitem{BergerOlivaChoice}
Ulrich Berger and Paulo Oliva.
\newblock {Modified bar recursion and classical dependent choice}.
\newblock In {\em Logic Colloquium '01}, volume~20 of {\em Lecture Notes in
  Logic}, pages 89--107. Springer-Verlag, 2005.

\bibitem{BlotF}
Valentin Blot.
\newblock {An interpretation of system F through bar recursion}.
\newblock In {\em 32nd {ACM/IEEE} Symposium on Logic in Computer Science}.
  {IEEE}, 2017.

\bibitem{DillerNahm}
Justus Diller and Werner Nahm.
\newblock {Eine Variante zur Dialectica-Interpretation der Heyting-Arithmetik
  endlicher Typen}.
\newblock {\em Archiv f\"ur mathematische Logik und Grundlagenforschung},
  16:49--66, 1974.

\bibitem{GirardF}
Jean-Yves Girard.
\newblock {Une extension de l'interpr\'etation de G\"odel \`a l'analyse, et son
  application \`a l'\'elimination des coupures dans l'analyse et la th\'eorie
  des types}.
\newblock In {\em 2nd Scandinavian Logic Symposium}, pages 63--69.
  North-Holland, 1971.

\bibitem{GirardPhD}
Jean-Yves Girard.
\newblock {\em {Interpr\'etation fonctionnelle et \'elimination des coupures de
  l'arithm\'etique d'ordre sup\'erieur}}.
\newblock PhD thesis, Universit\'e Paris 7, 1972.

\bibitem{KrivineBook}
Jean-Louis Krivine.
\newblock {\em {Lambda-calculus, types and models}}.
\newblock Ellis Horwood series in computers and their applications. Masson,
  1993.

\bibitem{KrivineBarRec}
Jean-Louis Krivine.
\newblock {Bar Recursion in Classical Realisability: Dependent Choice and
  Continuum Hypothesis}.
\newblock In {\em 25th EACSL Annual Conference on Computer Science Logic},
  pages 25:1--25:11. Schloss Dagstuhl - Leibniz-Zentrum fuer Informatik, 2016.

\bibitem{ReynoldsPolymorphism}
John Reynolds.
\newblock {Towards a theory of type structure}.
\newblock In {\em Programming Symposium, Paris, April 9-11, 1974}, Lecture
  Notes in Computer Science, pages 408--423. Springer, 1974.

\bibitem{ReynoldsParametricity}
John~C. Reynolds.
\newblock {Types, Abstraction and Parametric Polymorphism}.
\newblock In {\em {IFIP} Congress}, pages 513--523, 1983.

\bibitem{Spector}
Clifford Spector.
\newblock {Provably recursive functionals of analysis: a consistency proof of
  analysis by an extension of principles in current intuitionistic
  mathematics}.
\newblock In {\em Recursive Function Theory: Proceedings of Symposia in Pure
  Mathematics}, volume~5, pages 1--27. American Mathematical Society, 1962.

\bibitem{TaitRealizability}
William~W Tait.
\newblock {A realizability interpretation of the theory of species}.
\newblock In {\em Logic Colloquium}, pages 240--251. Springer, 1975.

\bibitem{WadlerIsomorphism}
Philip Wadler.
\newblock {The Girard-Reynolds isomorphism (second edition)}.
\newblock {\em Theoretical Computer Science}, 375(1-3):201--226, 2007.

\end{thebibliography}
\end{document}